\setlist[enumerate,1]{label=\arabic*.,ref=\arabic*}
\setlist[enumerate,2]{label=\alph*.,ref=\arabic{enumi}.\alph*}
\setlist[enumerate,3]{label=\roman*.,ref=\arabic{enumi}.\alph{enumii}.\roman*}
\numberwithin{equation}{section}
\numberwithin{figure}{section}
\newtheorem{thm}{Theorem}[section]
\newtheorem{lem}[thm]{Lemma}
\newtheorem{prop}[thm]{Proposition}
\theoremstyle{definition}
\newtheorem{algorithm}[thm]{Algorithm}
\newtheorem{assumption}[thm]{Assumption}
\newtheorem{remark}[thm]{Remark}
\newtheorem{example}[thm]{Example}
\newcommand{\norm}[1]{\left\Vert#1\right\Vert}
\newcommand{\abs}[1]{\left\vert#1\right\vert}
\newcommand{\Real}{\mathbb R}
\newcommand*{\ud}{\mathrm{d}}
\newcommand{\pspace}{(\Omega,\mathcal{F},\mathbb{P})}
\newcommand{\pp}{\mathbb{P}}
\def\e{{\mathrm{e}}}
\begin{document}

\title{A Kalman particle filter for online parameter estimation\\ with applications to affine models}

\author{Jian He \and Asma Khedher \and Peter Spreij}

\date{} 


\maketitle

\begin{abstract}

\noindent  In this paper we address the problem of estimating the posterior distribution of the static parameters of a continuous time state space model with discrete time observations by an algorithm that combines the Kalman filter and a particle filter. The proposed algorithm is semi-recursive and has a two layer structure, in which the outer layer provides the estimation of the posterior distribution of the unknown parameters and the inner layer provides the estimation of the posterior distribution of the state variables. This algorithm has a similar structure as the so-called recursive nested particle filter, but unlike the latter filter, in which both layers use a particle filter, this proposed algorithm introduces a dynamic kernel to sample the parameter particles in the outer layer to obtain a higher convergence speed. Moreover, this algorithm also implements the Kalman filter in the inner layer to reduce the computational time. This algorithm can also be used to estimate the parameters that suddenly change value. We prove that, for a state space model with a certain structure, the estimated posterior distribution of the unknown parameters and the state variables converge to the actual distribution in $L_p$ with  rate of order $\mathcal{O}(N^{-\frac{1}{2}}+\delta^{\frac{1}{2}})$, where $N$ is the number of particles for the parameters in the outer layer and $\delta$ is the maximum time step between two consecutive observations. We present numerical results of the implementation of this algorithm, in particularly we implement this algorithm for affine interest models, possibly with stochastic volatility, although the algorithm can be applied to a  much broader class of models.
\smallskip\\
{\sl keywords:} affine process, state space model, Kalman filter, particle filter, parameter estimation, posterior distribution\\
{\sl 2000 Mathematics Subject Classification:} 62P05, 65C35, 93E11

\end{abstract}

\section{Introduction}

We pose the problem, describe its background and give a brief sketch of earlier approaches. After that we explain our approach and contribution to the literature and outline the organization of the present paper.

\subsection{Problem description and background}

When using stochastic models in a business environment, the model parameters need to be estimated, which turns out to be a very challenging problem. The main methods for parameter estimation can be classified into two groups: Bayesian and Maximum Likelihood estimation (MLE) methods. Such methods can also be categorized as online or offline depending on whether the data are used sequentially, or used in batches of observations. The MLE approach is to find the estimate which maximizes the marginal likelihood of the observed data. The Bayesian approach, however, considers the parameters as random variables which are updated recursively using prior knowledge of the parameters and the likelihood of the observations. In applications, the MLE based offline method is often linked to the Kalman filter or its modifications such as the extended Kalman filter, see~\cite{einicke1999robust, wan2001dual}, or the unscented Kalman filter, see~\cite{wan2001unscented},
because these algorithms can compute or approximate the likelihood function analytically. However, a common problem of the MLE calibration is that the likelihood function is usually not convex. Hence the numerical optimization of the likelihood often ends up at a local maximum instead of the global maximum. This problem can be even more severe when dealing with models with many parameters, such as multi-factor Hull-White models, popular in interest rate modeling. Moreover, the MLE method normally requires  static model parameters, while in reality the model parameters, such as volatility in financial models, could change over time. These issues restrict the application of offline methods, for instance in the financial modeling area. So, in recent decades, online methods received more and more attention.

Attempts to solve the problem of estimating the static parameters online was to include simulations (particles) of parameter values. One then has a \emph{particle filter}, see for example~\cite{doucet2000sequential,gordon1993novel,kitagawa1996,LiuChen} and~\cite{kantas2015} for a survey.
However,  through successive time steps this approach can quickly lead to what is called particle degeneracy of the parameter space. One solution to this degeneracy problem is to use a kernel density to estimate the posterior distribution of the parameters from which new parameter particles can be drawn at each time step~\cite{LW}. However, such a method can only work on some models with parameters of  low dimension. Also a convergence analysis of such a method is missing.

In recent years, some new methods have been proposed to deal with the online parameter estimation problem, including the iterated batch importance sampling (IBIS), see~\cite{chopin2002}, the sequential Monte Carlo square (SMC2) simulation, see~\cite{CJP}, and the recursive nested particle filter (RNP filter, also RNPF in short), see~\cite{CrisanMiguez}. The SMC2 and the RNPF use two layers of Monte Carlo methods to overcome certain difficulties with the IBIS method, see~\cite{papavasiliou}. An important difference between SMC2 and the RNPF is that the SMC2 is a non-recursive method, whereas the RNPF is recursive. Hence in general, RNPF is more efficient than SMC2.

In~\cite{CrisanMiguez} the estimated posterior measure of the parameters by using an RNPF algorithm is shown to converge to the actual measure in $L_p$-norm with rate $N^{-1/2}+M^{-1/2}$, where $N$ is the number of particles for the parameter estimation in the outer layer and $N\times M$ is the number of particles for the state variables in the inner layer. The RNPF has some drawbacks for a practical application. One is that the computation of  the two Monte Carlo layers is very time consuming, another one is that the RNPF requires that the parameter mutation size is small enough. As a consequence the RNPF converges very slowly to the actual value of the parameters and hence requires a very long time series of data, which is very often not available in many applications.

\subsection{Contribution}

In this paper,  we consider joint parameter and state estimation for a state space model where the state evolves continuously in time, whereas the observations are made at discrete time instants. We use a Bayesian online approach to parameter estimation. We propose an algorithm which combines the Kalman filter and a particle filter for online estimation of the posterior distribution of the unknown parameters. This algorithm has a similar structure as the RNPF, it is a semi-recursive algorithm with also two layer structure:  the inner layer provides the approximation on the posterior distribution of the state variables conditioned on the parameter particles generated in the outer layer, while the outer layer provides an approximation of the posterior distribution of the parameters by using the outcome of the inner layer.

Our proposed methodology has two main differences when compared to the RNPF algorithm. One difference is that in the inner layer, the posterior distribution of the state variables is estimated by the Kalman filter instead of a particle filter. The implementation of the Kalman filter reduces the computation complexity and hence results in a much faster and robust algorithm. The second difference is in the outer layer. In the RNPF the parameter samples are generated from a certain kernel function. In order to obtain a recursive algorithm, some requirements on the kernel function are introduced. This results in a kernel that significantly reduces the convergence speed of the RNPF. We overcome this problem by using dynamic jittering kernels. Especially in this paper, we implement two different kernel functions. One is applied at the beginning stage to obtain a higher convergence speed. The consequence, however, is that the algorithm is not recursive at this beginning stage since this kernel function does not satisfy the requirements of a recursive algorithm. The other kernel is applied when the variance of the parameter particles decreases to a certain level which is  such that this kernel function satisfies the conditions for a recursive method. From that time on, the algorithm is truly recursive. From the numerical experiments we performed, we observe that the variance of the particles decreases very fast at the beginning stage, usually after hundreds steps. Hence by using these two different kernel functions, the algorithm converges much faster than the RNPF.

This paper also provides theoretical results on the asymptotical behavior of the proposed algorithm. When dealing with non-Gaussian or non-linear models, the Kalman filter in the inner layer could produce a biased estimate of the posterior distribution of the state variables. This makes it difficult to generally study the convergence of the posterior distribution of the parameters. Although it is shown in \cite{Sara} that, under certain assumptions, the bias introduced in the inner layer makes the posterior distribution of the parameters converge to a biased distribution, this bias is intractable in general. In this paper, for models with a certain structure, we show that the estimated distributions of the parameters and the states converges the actual distributions in $L_p$ with  rate of order $\mathcal{O}(N^{-1/2}+\delta^{1/2})$ under certain regularity assumptions, where $N$ is the number of particles for the parameter space and $\delta$ is the maximum time step between consecutive observations. Note that we don't have to deal with particles in the inner layer, which improves on the order $M^{-1/2}$ term for convergence rate of the RNPF.
Our proofs are inspired by those in \cite{CrisanMiguez}, but at crucial steps we obtain novel results. These are due to the use of the Kalman filter in one of the layers and to  the size of  the  time discretization that governs the observations of the continuous time system, the latter not playing a role in the  setting of the cited reference.

To illustrate the performance of the algorithm, we present numerical results of the parameter estimation on several affine interest rate models, some allowing for stochastic volatility, including two-factor Hull-White model and the Cox-Ingersoll-Ross (CIR) model. For the CIR model we have also implemented the RNPF and we observed that our algorithm outperforms  the RNPF. Although the algorithm is designed for static parameter estimation,  it can also be used to estimate parameters that perform sudden changes in value. We also present an implementation of the algorithm in such a situation, and we observe that the algorithm is able to quickly track such a sudden change.

\subsection{Organization of the paper}
In Section~\ref{set-up} we present the state space model of interest. This section also provides brief reviews on  Bayesian filters, including the  Kalman filter and the particle filter, and online parameter estimation using particle filters.
Section~\ref{section:affinemodel} contains an encompassing framework for various affine models that are used in interest rate modeling and to which we apply our
proposed Kalman particle algorithm, which is introduced in Section~\ref{Kalman-particle-filter}. In Section~\ref{convergence_analysis} we provide the convergence analysis and in Section~\ref{Numerical results} the numerical results are presented. Finally Section~\ref{conclusion} is devoted to the conclusions. In the Appendix we collect some background results on affine processes.

\subsection{Notation}

Let $d\geq 1$, $S \subseteq \mathbb{R}^d$, and $\mathcal{B}(S)$ be the sigma algebra of Borel subsets of $S$.
We denote by $\mathbf{1}_A$ the indicator function on $A \in \mathcal{B}(S)$ and by $\delta_x$ the Dirac measure for a given $x \in S$, i.e.,
							\[
							\delta_{x}(A) =   \mathbf{1}_A(x)= \begin{cases}
																				1,  &  \mbox{if} \ x \in A\,,\\
																				0,  &  \mbox{otherwise}\,.
																		\end{cases}
						  \]
Suppose given a function $f: S\rightarrow \mathbb{R}$ and a probability measure $\mu$ on $(S, \mathcal{B}(S))$. We denote the integral of $f$ w.r.t.~$\mu$ by $(f,\mu) := \int_S f(x)\,\mu(\ud x)$ and the supremum norm of $f$ by $\| f\|_\infty = \sup_{x \in S} |f(x)|$.\\
We use the notation $x_{0:k} := (x_{0},\ldots,x_k)$ for a discrete-time sequence up to time $k$ of a process $(x_k)_{k \in \mathbb{N}}$.  By
$\cdot^\top$, we denote the transpose of a vector or a matrix. The Euclidian norm of an element $x \in \mathbb{R}^d$, is denoted by $\| x\|$ and
the $L^p$-norm, for $p\geq 1$ of a random variable $X$, defined on some probability space $\pspace$, is denoted by $\|X\|_p = (\mathbb{E}|X|^p)^{1/p}$.	Densities of random variables or vectors $x$ (always assumed to exist w.r.t.\ the Lebesgue measure) are often denoted $p$, or $p(x)$ and conditional densities of $X$ given $Y=y$ are often denoted $p(x\mid y)$, possibly endowed with sub- or superscripts.

\section{Set up and background on parameters estimation using filters}\label{set-up}

In this section we outline the set up, we pose the problem formulation, give a brief survey of various filters (Bayesian, Kalman, particle filter) and address the parameter estimation problem using particle filters. Time is assumed to be discrete.

\subsection{Discrete-time state space model}
We consider the following general {\it state space} model
\begin{equation}\label{state_space_model}
\begin{aligned}
x_k&=f_k(x_{k-1},u_k)\,\quad k \in \mathbb{N}^+,\\
y_k  &=h_k(x_k,v_k)\,,  \quad k \in \mathbb{N}^+\,,
\end{aligned}
\end{equation}
where
$f_k:\Real^d \times \Real^d \rightarrow \Real^d$, $h_k: \Real^d \times \Real^d \rightarrow \Real^m$ are given functions and $\{u_k\}_{k\in \mathbb{N}^+}$
and $\{v_k\}_{k\in \mathbb{N}^+}$ are $d$-dimensional white noise processes, possibly independent, and both independent of the initial condition $x_0$, all defined on some $\pspace$. Parameters in the functions $f_k$ and $h_k$, together with the covariance of $u_k$ and $v_k$ can be seen as the parameters of the state space model, and to which we refer to as $\theta$.\\
\\
\noindent It follows that the model \eqref{state_space_model} satisfies the properties of a stochastic system, i.e.\ at every (present) time $k\geq 1$ the future states  and future observations $(x_j,y_j)$, $j\geq k$, are conditionally independent from the past states and observations $(x_j,y_{j-1})$, $j\leq k$,  given the present state $x_k$, see~\cite{vanschuppen1989}.
It then follows that  $\{x_k\}_{k \in \mathbb{N}}$ is a Markov process,
and for every $k\geq 1$ one has that $y_k$ and $y_{1:k-1}$ are conditionally independent given $x_k$, in terms of densities,
 \begin{align}\label{eq:cond-ind-y}
 p(y_k\mid y_{1:k-1},x_k)=p(y_k\mid x_k)\,,\quad  \mbox{for}\quad  k \in \mathbb{N}^+\,.
 \end{align}
Moreover, one also has, for every $k\geq 1$, that $x_k$ and $y_{1:k-1}$ are conditionally independent given $x_k$, in terms of densities,
\begin{equation}\label{eq:xyk}
p(x_k\mid x_{k-1},y_{1:k-1})=p(x_k\mid x_{k-1}).
\end{equation}
The latter equation has the consequence
\begin{equation}\label{eq:xyk-1}
p(x_k\mid y_{1:k-1})=\int p(x_k\mid x_{k-1})p(x_{k-1}\mid y_{1:k-1})\ud x_{k-1}.
\end{equation}
\noindent We are interested in estimating the (latent) state process $\{x_k\}_{k \in \mathbb{N}^+}$, but only have access to the process $\{y_k\}_{k \in \mathbb{N}^+}$ which represents the observations. Because of the existence of the white noise in the data, estimating the value of the latent states $\{x_k\}_{k \in \mathbb{N}^+}$ by the observations $\{y_k\}_{k \in \mathbb{N}^+}$ is not trivial. There are different methodologies in the literature to estimate the latent process (see e.g.~\cite{Press, ChuiChen, arulampalam2002tutorial}). We introduce some of these methodologies in our paper since we will need them in our analysis later. We first introduce the Bayesian filter.

\subsection{Bayesian filter of discrete-time Markovian state space model}

\noindent The {\it Bayesian filter}, see e.g.~\cite{Press,Robert} for an overview, is used to estimate the latent states
$\{x_k\}_{k \in \mathbb{N}^+}$ in \eqref{state_space_model} given the parameter $\theta$.
We define the initial probability measure $\pi_{0}$ of $x_{0}$, and the transition measure $\pi_k^{\theta}$ of $x_k$ under a given parameter $\theta$ at time $k$ by
\begin{equation}
\begin{aligned}
\pi_{0}(A)                      &= \mathbb{P}(x_{0}\in A),\\
\pi_k^{\theta}(A\mid x_{k-1}) &= \mathbb{P}(x_k\in A\mid x_{k-1},\theta), \qquad k\in \mathbb{N}^+\,,
\end{aligned}
\end{equation}
where $A\in \mathcal{B}(\mathbb{R}^{d})$ is a Borel set.

The methodology in Bayesian filtering consists of two parts: prediction and update. At every time point $k$, the prediction part computes (estimates) the {\it prior measure} of $x_k$ (a time $k$ given the past observations up to time $k-1$) and the update part computes (estimates) the {\it posterior measure} of $x_k$ given the past up to time $k$, respectively given by
\begin{equation}\label{prior-posterior}
\begin{aligned}\gamma_k^{\theta}(\ud x_k) &= \pp(\ud x_k\mid y_{1:k-1},\theta)\,, \\
\Gamma_k^{\theta}(\ud x_k) &= \pp(\ud x_k\mid y_{1:k}, \theta)\,, \qquad k\in \mathbb{N}^+\,.
\end{aligned}
\end{equation}
\noindent   Using Bayes' rule, we deduce that the density function of the prior distribution is given by
\begin{equation*}
\begin{aligned}
p(x_k\mid y_{1:k-1}, \theta) &= \int p(x_k\mid x_{k-1},y_{1:k-1}, \theta)p(x_{k-1}\mid y_{1:k-1}, \theta)\, \ud x_{k-1}\\
                           &= \int p(x_k\mid x_{k-1}, \theta)p(x_{k-1}\mid y_{1:k-1}, \theta)\, \ud x_{k-1}\,,
\end{aligned}
\end{equation*}
where we used \eqref{eq:xyk} to get the last equality. This
implies the relation
\begin{equation}\label{eq:gpG}
\gamma_k^\theta(\ud x_k) =\int \pi_k^\theta(\ud x_k\mid x_{k-1})\Gamma_{k-1}^\theta(\ud x_{k-1}).
\end{equation}
Let $g: \mathbb{R}^d\rightarrow \mathbb{R}$ be an integrable function w.r.t.~the measure $\gamma_k^\theta$. Then we get by Fubini's theorem
\begin{align*}
(g,\gamma_k^{\theta}) &= \int\int g(x_k)\pi_k^{\theta}(\ud x_k\mid x_{k-1})\Gamma_{k-1}^{\theta}(\ud x_{k-1})\\
& = \int (g,\pi^\theta(\cdot\,\mid x_{k-1}))\Gamma_{k-1}^{\theta}(\ud x_{k-1}),
\end{align*}
which we abbreviate by
\begin{equation}\label{int_gamma}
(g,\gamma_k^{\theta})= ((g,\pi_k^{\theta}),\Gamma_{k-1}^{\theta}).
\end{equation}

\noindent The purpose of the Bayesian algorithm is to sequentially compute the posterior measure $\Gamma_k^{\theta}$. Let
\begin{equation*}
l^{\theta}_{y_k}(x)=p(y_k\mid x,\theta)
\end{equation*}
be the probability (with some abuse of statistical terminology we often also call it likelihood) of the realized observation $y_k$ conditional on the state value $x_k=x$ and the model parameter $\theta$.
Then using Bayes' rule, \eqref{eq:cond-ind-y} and \eqref{eq:gpG}, we obtain for a function $g$ that is integrable w.r.t.\ $\Gamma_k^{\theta}$
\begin{align*}
(g,\Gamma_k^{\theta})&= \int g(x_k)p(x_k\mid y_{1:k}, \theta)\, \ud x_k\\
					 &= \int g(x_k) \frac{p(x_k,y_k,y_{1:k-1}\mid  \theta)}{p(y_{1:k}\mid  \theta)}\,\ud x_k\\
					 &= \frac{\int g(x_k)p(y_k\mid x_k,y_{1:k-1}, \theta)p(x_k\mid y_{1:k-1}, \theta)\, \ud x_k}{p(y_k\mid y_{1:k-1}, \theta)}\\
					 &= \frac{\int g(x_k)p(y_k\mid x_k, \theta)p(x_k\mid y_{1:k-1}, \theta)\, \ud x_k}{\int p(y_k\mid x_k, \theta)p(x_k\mid y_{1:k-1}, \theta)\, \ud x_k}\\
					 &= \frac{\int g(x_k)l^\theta_{y_k}(x_k)p(x_k\mid y_{1:k-1}, \theta)\, \ud x_k}{\int l^\theta_{y_k}(x_k)p(x_k\mid y_{1:k-1}, \theta)\, \ud x_k}\\
					 &= \frac{\int g(x_k)l^\theta_{y_k}(x_k)\int \pi_k^\theta(\ud x_k\mid x_{k-1})\Gamma_{k-1}^\theta(\ud x_{k-1})}{\int l^\theta_{y_k}(x_k)\int \pi_k^\theta(\ud x_k\mid x_{k-1})\Gamma_{k-1}^\theta(\ud x_{k-1})},
\end{align*}
which we abbreviate, similar to \eqref{int_gamma}, by
\begin{equation}\label{posterior measure update}
(g,\Gamma_k^{\theta})= \frac{((gl^{\theta}_{y_k},\pi_{{k}}^{\theta}),\Gamma_{k-1}^{\theta})}{((l^{\theta}_{y_k},\pi_{{k}}^{\theta}),\Gamma_{k-1}^{\theta})}.
\end{equation}
 If we assume the likelihood function $l^{\theta}_{y_k}$ and the transition measure $\pi_{{k}}^{\theta}$ are known, then given the posterior measure $\Gamma_{k-1}^{\theta}$, we can use Equation (\ref{posterior measure update}) to compute the posterior measure $\Gamma_{k}^{\theta}$. In this way the posterior measure $\{\Gamma_{k}^{\theta}\}_{k\in \mathbb{N}^+}$ can be computed recursively. Moreover, using \eqref{eq:cond-ind-y} again, the conditional likelihood $p(y_k\mid y_{1:k-1},\theta)$ and the likelihood $p(y_{1:k}\mid \theta)$ can be respectively computed as
\begin{align}\label{MarginalLikelihood}
p(y_k\mid y_{1:k-1},\theta) &= \int p(y_k\mid x_{k},y_{1:k-1},\theta)p(x_k\mid y_{1:k-1},\theta)\, \ud x_k\nonumber\\
														&= \int p(y_k\mid x_k,\theta)p(x_k\mid y_{1:k-1},\theta)\, \ud x_k\nonumber\\
														&= (l^{\theta}_{y_k},\gamma_{k}^{\theta}),
\end{align}
and
\begin{align*}
p(y_{1:k}\mid\theta)        &= P(y_{1}\mid\theta)\prod_{i=2}^k  p(y_{t_i}\mid y_{1: i-1},\theta) \\
														&= P(y_{1}\mid\theta)\prod_{i=2}^k (l^{\theta}_{y_{i}},\gamma_{i}^{\theta})\,.
\end{align*}

\noindent When \eqref{state_space_model} is a  linear Gaussian model, then the Bayesian filter is equivalent to the Kalman filter, which we briefly review in the next subsection.

\subsection{Kalman filter}\label{Kalman-filter}

We assume that the state and observations in \eqref{state_space_model} evolve according to a linear Gaussian model. That is the functions $f_k$ and $h_k$ have to take linear forms as follows
\begin{equation}
\begin{aligned}
x_k&= F_k x_{k-1} + u_k\,,\\
y_k  &= H_k x_k + v_k\,, \qquad k\in \mathbb{N}^+\,,
\end{aligned}
\label{KF}
\end{equation}
where $F_k$ is a $d\times d$ matrix, $H_k$ is a $m\times d$ matrix and the noise terms $u_k$ ($d$-dimensional), $v_k$ ($m$-dimensional) are assumed to be Gaussian with mean $0$ and
variance $Q_k$, $R_k$, respectively. Moreover, the initial state $x_{0}$ is assumed to be Gaussian.
Due to the Gaussian assumptions and the linear structure of the model in \eqref{KF}, one can derive analytic expressions for the prior and posterior measures defined in \eqref{prior-posterior} and the algorithm in the {\it Kalman filter}, see e.g.~\cite{ChuiChen,GrewalAndrews}, yields the exact solution to the estimation problem.

Denote by $N(\ud x;\mu,\Sigma)$ or $N(\mu,\Sigma)$ the Gaussian distribution with mean $\mu$ and Covariance $\Sigma$. We also use the generic notation $N(x;\mu,\Sigma)$ to denote the density at $x$ of this normal distribution.
Recall from \eqref{prior-posterior}, the prior and posterior measures and
denote by $B_{k-1}$ and $P_{k-1}$ respectively, the mean and the covariance of the posterior measure at time ${k-1}$.
Then the prior measure is given by
\begin{equation*}
\begin{aligned}
\gamma_k^{\theta} (\ud x_k) = N(\ud x_k; F_k\bar{x}_{k-1}, F_kP_{k-1}F_k^\top+Q_k)\,,
\end{aligned}
\end{equation*}
which implies that the prior measure is a conditionally Gaussian measure with mean and covariance respectively given by
$$A_k^- = F_kB_{k-1}\,, \qquad P_k^- = F_kP_{k-1}F_k^\top+Q_k\,.$$
Moreover, the posterior measure is given by
\begin{align}\label{eq:posterior-measure}
&\Gamma_k^{\theta} (\ud x_k)
= N(\ud x_k; B_k, C_k)\,,
\end{align}
where
$$B_k = A_k^- + P_{{k}}^- H_k^\top (H_kP_{{k}}^-H_k^\top + R_k)^{-1}(y_k-H_kA_k^-)\,,$$
$$C_k = P_{{k}}^- - P_{{k}}^- H_k^\top (H_kP_{{k}}^-H_k^\top + R_k)^{-1}H_kP_{{k}}^-\,.$$
Finally, the conditional likelihood is given by
\begin{equation}\label{MarginalLikelihood_KF}
p(y_k\mid y_{1:k-1},\theta) = N(y_k; H_kA_k^-,H_kP_{{k}}^-H_k^\top + R_k)\,.
\end{equation}
Let $S_k = H_kP_{{k}}^-H_k^\top + R_k$, $k\in \mathbb{N}^+$. Then we obtain the recursion for the log-likelihood of the observation $\log(p(y_{1:k})\mid \theta)$ as follows,
\begin{align*}
\log (p(y_{1:k})\mid \theta)&=\log p((y_{1:k-1})\mid \theta)\\
&\qquad - \frac{1}{2}\left(m\log 2\pi - \log(\det(S_k)) - (y_k-H_k\bar{x}_{{k}}^- )^\top S_k^{-1} (y_k-H_k\bar{x}_{{k}}^- ) \right)\,,
\end{align*}
where $m$ is the dimensionality of the data $y_k$, $k\in \mathbb{N}^+$. Hence, by maximizing the likelihood of the observations, one can determine the optimal parameters of the linear Gaussian system \eqref{KF}.\\
For most non linear non Gaussian models, it is not possible to compute the prior and posterior measures analytically and numerical methods are called for. In this case, the particle filter, which we introduce in the next subsection, is widely used.

\subsection{Particle filter}\label{particle-filter}

In the {\it particle filter}, see e.g.~\cite{arulampalam2002tutorial,cappe2007overview,doucet2009tutorial}, the prior and posterior distributions are estimated by a Monte Carlo method. With a Monte Carlo method, a certain measure $\mu$ is generally estimated by \begin{equation*}
\mu^N(\ud x)= \sum_{i=1}^{N}a^{(i)}\delta_{x^{(i)}}(\ud x)\,,
\end{equation*}
where $\{x^{(i)}, i=1,\cdots,N\}$ are i.i.d.~random samples from a so-called {\it importance density} and $\{a^{(i)}, i=1,\cdots,N\}$ are {\it the importance weights}. The key part of the particle filter is to choose the importance density and compute the importance weights, see e.g.~\cite{D}. For the general state space model (\ref{state_space_model}), suppose the posterior measure $\Gamma_{k-1}^{\theta}$ at time ${k-1}$ is estimated by
\begin{equation*}
{\Gamma}_{k-1}^{\theta}\approx \sum_{i=1}^{N}a_{k-1}^{(i)}\delta_{x_{k-1}^{(i)}}.
\end{equation*}
If at time $k$, the samples $\tilde{x}_k^{(i)}$ are generated from the transition measure $\pi_k^{\theta}(\ud x\mid x_{k-1}^{(i)})$ for $i=1,\cdots,N$, then using Equation (\ref{int_gamma}), the integral $(f,\gamma_k^{\theta})$ can be estimated by
\begin{equation*}
(f,\gamma_k^{\theta})\approx \sum_{i=1}^N a_{k-1}^{(i)}f(\tilde{x}_k^{(i)}).
\end{equation*}
Moreover, using Equation \eqref{posterior measure update},
the prior and posterior measures are respectively estimated by
\begin{equation}\label{eq:pp-measures}
\begin{aligned}
\gamma_k^{\theta} &\approx \sum_{i=1}^N a_{k-1}^{(i)}\delta_{\tilde{x}_k^{(i)}},\\
\Gamma_k^{\theta} &\approx \sum_{i=1}^N a_{k}^{(i)}\delta_{\tilde{x}_k^{(i)}}
\end{aligned}
\end{equation}
and from \eqref{MarginalLikelihood}, we deduce the following approximation for the conditional likelihood
\begin{equation}\label{eq:cond-likelihood}
p(y_k\mid y_{1:k-1}, \theta)\approx \sum_{i=1}^N a_{k-1}^{(i)}l^{\theta}_{y_k}(\tilde{x}_k^{(i)})\,.
\end{equation}
Consequently, the integral $(f,\Gamma_k^{\theta})$ can be estimated by
\begin{equation*}
\begin{aligned}
(f,\Gamma_k^{\theta})&\approx \frac{\sum_{i=1}^N a_{k-1}^{(i)}l^{\theta}_{y_k}(\tilde{x}_k^{(i)})f(\tilde{x}_k^{(i)})}{\sum_{i=1}^N a_{k-1}^{(i)}l^{\theta}_{y_k}(\tilde{x}_k^{(i)})}\\
                         &= \sum_{i=1}^N a_{k}^{(i)}f(\tilde{x}_k^{(i)})\,,
\end{aligned}
\end{equation*}
where the weights $a_{k}^{(i)}$ are defined by
\begin{equation}\label{eq:ak}
a_{k}^{(i)} = \frac{a_{k-1}^{(i)}l^{\theta}_{y_k}(\tilde{x}_k^{(i)})}{\sum_{i=1}^N a_{k-1}^{(i)}l^{\theta}_{y_k}(\tilde{x}_k^{(i)})}.
\end{equation}
\noindent Equations \eqref{eq:pp-measures} and \eqref{eq:cond-likelihood} show how to sequentially estimate the posterior measure $\Gamma_k^{\theta}$ using the Monte Carlo method. This type of particle filter is often referred to as {\it sequential particle filter}. In \cite{D} it is shown that the variance of the importance weights decreases stochastically over time. This will lead the importance weights to be concentrated on a small amount of sampled particles. This problem is called \emph{degeneracy}.
To address the rapid degeneracy problem, the \emph{sampling-importance resampling (SIR)} method, see e.g.~\cite{D, PS}, is introduced to eliminate the samples with low importance weight and multiply the samples with high importance weight. In SIR, once the approximation
of the posterior measure $\Gamma_k^{\theta}\approx \sum_{i=1}^N a_{k}^{(i)}\delta_{\tilde{x}_k^{(i)}}$ is obtained, new, re-sampled,  particles ${x}_k^{(j)}$ are i.i.d.\ sampled from this approximated measure, i.e.\ every $x_k^{(j)}$ is independently chosen from the $\tilde{x}_k^{(i)}$ with probabilities $a_k^{(i)}$, for $i=1,\cdots,N$. This step can be accomplished by sampling integers $j$ from $\{1,\ldots,n\}$ with probabilities $a_{k}^{(i)},i=1,\cdots,N$. Then the new estimation on the posterior measure $\Gamma_k^{\theta}$ is given by
\begin{equation*}
\Gamma_k^{\theta} \approx \frac{1}{N}\sum_{i=1}^N\delta_{{x}_k^{(i)}}
\end{equation*}
and the new estimate of the conditional likelihood is
\begin{equation*}
p(y_k\mid y_{1:k-1}, \theta)\approx \frac{1}{N}\sum_{i=1}^N l^{\theta}_{y_k}({x}_k^{(i)})\,.
\end{equation*}

\subsection{Static model parameters estimation using particle filter}\label{sec:static-model}

\noindent When the parameters are known, the particle filter is a quite effective algorithm for latent variable estimation. However, if the parameters are not known beforehand, it is a very challenging task to estimate the parameters and the latent states using the particle filter. Here we take a Bayesian approach to estimate the parameters.  The estimation of the parameters in online
estimation requires the computation of the posterior distribution of $\theta$, i.e., $p(\theta \mid y_{1:k}), k \in \mathbb{N}^+$. Using Bayes' rule, one can represent the posterior density as
\begin{equation*}
p(\theta\mid y_{1:k}) = \frac{p(y_k\mid y_{1:k-1},\theta)p(\theta\mid y_{1:k-1})}{\int p(y_k\mid y_{1:k-1},\theta)p(\theta\mid y_{1:k-1}) \,\ud\theta}\,.
\end{equation*}

\noindent Hence, the posterior distribution of $\theta$ given $y_{1:k}$ can be evaluated as

\begin{equation*}
\mathbb{P}(\theta\in A \mid y_{1:k}) = \frac{\int \mathbf{1}_A(\theta) p(y_k\mid y_{1:k-1},\theta)p(\theta\mid y_{1:k-1})\,\ud\theta}{\int p(y_k\mid y_{1:k-1},\theta)p(\theta\mid y_{1:k-1})\, \ud\theta}\,.
\end{equation*}

\noindent To estimate the density and the distribution of $\theta$ given $y_{1:k}$, a straightforward way is to sample parameter particles from the former posterior distribution $p(\theta\mid y_{1:k-1})$. Denote the samples by $\{\theta^{(i)}, i=1,\cdots,N\}$, then the measure $p(\ud\theta\mid y_{1:k})$ at time $k$ can be approximated by
\begin{equation}\label{eq:posterior-theta}
\sum_{i=1}^N \frac{p(y_k\mid y_{1:k-1},\theta^{(i)})}{\sum_{i=1}^N p(y_k\mid y_{1:k-1},\theta^{(i)})} \delta_{\theta^{(i)}}(\ud\theta)=\sum_{i=1}^N w_k^{\theta^{(i)}}\delta_{\theta^{(i)}}(\ud\theta)\,,
\end{equation}
where the weights $w_k^{\theta^{(i)}}$, $i=1, \cdots N$, are defined by
\begin{equation}\label{theta-weights}
w_k^{\theta^{(i)}}=\frac{p(y_k\mid y_{1:k-1},\theta^{(i)})}{\sum_{i=1}^N p(y_k\mid y_{1:k-1},\theta^{(i)})}\,.
\end{equation}

\noindent There are two issues to implement \eqref{eq:posterior-theta}. One is that sampling from the former posterior distribution $p(\theta\mid y_{1:k-1})$ usually cannot be carried out exactly. Another is that often the likelihood $p(y_k\mid y_{1:k-1},\theta^{(i)})$ cannot be computed theoretically. These two latter issues can be tackled by using the {\it recursive nested particle filter (RNPF)}, recently introduced in \cite{CrisanMiguez}, which  is presented below.

\subsubsection{Recursive nested particle filter}\label{RNP-filter}

In the RNPF, a two layer Monte Carlo method is used. In the first layer, also referred to as outer layer, new parameter samples are generated by using a kernel function. This step is usually called {\it jittering} and the kernel is referred to as {\it the jittering kernel}. In the second layer, also called inner layer, a particle filter is applied to approximate the conditional likelihood $p(y_k\mid y_{1:k-1},\theta^{(i)})$. In the following paragraph of this section we present the RNPF in more detail and introduce its ensuing Algorithm~\ref{algoRNPF}.

First, assume that $\theta$ has a compact support $D_{\theta} \subset \mathbb{R}^{d_\theta}$, where $d_\theta$ is the dimension of $\theta$. Moreover assume at time ${k-1}$, one can generate a random grid of samples in the parameter space $D_{\theta}$, say $\{\theta_{k-1}^{(i)}, i=1,\cdots,N\}$, and for each $\theta_{k-1}^{(i)}$, we have the set of particles in the state space $\{x_{k-1}^{(i,j)}, 1\leq j\leq M\}$.
\begin{itemize}
\item {\bf Jittering.}
Given the parameters samples $\{\theta_{k-1}^{(i)}, i=1,\cdots,N\}$ at time $k-1$, new particles $\{\tilde{\theta}^{(i)}_k, i=1,\cdots,N\}$ at time $k$ are generated by some Markov kernels denoted by $\kappa(\theta\mid  \theta_{{k-1}}^{(i)})$ (step~\ref{1jitt1} in Algorithm~\ref{algoRNPF} below).
This step is the outer Monte Carlo layer.

\item{\bf Update.} From Equations \eqref{int_gamma} and \eqref{MarginalLikelihood}, we know that for a given $\tilde{\theta}$, the marginal likelihood is obtained by calculating the integral
\begin{align*}
p(y_k\mid y_{1:k-1},\tilde{\theta}) = ((l_{\tilde{\theta}}^{y_k}, \pi_k^{\tilde{\theta}}),\Gamma_{k-1}^{\tilde{\theta}})\,.
\end{align*}
In order to compute this latter integral, the posterior measure at time ${k-1}$, $\Gamma_{k-1}^{\tilde{\theta}}$, needs to be known. In the standard Bayesian filter, the parameters are fixed over time and this posterior measure is computed at time ${k-1}$ by using Equation (\ref{posterior measure update}). However in this case, this measure is not directly available since the parameter has evolved from $\theta$ at time ${k-1}$ to $\tilde{\theta}$ at time $k$. In order to compute $\Gamma_{k-1}^{\tilde{\theta}}$, one needs to re-run a filter from time $1$ to $k$, which makes the algorithm not recursive and very time consuming.
 The authors in \cite{CrisanMiguez} solved this latter problem by assuming that $\Gamma_{k-1}^{\theta}$ is continuous w.r.t.\
$\theta \in D_\theta$, which means that when $\theta \approx \tilde{\theta}$, then $\Gamma_{k-1}^{\theta} \approx \Gamma_{k-1}^{\tilde{\theta}}$. Therefore by considering a rather small variance in the jittering kernel, one can use the particle
approximation of the filter computed for $\theta$ at time $k-1$ as a particle approximation of the filter for the new sampled $\tilde{\theta}$ at time $k$.

In the RNPF, the jittering kernel is chosen such that the mutation step from $\theta_{k-1}$ to $\tilde{\theta}_k$ is sufficiently small, see Section 4.2 in \cite{CrisanMiguez}.
Then for each $\tilde{\theta}_k^{(i)}$, $\{i=1,\cdots, N\}$, a sequential nested particle filter (see Section~\ref{particle-filter} for the description of the particle filter methodology) is used for the state space to obtain $\{\tilde{x}_k^{(i,j)}, \,1\leq j\leq M\}$; see steps~\ref{1jitt2}, \ref{1jitt4},~\ref{1jitt5} in Algorithm~\ref{algoRNPF} below. This is the inner Monte Carlo layer.

\item{\bf Resampling.} The outer layer Monte Carlo method in the update step above provides an approximation of the likelihood $p(y_k\mid y_{1:k-1},\tilde{\theta}_k^{(i)})$, $i=1,\cdots, N$, (step~\ref{1jitt3} in Algorithm~\ref{algoRNPF}) which are used to re-weight the parameter particles and obtain $\{{\theta}_k^{(i)},{x}_k^{(i,j)},\, i=1\cdots, N,\, j=1,\cdots, M\}$; see step~\ref{1res} in Algorithm~\ref{algoRNPF}.
\end{itemize}
The RNPF is introduced in \cite{CrisanMiguez}. We reproduce it here for the sake of completeness.

\begin{algorithm}[{\bf sequential nested particle filter for parameter estimation}]\label{algoRNPF}
\mbox{}\vspace{-0.8em}\\
\begin{description}
\item[Initialization:]
			Assume an initial distribution $p(\theta_{0})$ for the parameters and $p(x_{0})$ for the states, and
				sample from the initial distributions to get $N$ particles $\{\theta_{0}^{(i)}, i=1,\cdots, N\}$ and $N\times M$ particles
				$\{x_{0}^{(i,j)}, i=1,\cdots,N, \,j=1,\cdots,M\}$.
\item [Recursion:] \mbox{}
\begin{enumerate}		
	\item Filtering: given $\{\theta_{k-1}^{(i)},x_{k-1}^{(i,j)}\}$, for each $i=1, \cdots, N$,
			\begin{enumerate}	
				\item (jittering, outer Monte Carlo layer) sample new parameters $\tilde{\theta}_k^{(i)}$ from the jittering kernel $\kappa(\theta\mid \theta_{{k-1}}^{(i)})$,\label{1jitt1}
				\item (together with the next two steps, this is the update part) sample new parameters $\hat{x}_k^{(i,j)}$, $j=1,\cdots, M$, from the transition measure $\pi_k^{\tilde{\theta}_k^{(i)}}(\ud x\mid x_{k-1}^{(i,j)})$ (inner Monte Carlo Layer),\label{1jitt2}
				\item compute $p(y_k\mid y_{1:k-1}, \tilde{\theta}_k^{(i)}) \approx \frac{1}{M} \sum_{j=1}^{M} l^{\tilde{\theta}_k^{(i)}}_{y_k}(\hat{x}_k^{(i,j)})$,\label{1jitt3}
				\item compute the weights for the state space using Equation~\eqref{eq:ak}\label{1jitt4}
				 $$a_{k}^{(j)} =\frac{ a_{k-1}^{(j)} p(y_{k}\mid \hat{x}_k^{(i,j)},\tilde{\theta}_k^{(i)})}{\sum_{j=1}^M a_{k-1}^{(j)} p(y_{k}\mid \hat{x}_k^{(i,j)},\tilde{\theta}_k^{(i)})}, \qquad j=1,\cdots, M\,,$$
				\item resample the $\hat{x}^{(i,p)}$: set $\tilde{x}^{(i,j)}$ equal to $\hat{x}^{(i,p)}$ with probability $a_{k}^{(p)}$, where $j,p \in \{1,\cdots M\}$.\label{1jitt5}
				
\end{enumerate}	
				\item\label{1res} Resampling of the $\{\theta_k^{(i)}\}$: compute the weights for the parameters space using Equation \eqref{theta-weights}
\begin{equation}\label{eq:wtilde}
w_k^{\tilde{\theta}^{(i)}_k}= \frac{p(y_k\mid y_{1:k-1},\tilde{\theta}_k^{(i)})}{\sum_{i=1}^N p(y_k\mid y_{1:k-1},\tilde{\theta}_k^{(i)})} , \qquad i=,1\cdots, N\,.
\end{equation}
                For each $i=1,\cdots,N$, set $\{\theta_k^{(i)}, x_k^{(i, j)}\}_{1\leq j\leq M}$ equal to  $\{\tilde{\theta}_k^{(p)}, \tilde{x}_k^{(p, j)}\}_{1\leq j\leq M}$ with probability
                $w_k^{\tilde{\theta}^{(p)}_k}$, where $p\in \{1,\cdots,N\}$.
                \item Go back to the filtering step.
\end{enumerate}
\end{description}
\end{algorithm}

\subsubsection{A note on the convergence of the RNPF}
A convergence study of Algorithm~\ref{algoRNPF} was carried out in Lemmas 3 to 6 and Theorems 2 and 3 in \cite{CrisanMiguez}, where the reasoning was split in the three steps of the algorithm: the jittering, the update and the resampling.
In this latter paper, it was proven that, under some regularity conditions, the $L^p$-norms of the approximation errors, induced by these different steps, vanish with rate proportional to $\frac{1}{\sqrt{N}}$ and $\frac{1}{\sqrt{M}}$. Recall here that $N$ and $N\times M$ are respectively the number of samples in the parameter space and the number of particles in the state space. A similar result was proven for the approximation of the joint posterior distribution of the parameters and the state variables. We will make use of some of these convergence results later in Section~\ref{convergence_analysis} to carry out convergence study of our proposed algorithm, Algorithm~\ref{algoKPF}.

Under the assumption that the posterior measure $\Gamma_k^\theta(\ud x)$ is continuous w.r.t.\ the parameter $\theta$ and when the mutation step of the parameters is small enough, the RNPF is a recursive algorithm. This makes the RNPF more efficient than non-recursive methods such as {\it sequential Monte Carlo square}, see~\cite{CJP}, and {\it Markov Chain Monte Carlo} methods, see~\cite{GHF,geweke1999markov,higdon1998auxiliary}.
The drawbacks of the RNPF are its heavy computational burden and slow convergence speed which are respectively due to the nested simulations in the two Monte Carlo layers and the small mutation step of the parameters. In many applications, such as in financial modeling, the time length of the data is quite limited. Hence the time series of the data are not enough to make the RNPF converge. To tackle this problem, we propose a new methodology in Section~\ref{Kalman-particle-filter}.

\section{Parameters estimation in short rate models}\label{section:affinemodel}

Here we present a rather general model, an affine process, particularly relevant in mathematical finance for instance where one is  interested in estimating the parameters of the short rate curve given the observed data. It motivates the kind of system that we will consider and to which the new (Kalman particle) filter of Section~\ref{Kalman-particle-filter} will be applied.

Let $(\Omega, \mathcal{F}, (\mathcal{F}_t)_{t\geq 0},\mathbb{P})$ be a filtered probability space satisfying the usual conditions and $(W_t)_{t\geq 0}$ be a $d$-dimensional Brownian motion.
In this paper, although our results can be applied to general state space models of type \eqref{state_space_model}, we will mainly consider dynamics of the type
\begin{equation}\label{model_analysis}
\ud x_t = A(\beta-x_t)\, \ud t + \left(\Sigma+\tilde{\Sigma}\sqrt{x_t^{(1)}}\right)\, \ud W_t, \,
x_0 = x \in \mathbb{R}^d\,,\\
\end{equation}
where $A,\Sigma$ and $\tilde{\Sigma}$ are $d\times d$-matrices, $\beta$ is a d-vector and its first component is non-negative,  and $(x_t^{(1)})_{t\geq 0}$ is the first component of $(x_t)_{t\geq 0}$.
We assume the matrix $A$ is diagonal and we denote the diagonal elements of $A$ by $\alpha_1,\cdots,\alpha_{d}$.
Consider some integers $p, q\geq 0$ with $p+q=d$. When $\Sigma\tilde{\Sigma}=\tilde{\Sigma}\Sigma=0$ and the parameters of the model \eqref{model_analysis} satisfy certain conditions known in the literature as {\it admissibility conditions}, the process
$(x_t)_{t\geq 0}$ is $\Real^p_+ \times \Real^q$-valued {\it affine} process,
see~\cite{DDW,DPS,KM} for an overview of affine processes. In Appendix~\ref{app-section:admissibility} we specify the admissibility of the parameters of the dynamics \eqref{model_analysis}.
In our context, the short rate evolution will be described by a process $(r_t)_{t\geq 0}$ given in terms of $(x_t)_{t\geq 0}$ by
$$r_t = c + \gamma^\top x_t\,,$$
where $c \in \mathbb{R}$, $\gamma \in \mathbb{R}^d$.
Let $T>0$ be the maturity time, then the {\it zero coupon bond} price at time $t<T$ is defined as
$$P(t,T) = \mathbb{E}[\e^{-\int_t^T r(s) \, \ud s}\mid \mathcal{F}_t]\,,$$
and the corresponding zero rates, also called yields, are defined as $-\log P(t,T)/(T-t)\,.$
The fact that the process $(x_t)_{t\geq 0}$ is affine, which happens if $\Sigma\tilde{\Sigma}$ is zero, allows one to obtain an explicit formula for the zero coupon bond price, i.e.
\begin{align}\label{eq:bond-explicit-formula}
P(t,T) =\e^{-\phi(T-t,0) -\psi(T-t,0) x(t)}\,.
\end{align}
The functions $\phi$ and $\psi$ are the solutions to some ordinary differential equations, which are often referred to as the Riccati equations, see Theorem~\ref{thm:discounting} in Appendix~\ref{sec:Riccati-equations} for details. Then,  if $\tilde{\Sigma}=0$ (first case), Equation~\eqref{eq:bond-explicit-formula} holds for $(\phi, \psi)$ the solution to \eqref{riccati-equations1}. If $\Sigma=0$ (second case), then \eqref{eq:bond-explicit-formula} holds for $(\phi, \psi)$ the solution to \eqref{riccati-equations2}. Denote the time to maturity $T-t$ by $\tau$, then the zero rate at time $t$ with time to maturity $\tau$ can be computed by
\begin{equation}\label{zero-rates}
R_t(\tau) := -\frac{1}{\tau}\left(\phi(\tau,0) + \psi(\tau,0)^\top x_t\right).
\end{equation}
In the market, we can obtain the data for zero rates at  discrete time instants $t_k$ with certain times to maturity $\tau_1,\cdots,\tau_L$, call these data $R_k(\tau_l)$. We believe these data contain noise, hence at time $k$ we observe \[
y_k = [y_k(\tau_1),\cdots,y_k(\tau_L)] = [R_k(\tau_1),\cdots,R_k(\tau_L)]^\top +v_k,
\]
for $k=1,\cdots, K$, and $v_k$ is an $L$-dimensional random vector which presents the noise in the observed data.
Let $0=t_0\leq t_1,\cdots, t_n=T$ be a partition of the time interval $[0,T]$. Then, considering a time-discrete version $x_k := x_{t_k}$, $k\in \mathbb{N}^+$, of the affine process $(x_{t})_{t\geq 0}$, our aim is to derive the parameters of the latent state process $(x_k)_{k\in \mathbb{N}^+}$ given the observations $(y_k)_{k\in \mathbb{N}^+}$.
To be more precise, we consider the following state space model, the observation equation can be seen as of the general form in~\eqref{KF} by enlarging the state vector,
\begin{align}
x_k &= \e^{-A(t_k-t_{k-1})}x_{k-1} + \left(I-\e^{-A(t_k-t_{k-1})}\right)\beta + \int_{t_{k-1}}^{t_k} \e^{-A(t_k-u)}\left(\Sigma+\tilde{\Sigma}\sqrt{x_u^{(1)}}\right)\, \ud W_u\,,\label{X}\\
y_k  &= H_kx_k +H^0_k+ v_k\,\label{Y},
\end{align}
where $I$ is the identity matrix, $(x_k)_{k\in \mathbb{N}^+}$ is the latent process, $(y_k)_{k\in \mathbb{N}^+}$ represents the observations, $H_k$ is a $L\times d$ matrix with each row equal to $-\psi(\tau_l,0)/\tau_l, l=1,\cdots,L$, $H^0_k$ is the column vector $[-\phi(\tau_1,0)/\tau_1,\cdots,-\phi(\tau_L,0)/\tau_L]^\top$ and $v_k$ represents the noise.
The aim is to estimate the model parameters $A, \beta, \Sigma, \tilde{\Sigma}$ given the observation vector $y_{1:k}$ and the variance of $v_k$.
\medskip\\
We end this section by giving some examples of the models of type \eqref{model_analysis} which are well known in the literature and to which we return with numerical experiments in Section~\ref{Numerical results}.
For $d=1$, $p=1$, $\Sigma=0$, $\tilde{\Sigma} \neq 0$, one obtains the {\it Cox-Ingersoll-Ross} (CIR) model, see~\cite{CIR1985}, i.e.,
\begin{align}\label{eq:CIR-model}
\ud x_t =  \alpha_1(\beta -x_t)\, \ud t +\tilde{\Sigma} \sqrt{x_t}\, \ud W_t\,.
\end{align}
For $d=2$, $p=0$, $\beta^{(1)}=\beta^{(2)} =0$, $\Sigma\neq 0$, $\tilde{\Sigma}=0$, one obtains the two-factor {\it Hull-White} model with mean-reversion level $0$, see~\cite{hull1990pricing}, i.e.,
\begin{equation}\label{eq:HW}
\begin{aligned}
\ud x_t^{(1)} &=- \alpha_{11} x^{(1)}_t\, \ud t + \Sigma_{11} \, \ud W_t^{(1)} +\Sigma_{12}\, \ud W_t^{(2)}\,,\\
\ud x_t^{(2)} &=- \alpha_{22} x^{(2)}_t\, \ud t + \Sigma_{21} \, \ud W_t^{(1)} +\Sigma_{22}\, \ud W_t^{(2)}\,.
\end{aligned}
\end{equation}
For $d=2$, $p=q=1$, $\Sigma=0$, one obtains the {\it stochastic volatility} model, see~\cite{heston1993closed}, in which the first component, $(x_t^{(1)})_{t\geq 0}$, represents
the stochastic volatility of the short rate $(x_t^{(2)})_{t\geq 0}$, i.e.,
\begin{equation}\label{eq:HSV}
\begin{aligned}
\ud x_t^{(1)} &=\alpha_{11} (\beta_1-x^{(1)}_t)\, \ud t + \sqrt{x_t^{(1)}}\left(\tilde{\Sigma}_{11} \, \ud W_t^{(1)} +\tilde{\Sigma}_{12}\, \ud W_t^{(2)}\right)\,, \\
\ud x_t^{(2)} &=\alpha_{22} (\beta_2-x^{(2)}_t)\, \ud t + \sqrt{x_t^{(1)}}\left(\tilde{\Sigma}_{21} \, \ud W_t^{(1)} +\tilde{\Sigma}_{22}\, \ud W_t^{(2)}\right)\,.
\end{aligned}
\end{equation}

\section{Kalman particle filter for online parameters estimation}\label{Kalman-particle-filter}

\noindent  In this section, we introduce the {\it Kalman particle filter} for online parameter estimation. It is a semi-recursive algorithm that combines the Kalman filter and the particle filter.
In this new approach, we consider a two layers method as in the RNPF algorithm. In the outer layer, we sample the particles of the model parameters using some Markovian Gaussian kernel which is updated at each time step. In the inner layer, the distribution of the state process and the marginal likelihood $p(y_k\mid y_{1:k-1},\theta_k^{(i)})$, which is used to re-weight the parameter particles in the outer layer, are estimated given the sampled parameter particles.

There are two main differences between our proposed Kalman particle filter algorithm and the RNPF algorithm. The first difference is that in the outer layer we use dynamic jittering functions, i.e.\ the jittering functions can change over time. Specially, in this paper we specify two jittering functions to sample the model parameters, see \eqref{kernel1} and \eqref{kernel2} as described in Subsection~\ref{sec:static-estimation} below. The second difference is that we use the Kalman filter, instead of the particle filter, to update the underlying states in the inner layer.
Note that in  case the state space does not follow  linear Gaussian dynamics, the literature offers different alternatives, see~\cite{bruno2013sequential} for a Monte Carlo approach, or the {\it Gaussian mixture}, see~\cite{sorenson1971recursive}, or Kalman filter extensions such as the {\it extended Kalman filter}, see~\cite{einicke1999robust, wan2001dual}, the {\it unscented Kalman filter}, see~\cite{wan2001unscented} . In these latter methodologies, the idea is to consider an approximation of the state variables which is linear and Gaussian, and then run a Kalman filter on the approximation. When the model is not Gaussian, such an approximation  introduces bias.
In Section~\ref{convergence_analysis}, we will carry a convergence analysis of our algorithm and we will prove that the bias induced by the Gaussian approximation of the model \eqref{X}, \eqref{Y} indeed vanishes when the time step tends to zero.

The use of the two jittering functions in the outer layer and of the Kalman filter in the inner layer allows us to obtain an algorithm that has faster convergence speed and less computational complexity than the RNPF algorithm.
This will be further illustrated in the examples in Section~\ref{Numerical results}.

\subsection{Static model estimation}\label{sec:static-estimation}

\noindent As described in Section~\ref{sec:static-model}, in order to sequentially estimate the posterior density $p(\theta \mid y_{1:k}$), $k=1,\cdots,K$, we face two issues: how to sample particles from the former posterior distribution $p(\theta\mid y_{1:k-1})$ and how to compute the conditional likelihood $p(y_k\mid y_{1:k-1},\theta)$. First, we consider the sampling problem and we introduce the first jittering kernel that we use to update the parameter space $\theta$.

\subsubsection{Gaussian kernel with changing covariance}

Recall that we use the generic notation $N(x;\mu,\Sigma)$ to denote the density at $x$ of the normal distribution with mean vector $\mu$ and covariance matrix $\Sigma$.
We choose a Gaussian kernel such that the conditional density of $\theta_k$ is given as
\begin{align*}
p(\theta_k\mid\theta_{k-1}) = N(\theta_k; \mu(\theta_{k-1}),\Sigma(\theta_{k-1}))\,,
\end{align*}
with $\mu(\theta_{k-1})$ and $\Sigma(\theta_{k-1})$ being respectively the conditional mean and covariance of $\theta_k$ given $\theta_{k-1}$\,. Then if the parameters $\theta_k$ are jittered from this Gaussian kernel, one can easily derive that

\begin{equation}\label{expectation-variance}
\begin{aligned}
\mathbb{E}(\theta_k) &= \mathbb{E}(\mu(\theta_{k-1})), \\
\mbox{Var}(\theta_k)   &= \mathbb{E}(\Sigma(\theta_{k-1}))+\mbox{Var}(\mu(\theta_{k-1}))\,.
\end{aligned}
\end{equation}
Ideally, the jittering should not introduce bias  and information loss (artificial increase in the variance), see~\cite{LW}, which means that $\mathbb{E}(\theta_k)=\mathbb{E}(\theta_{k-1})$ and $\mbox{Var}(\theta_k)=\mbox{Var}(\theta_{k-1})$, $k=1,\cdots,K$. The latter, together with Equations \eqref{expectation-variance} imply
\begin{equation}\label{mu-sigma}
\begin{aligned}
\mathbb{E}(\mu(\theta_{k-1})) &= \mathbb{E}(\theta_{k-1}),\\
\mathbb{E}(\Sigma(\theta_{k-1}))+\mbox{Var}(\mu(\theta_{k-1}))   &= \mbox{Var}(\theta_{k-1})\,.
\end{aligned}
\end{equation}
\noindent To achieve that, the Liu \& West filter \cite{LW} applies a shrinkage to the kernel. We will apply the same technique although the jittering function is used differently in our case. If one assumes a deterministic jittering covariance, i.e.~$\Sigma(\theta_{k-1}) = \Sigma_{k-1}$ and a linear mean function
\begin{equation}\label{eq:a}
\mu(\theta_{k-1}) = a\theta_{k-1} + (1-a)\mathbb{E}(\theta_{k-1}), \mbox{for some $a\in(0,1)$},
\end{equation}
then the jittering kernel satisfying \eqref{mu-sigma} is given by
\begin{equation}\label{kernel1}
p(\theta_k\mid\theta_{k-1}) = N(\theta_k; \mu(\theta_{k-1}), \Sigma(\theta_{k-1}))\,,
\end{equation}
where $ \Sigma(\theta_{k-1})= (1-a^2)\mbox{Var}(\theta_{k-1})$. The kernel in \eqref{kernel1} is the same jittering kernel as used in the Liu \& West filter \cite{LW}. We will refer to the number $a$ as the discount factor.

Before we present our methodology for jittering in detail, we introduce the following assumption which we need in our recursive algorithm later. We will use this assumption in Section~\ref{convergence_analysis} to prove the convergence of our proposed algorithm.

\begin{assumption}\label{ass1}
The jittering kernels $\kappa_k^N(d\theta\mid \theta^\prime)$, for $k=1,\cdots,N$, and $\theta^\prime$ taking values in a compact set $D_\theta \subset \mathbb{R}^d$, satisfy the following inequalities
\begin{equation}\label{A1_1}
\sup_{\theta^\prime\in D_\theta}\int\abs{f(\theta)-f(\theta^\prime)} \kappa_k^N(d\theta\mid \theta^\prime)\leq \frac{e_{1,k}}{\sqrt{N}}\,,
\end{equation}
for some positive constant $e_{1,k}$ and bounded function $f : D_\theta \rightarrow \mathbb{R}$, and
\begin{equation}\label{A1_2}
\sup_{\theta^\prime\in D_\theta}\int\norm{\theta-\theta^\prime}^p \kappa_k^N(d\theta\mid \theta^\prime)\leq \frac{e_{2,k}^p}{\sqrt{N^p}}\,,
\end{equation}
for $p\geq 1$ and some positive constant $e_{2,k}$.
\end{assumption}

\noindent Let $f: D_\theta \rightarrow \mathbb{R}$ be a bounded Lipschitz function. In Proposition~1 of Appendix C in \cite{CrisanMiguez}, set $\epsilon_n=1$ there, it is shown that if for any $p \geq 1$, the jittering kernels $\kappa^N_k(\ud \theta\mid\theta^\prime)$ satisfy
\begin{equation*}
 \sup_{\theta^{\prime} \in D_\theta}\int \norm{\theta-\theta^\prime}^2\kappa^N_k(\ud\theta \mid \theta^{\prime})\leq \frac{c}{\sqrt{N^{p+2}}}\,,
\end{equation*}
for some positive constant $c$ independent of $n$, then Assumption~\ref{ass1} holds.
\medskip\\
The jittering kernels of type \eqref{kernel1} have an appealing property, the covariance can change over time. This aspect helps us to design an algorithm with the following attractive feature. Initially,
since we lack information on the unknown parameters, a larger covariance can lead to a faster convergence of the parameters to the high likelihood area. Over time, the filter refines the estimate of the fixed parameters until at some points a very small variance has been reached which makes the parameter estimation more accurate. However, a direct application of this kernel does not yield a recursive method since  it generally does not satisfy Assumption~\ref{ass1}. Hence, it is unclear whether the algorithm converges. To tackle this issue, we introduce the second Gaussian jittering kernel which satisfies Assumption~\ref{ass1}, hence we can obtain a recursive method if we switch the jittering kernel to the second kernel. The details will be described in the section \ref{Kalman-particle-algorithm}.

\begin{remark}
Although in this paper we have specified the use of Gaussian kernels with a certain mean and variance, the dynamic kernel set up is very generic. In implementation, one can freely choose another kernel that fits its purpose. For example, one can define the variance of the jittering kernel as a monotonically decreasing function of time so that the convergence speed can be manually controlled.
\end{remark}

\subsubsection{Description of the Kalman particle filter methodology}
\label{Kalman-particle-algorithm}

Here  we present our methodology to Gaussian and linear models. When the model is not Gaussian and linear, we approximate it by a Gaussian linear model and hence
we can follow the same methodology as described below to the approximation.\\

\noindent {\bf The non recursive step.}
Assume at time ${k-1}$, one can generate a random grid of samples in the parameter space, say $\{\theta^{(i)}_{k-1}, i=1,\cdots, N\}$.
\begin{itemize}
\item {\bf Jittering step 1.}
Here we apply the kernel (\ref{kernel1}), referred to as jittering kernel 1, to obtain new samples $\{\tilde{\theta}^{(i)}_k, i=1,\cdots, N\}$ (step~\ref{2filteri1} in Algorithm~\ref{algoKPF} below).

\item {\bf Update.} In order to compute the posterior measure $\Gamma_k^{\tilde{\theta}_k^{(i)}}$ at time $k$, one needs to know the mean $B_{k-1}^{\tilde{\theta}^{(i)}_k}$ and the covariance $P_{k-1}^{\tilde{\theta}^{(i)}_k}$ at time ${k-1}$ of the posterior distribution, see Formula~\eqref{eq:posterior-measure} and note that we make the dependence on $\tilde{\theta}^{(i)}_k$ clear in the notation. However, these latter quantities are not available since the parameter has evolved from
$\theta_{k-1}$ at time $k$ to $\tilde{\theta}_{k}$ at time $k$, see also the discussion in Section~\ref{sec:static-model}.
Hence at this step, the algorithm does not run recursively and at every time $k$ where a new parameter particle is sampled, the inner filter re-runs from time $t_0$ to $t_k$ (step~\ref{2filteri2} in Algorithm~\ref{algoKPF}). Moreover, the marginal likelihood $p(y_k\mid y_{1:k-1},\tilde{\theta}_k^{(i)})$, $i=1,\cdots,N$ is computed in the inner filter, using Equation~\eqref{MarginalLikelihood_KF}. This latter will be used to re-weight the parameter particles,   see step~\ref{2filter1iii} in Algorithm~\ref{algoKPF} below.
\item {\bf Resampling.}  We use a resampling technique to obtain $\{\theta^{(i)}_k, B_k^{(i)}, P_k^{(i)}, \,i=1,\cdots,N\}$, further specified in step~\ref{2resample} of Algorithm~\ref{algoKPF}. \\
\end{itemize}
\noindent{\bf The recursive step.}
Once at some time point $t_l$ the variance $\Sigma(\theta_{{l}})=(1-a^2)\mbox{Var}(\theta_{{l}})$ is smaller than a certain level $V_N$ ensuring that Assumption~\ref{ass1} holds (we also set a floor $0<V_f<V_N$ on the jittering variance to prevent the algorithm of getting stuck),
then
\begin{itemize}
\item {\bf Jittering step 2.} We apply the jittering kernel 2
\begin{equation}\label{kernel2}
p(\theta_k^{(i)}\mid\theta^{(i)}_{k-1}) = N(\theta^{(i)}_k;\theta^{(i)}_{k-1},\min\{\max\{\Sigma(\theta_{{l}}),V_f\},V_N\})\,,
\end{equation}
for $k\geq l+1$, see step~\ref{2filterii1} of Algorithm~\ref{algoKPF}.
\item {\bf Update.}
From time $t_{l+1}$ on, we have a recursive algorithm based on the idea  to approximate the posterior measure $\Gamma_k^{\tilde{\theta}_k^{(i)}}$ by $\hat{\Gamma}_k^{{\theta}^{(i)}_{k-1}}$, which is computed using $\tilde{\theta}_{k}^{(i)}$, $B_{k-1}^{\theta^{(i)}_{k-1}}$ and the covariance $P_{k-1}^{\theta^{(i)}_{k-1}}$ (step~\ref{2filteri2} in Algorithm~\ref{algoKPF}). Note that here we use the Kalman filter. Note that, as in Subsection~\ref{RNP-filter}, here we assume that $\Gamma_k^\theta$ is continuous w.r.t.\ $\theta \in D_\theta$. Moreover, the marginal likelihood $p(y_k\mid y_{1:k-1},\tilde{\theta}_k^{(i)})$, $i=1,\cdots,N$ is approximated by the inner filter using Equation (\ref{MarginalLikelihood_KF}), step~\ref{2filter1iii} in Algorithm~\ref{algoKPF}. This latter will be used to re-weight the parameter particles
\item {\bf Resampling.}
Apply a resampling technique to obtain $\{\theta^{(i)}_k, B_k^{(i)}, P_k^{(i)}, \, i=1,\cdots,N\}$, see step~\ref{2resample} of Algorithm~\ref{algoKPF}.
\end{itemize}
We now introduce the Kalman particle algorithm. Recall $\mu(\theta_{k-1})$ and $\Sigma(\theta_{k-1})$ from \eqref{kernel1} and the discount factor $a$ from \eqref{eq:a}.

\noindent
\begin{algorithm}[{\bf Kalman particle filter for static parameter model}]\label{algoKPF}
\mbox{}\medskip\\
{\bf Initialization:}
			\begin{enumerate}
				\item set the number of particles $N$, a value for the discounting factor $a$, a switching variance level $V_N$ and a floored variance level $V_f$,
				\item assume an initial distribution $p(\theta_{0})$ for the parameters,
				\item sample from the initial distribution to get $N$ particles $\{\theta_{0}^{(i)},\, i=1,\cdots,N\}$ for the parameters,
				\item for each particle $\theta_{0}^{(i)}$, assign the same initial mean $B_{0}^{(i)}$ and covariance value $P_{0}^{(i)}$ of the posterior distribution
				$\Gamma_{0}^{\theta_{0}^{(i)}}$,	$i=1,\cdots, N$.			
			\end{enumerate}			
\medskip
{\bf Recursion:}\begin{enumerate}
	\item Filtering: given $\{\theta_{k-1}^{(i)},\, i=1,\cdots,N\}$,
				\begin{enumerate}
	      	\item \label{init} if $\Sigma(\theta_{k-1})>V_N$ (jittering case 1), for each $i=1,\cdots,N$,

				\begin{enumerate}
					\item\label{2filteri1} sample new parameters from the kernel (\ref{kernel1}), i.e.
							\begin{equation*}
\tilde{\theta}_k^{(i)}\sim N(\mu(\theta_{k-1}^{(i)}),\Sigma(\theta_{k-1}))\,,
							\end{equation*}
					\item\label{2filteri2} based on the parameter $\tilde{\theta}_k^{(i)}$, use the Kalman filter to compute the mean and covariance of the posterior distribution
					from time $1$ to ${k}$ and hence obtain $\Gamma_k^{\tilde{\theta}^{(i)}_k}$,
				\end{enumerate}

					\item\label{once} once $\Sigma(\theta_{l-1})\leq V_N$ (jittering case 2), for some $t_l$, then for $k\geq l+1$ and $i=1,\cdots,N$, given $\{B_{k-1}^{(i)}, P_{k-1}^{(i)}\}$,

				\begin{enumerate}
					\item\label{2filterii1} sample new parameters from the kernel (\ref{kernel2}), i.e.
							\begin{equation*}
									\tilde{\theta}_k^{(i)}\sim N(\theta_{k-1}^{(i)},\min\{\max\{\Sigma(\theta_{{k-1}}),V_f\},V_N\})\,,
							\end{equation*}
					\item based on the parameters $\tilde{\theta}_{k}^{(i)}$, $B_{k-1}^{(i)}$ and $P_{k-1}^{(i)}$, use the Kalman filter to compute the mean $\tilde{B}_{k-1}^{(i)}$ and covariance $\tilde{P}_{k-1}^{(i)}$ of the posterior distribution at time $k$ and hence obtain an approximation $\hat{\Gamma}_k^{{\theta}^{(i)}_{k-1}}$ of the posterior distribution (update step),
				\end{enumerate}	
				\item\label{2filter1iii} if \ref{init} or \ref{once} holds, then for $i=1,\cdots, N$, compute  $\tilde{p}(y_{k}\mid y_{1:k-1},\tilde{\theta}_k^{(i)})$ using Equation (\ref{MarginalLikelihood_KF}), consequently, using \eqref{theta-weights}), obtain an approximation of the normalized weights given by
				$$\tilde{w}_k^{\tilde{\theta}_k^{(i)}} = \frac{\tilde{p}(y_{k}\mid y_{1:k-1},\tilde{\theta}_k^{(i)})}{\sum_{i=1}^{N}\tilde{p}(y_{k}\mid y_{1:k-1},\tilde{\theta}_k^{(i)})}\,,$$
				which gives un update to be used in the next step,
				
				\end{enumerate}
	\item\label{2resample} Resampling: for each $i=1,\cdots, N$, set $\{\theta_k^{(i)}, B_k^{(i)}, P_k^{(i)}\}$   equal to $\{\tilde{\theta}_k^{(p)}, \tilde{B}_k^{(p)}, \tilde{P}_k^{(p)}\}$, with probability $\tilde{w}_k^{\tilde{\theta}_k^{(p)}}$, where $p \in \{1,\cdots,N\}$.
	\item Return to the filtering step.		
\end{enumerate}
\end{algorithm}
\noindent
Note that Algorithm~\ref{algoKPF} could be applied to general state space models \eqref{state_space_model}. For our convergence analysis and in the financial applications,
we will focus on the special type \eqref{X}, \eqref{Y} of affine state space models.
When $\Sigma \neq 0$, the transition measure $\pi^\theta_k(\ud x\mid x_{k-1})$ of \eqref{X} is not Gaussian, we need to approximate it by a Gaussian transition in order to apply the Kalman filter. The approximation is obtained by replacing $\sqrt{x_u^{(1)}}$ in \eqref{X} with $\sqrt{x^{(1)}_{t_{k-1}}}$, resulting in
\begin{align}\label{Xhat}
\check{x}_k &= \e^{-A(t_k-t_{k-1})}x_{k-1} + \left(I-\e^{-A(t_k-t_{k-1})}\right)\beta + \int_{t_{k-1}}^{t_k} \e^{-A(t_k-u)}\left(\Sigma+\tilde{\Sigma}\sqrt{x_{k-1}^{(1)}}\right)\,\ud W_u.
\end{align}
Note that given $x_{k-1}$, the variable $\check{x}_k$
admits a Gaussian transition for $k \in \mathbb{N}^+$. Hence, given the model parameters $\theta$ (i.e.~$A,\beta, \Sigma_1, \Sigma_2$), we can compute
the approximated transition measure $\hat{\pi}_k^{\theta}=p(d\check{x}_k\mid x_{k-1}, \theta)$.
Recall from Equations \eqref{eq:posterior-theta}, (\ref{int_gamma}) and (\ref{MarginalLikelihood}) that the weights for the parameters space are computed by
\begin{equation}\label{eq:weights}
w_k^{\tilde{\theta}_k^{(i)}}\propto \left((l^{\tilde{\theta}_k^{(i)}}_{y_k},\pi_k^{\tilde{\theta}_k^{(i)}}),\Gamma_{k-1}^{\tilde{\theta}_k^{(i)}}\right)\,.
\end{equation}
In the recursive step of Algorithm~\ref{algoKPF}, the measure $\Gamma_{k-1}^{\tilde{\theta}_k^{(i)}}$ is not available at time ${k-1}$, since the parameter evolves from $\theta_{k-1}^{(i)}$ to $\tilde{\theta}_k^{(i)}$ in the jittering step at time $k$. In order to have a recursive algorithm, the estimate $\hat{\Gamma}_{k-1}^{{\theta}_{k-1}^{(i)}}$ obtained at time ${k-1}$ is used to approximate the measure $\Gamma_{k-1}^{\tilde{\theta}_k^{(i)}}$. Hence, when the model is linear and Gaussian, we obtain the following estimation of the weights for the parameters space
\begin{equation}\label{weights-for-theta1}
\tilde{w}_k^{\tilde{\theta}_k^{(i)}}\propto \left((l^{\tilde{\theta}_k^{(i)}}_{y_k},\pi_k^{\tilde{\theta}_k^{(i)}}),\hat{\Gamma}_{k-1}^{\theta_{k-1}^{(i)}}\right)\,.
\end{equation}
When the model is nonlinear or non-Gaussian, then we consider the approximation \eqref{Xhat} to \eqref{X}. In this case, the transition probability $\pi_k^{\tilde{\theta}_k^{(i)}}$ is approximated by a Gaussian transition probability $\hat{\pi}_k^{\tilde{\theta}_k^{(i)}}$. Therefore, in the non-recursive step, estimation of the weights for the parameters space is given by
\begin{equation}\label{weights-for-theta2}
\hat{w}_k^{\tilde{\theta}_k^{(i)}}\propto \left((l^{\tilde{\theta}_k^{(i)}}_{y_k},\hat{\pi}_k^{\tilde{\theta}_k^{(i)}}),\hat{\Gamma}_{k-1}^{\tilde{\theta}_{k}^{(i)}}\right)\,,
\end{equation}
In the recursive step, the measure $\Gamma_{k-1}^{\tilde{\theta}_k^{(i)}}$ is approximated by $\hat{\Gamma}_{k-1}^{{\theta}_{k-1}^{(i)}}$. Hence, we obtain the following estimation of the weights for the parameters space
\begin{equation}\label{weights-for-theta}
\hat{w}_k^{\tilde{\theta}_k^{(i)}}\propto \left((l^{\tilde{\theta}_k^{(i)}}_{y_k},\hat{\pi}_k^{\tilde{\theta}_k^{(i)}}),\hat{\Gamma}_{k-1}^{{\theta}_{k-1}^{(i)}}\right)\,,
\end{equation}
together with the estimation of the posterior measure of $x_k$, for $A \in \mathcal{B}(\mathbb{R}^d)$ a Borel set,
\begin{align*}
\hat{\Gamma}_{k}^{\tilde{{\theta}}_{k}^{(i)}} (A)= \frac{\left((\mathbf{1}_{\{x_k \in A\}}l^{\tilde{\theta}_k^{(i)}}_{y_k},\hat{\pi}_k^{\tilde{\theta}_k^{(i)}}),\hat{\Gamma}_{k-1}^{{\theta}_{k-1}^{(i)}}\right)}{\left((l^{\tilde{\theta}_k^{(i)}}_{y_k},\hat{\pi}_k^{\tilde{\theta}_k^{(i)}}),\hat{\Gamma}_{k-1}^{{\theta}_{k-1}^{(i)}}\right)}\,.
\end{align*}
We will make use of the weights  \eqref{weights-for-theta1}, \eqref{weights-for-theta2} and \eqref{weights-for-theta} in our convergence analysis later in Section~\ref{convergence_analysis}.

\subsection{Kalman Particle filter for models with piece-wise constant parameters}

So far in this paper, the model parameters are assumed to be fixed over time. But in many applications, it is more realistic to assume that, at least, some parameters are time-varying, for instance if they are piecewise constant. Offline estimation methods such as MLE are able to deal with this situation only if the change point of the parameter is known beforehand. This does not hold in most of the cases. For the particle filter methods which treat the model parameters as static, the variance of the samples for the parameters decreases with more observed data. Hence the marginal distribution of the model parameters will be increasingly concentrated around certain values. The consequence is that the particle filter algorithm is not able to capture abrupt changes of parameters.

We extend our proposed algorithm for static parameter to adapt to abrupt changes of parameters. To achieve that, we first identify the change points. This step is done by comparing the marginal likelihood between two consecutive steps. Suppose the parameter samples have already converged to the actual value. If at some point the actual parameter value jumps to another value, then the marginal likelihood based on the existing parameter samples are far from optimal. Hence the marginal likelihood at this time point should be significantly smaller than that at the previous time point. On the other hand, if at this time point the actual parameter value does not change, then the marginal likelihood should also be very close to the previous value. So we set a threshold $b<1$ and if at some point time ${k}$, the maximum marginal likelihood for $\theta^{(i)}_k \in D_\theta$, $k\in \mathbb{N}^+$, satisfies
$$\max_{1\leq i\leq  N}\left\{p(y_{k}\mid y_{1:k-1},\theta^{(i)}_{k})\right\}<b\max_{1\leq i\leq N}\left\{p(y_{k-1}\mid y_{1:k-2},\theta^{(i)}_{k-1})\right\}\,,$$ then we consider the time point $k$ to be the change point of the parameters. Of course, this condition is not sufficient but it is necessary.

Another issue here is that the parameter samples may have already (nearly) converged before the change point, hence the variance of these samples is too small to capture the change. This problem can be tackled by adding new samples from the initial parameter distribution to increase the sample variance. But since the variance is increased, the jittering kernel (\ref{kernel2}) does not satisfy Assumption~\ref{ass1}. Hence the jittering kernel should switch to (\ref{kernel1}). Moreover, since the model parameter changes, the posterior distributions from the previous time point are also not valid anymore, and a new initial value for the mean and the variance of the posterior distribution should also be initialized. So once the change point is determined, one can treat the calibration of the model as a new calibration based on data after this change point.
\medskip\\
We introduce the Kalman particle algorithm extended to time-varying parameters. We present the algorithm in full detail, noting that the differences with the previous Algorithm~\ref{algoKPF} are in the two jittering cases in the filtering step.

\noindent
\begin{algorithm}[{\bf Kalman particle filter for models with time-varying parameters}]\label{algoKPFJ}
\mbox{}\medskip\\
{\bf Initialization:}
			\begin{enumerate}
				\item set the number of particles $N$, a value for the discounting factor $a$, a switching variance level $V_N$, a floored variance level $V_f$, and the threshold parameter $b$,
				\item assume an initial distribution $p(\theta_{0})$ for the parameters,
				\item sample from the initial distribution to get $N$ particles $\{\theta_{0}^{(i)}, i=1,\cdots,N\}$ for the parameters,
				\item for each particle $\{\theta_{0}^{(i)}\}$, assign the same initial mean and covariance value of the posterior distribution $B_{0}^{(i)}$ and $P_{0}^{(i)}$ for the Kalman filter update,
			\end{enumerate}			
\medskip
{\bf Recursion:}
\begin{enumerate}
\item Filtering: given $\{\theta_{k-1}^{(i)}, \, i=1,\cdots,N\}$,
	      \begin{enumerate}
	      	\item if $\Sigma(\theta_{k-1})>V_N$ (jittering case 1), for each $i=1,\cdots,N$,
	\begin{enumerate}
					\item sample new parameters from the kernel (\ref{kernel1}), i.e.,
							\begin{equation*}
									\tilde{\theta}_k^{(i)}\sim N(\tilde{\theta}_k^{(i)}; \mu(\theta_{k-1}^{(i)}), \Sigma(\theta_{k-1}))\,,
							\end{equation*}
					\item based on the parameter $\tilde{\theta}_k^{(i)}$, use the Kalman filter to compute the mean and the covariance of the posterior distribution from time $1$ to ${k}$,

                    \item compute the likelihood $p(y_{k}\mid y_{1:k-1},\tilde{\theta}_k^{(i)})$, consequently obtain the normalized weights
                    $$w_k^{\tilde{\theta}^{(i)}_k}  = \frac{p(y_{k}\mid y_{1:k-1},\tilde{\theta}_k^{(i)})}{\sum_{i=1}^Np(y_{k}\mid y_{1:k-1},\tilde{\theta}_k^{(i)})}\,;$$
				\end{enumerate}

		\item once $\Sigma(\theta_{k-1})\leq V_N$ for some $t_l$ (jittering case 2), then for $k\geq l+1$ and $i=1,\cdots, N$, given $\{B_{k-1}^{(i)}, P_{k-1}^{(i)}\}$,

				\begin{enumerate}
					\item sample new parameters from the kernel (\ref{kernel2}), i.e.
							\begin{equation*}
									\tilde{\theta}_k^{(i)}\sim N(\tilde{\theta}_k^{(i)}; \theta_{k-1}^{(i)},\min\{\max\{\Sigma(\theta_{k-1}),V_f\},V_N\}),
							\end{equation*}
					\item based on the parameters $\tilde{\theta}_{k}^{(i)},B_{k-1}^{(i)}$ and $P_{k-1}^{(i)}$, use the  Kalman filter to compute the mean $\tilde{B}_k^{(i)}$ and the covariance $\tilde{P}_k^{(i)}$ of the approximated posterior distribution at time $k$,
                    \item compute an approximation $\tilde{p}(y_{k}\mid y_{1:k-1},\tilde{\theta}_k^{(i)})$ of the likelihood $p(y_{k}\mid y_{1:k-1},\tilde{\theta}_k^{(i)})$. If
                    $$\max_{1\leq i\leq N}\{\tilde{p}(y_{k}\mid y_{1:k-1},\tilde{\theta}_{k}^{(i)})\}<b\max_{1\leq i\leq N}\{\tilde{p}(y_{k-1}\mid y_{1:k-2},\tilde{\theta}_{k-1}^{{(i)}})\}\,,$$ then go to  the Initialization step of the algorithm to initialize the algorithm using the data after time $k$. Otherwise compute the normalized weights
                    				$$\tilde{w}_k^{\tilde{\theta}_k^{(i)}} = \frac{\tilde{p}(y_{k}\mid y_{1:k-1},\tilde{\theta}_k^{(i)})}{\sum_{i=1}^{N}\tilde{p}(y_{k}\mid y_{1:k-1},\tilde{\theta}_k^{(i)})}\, .$$
				\end{enumerate}	
	\end{enumerate}
				\item Resampling: for each $i=1,\cdots, N$, set $\{{\theta}_k^{(i)}, B_k^{(i)}, P_k^{(i)}\} =\{\tilde{\theta}_k^{(p)}, \tilde{B}_k^{(p)}, \tilde{P}_k^{(p)}\}$, with probability
				$w_k^{\tilde{\theta}^{(p)}_k}$, where $p \in \{1,\cdots, N\}$.

\end{enumerate}
\end{algorithm}
\noindent
As in this setting the parameters are time-varying, there is of course no point in providing a convergence analysis. But in Section~\ref{Numerical results} we will present a numerical study where Algorithm~\ref{algoKPFJ} is seen to be capable of tracking a sudden parameter change.

\section{Convergence analysis}\label{convergence_analysis}

In the Kalman particle Algorithm~\ref{algoKPF} introduced in Section~\ref{Kalman-particle-algorithm}, the (conditional) measure
\begin{equation}\label{original-measure}
\mu_k(\ud\theta_k) = p(\ud\theta_k\mid y_{1:k})\,,
\end{equation}
 is estimated for $k=1,\cdots,K$.
From now on, for convenience, we assume the algorithm starts at time $1$. At each time, the algorithm has three main steps: jittering, update and resampling. \\
Define the maximum step size $\delta = \sup_{k=1,\cdots, K}(t_{k}-t_{k-1})$. Let the parameter $\theta \in D_\theta$.  At time $k$, suppose the estimated measure of the last time $\mu_{k-1}^N$ is available and
\begin{equation*}
\mu_{k-1}^N = \frac{1}{N}\sum_{i=1}^N\delta_{\theta_{k-1}^{(i)}}(\ud\theta)\,.
\end{equation*}
In the jittering step as described in Algorithm~\ref{algoKPF}, new samples $\tilde{\theta}_{k}^{(i)}$ are sampled from the kernel function $N(\theta_{k-1}^{(i)},\Sigma(\theta_{t_{l}}))$. The resulting measure $\tilde{\mu}_k^N$ is then defined by
\begin{equation}\label{eq:mu-tilde}
\tilde{\mu}_k^N = \frac{1}{N}\sum_{i=1}^N\delta_{\tilde{\theta}_{k}^{(i)}}(\ud\theta)\,.
\end{equation}
In this step, no extra information is used to refine the estimates on the parameters. Hence the aim is to prove that the measure $
\tilde{\mu}_k^N$ converges to the measure $\mu_{k-1}^N$ in some sense when $\delta$ goes to zero and the number of samples $N$ goes to infinity.

In the update step as described in Algorithm~\ref{algoKPF} of Section~\ref{Kalman-particle-algorithm}, there are four cases to analyze, combinations of Gaussian-linear or non-Gaussian/non-linear models and recursive or non-recursive parts of the algorithm. When the model is linear and Gaussian, and we consider the non-recursive step of the algorithm, then  the normalized weights $w_k^{\tilde{\theta}_k^{(i)}}$ are computed exactly. When we consider the recursive step of the algorithm for a Gaussian linear model, the normalized weights $w_k^{\tilde{\theta}_k^{(i)}}$ are estimated by $\tilde{w}_k^{\tilde{\theta}_k^{(i)}}$, see \eqref{weights-for-theta1}. For the non-Gaussian and non-linear model, we consider the approximation \eqref{Xhat} to \eqref{X}, the normalized weights are estimated by \eqref{weights-for-theta2} and \eqref{weights-for-theta} in the non-recursive step and recursive step, respectively . Since the convergence of the weights in this latter case implies the convergence of the weights in the first three cases, we will only consider the model \eqref{Xhat} and the recursive step of the algorithm. In this case, we obtain a new estimated measure
\begin{equation}\label{eq:mu-hat}
\hat{\mu}_k^N = \sum_{i=1}^N \hat{w}_k^{\tilde{\theta}_k^{(i)}}\delta_{\tilde{\theta}_{k}^{i}}(\ud\theta)\,
\end{equation}
\noindent and the convergence analysis of $\hat{\mu}_k^N$ depends on that of the estimated weights $\hat{w}_k^{\tilde{\theta}_k^{(i)}}$.
The aim is to prove that $\hat{\mu}_k^N$ converges to $\mu_k$ in some sense when $\delta$ goes to $0$ and $N$ goes to infinity.

In the resampling step, the new particles $\{\theta_k^{(i)},\, i=1,\cdots,N\}$ are sampled from the empirical distribution $
\sum_{i=1}^{N}\hat{w}_k^{\tilde{\theta}_k^{(i)}}\delta_{\tilde{\theta}_k^{(i)}}$ and we need to prove that the measure
\begin{equation}\label{approximating-measure}
\mu_k^N = \frac{1}{N}\sum_{i=1}^N\delta_{\theta_k^{(i)}}(\ud \theta)\,,
\end{equation}
converges to $\mu_k$ when $\delta$ goes to zero and $N$ goes to infinity.
\medskip\\
For our convergence analysis we will need besides   Assumption~\ref{ass1}, the following assumptions.

\begin{assumption}\label{ass2}
The posterior measure $\Gamma_k^{\theta}$, $k=1,\cdots,N$, $\theta\in D_\theta$, is Lipschitz in the parameter $\theta$, i.e.~for any bounded continuous function $f$,
\begin{equation}\label{A2}
\abs{(f,\Gamma_k^{\theta})-(f,\Gamma_k^{\theta^\prime})}\leq e_{3,k}\norm{\theta-\theta^\prime}\,,
\end{equation}
for some constant $e_{3,k}$ .
\end{assumption}

\begin{assumption}\label{ass3}
Let
$(x_k)_{k\in \mathbb{N}^+}$ be as in \eqref{X}. There exist a constant $M>0$ such that
$$\sup_{\theta\in D_\theta, \, k\in \mathbb{N}^+}\mathbb{E}\left[\abs{x_k}\mid \theta\right]\leq M.$$
\end{assumption}

\begin{assumption}\label{ass4}
For any fixed observation sequence $y_{1:k}$, the likelihood $\{l^{\theta}_{y_t}(x),\,t=1,\cdots,k,\,\theta\in D_\theta\}$ satisfies
\begin{enumerate}
  \item $l^{\theta}_{y_t}(x)$ is continuous w.r.t.\ $x$ for all $\theta\in D_\theta$,
  \item $\norm{l_{y_t}}_\infty := \sup_{\theta\in D_\theta}\norm{l^{\theta}_{y_t}}_\infty<\infty$,
  \item $\inf_{\theta\in D_\theta}l^{\theta}_{y_t}>0$.
\end{enumerate}
\end{assumption}

\noindent The Inequality (\ref{A1_1}) in Assumption~\ref{ass1} is used for the convergence analysis of the jittering step, the Inequality (\ref{A1_2}) in Assumption~\ref{ass1}, together with Assumptions~\ref{ass2} and~\ref{ass3} are specially used for the analysis for the update step and Assumption~\ref{ass4} is used for the analysis for both update and resampling steps. Given the convergence of $\mu_{k-1}^N$ to $\mu_{k-1}$ in an $L^p$-sense when $\delta$ goes to zero and $N$ goes to infinity, we present the convergence of the measures $\tilde{\mu}_k^N$, $\hat{\mu}_k^N$ and $\mu_k^N$ respectively in the three Lemmas~\ref{jittering}, \ref{update}, \ref{resampling} in the subsections below. In Section~\ref{section:convergence}, we study the convergence of Algorithm~\ref{algoKPF} presented in Subsection~\ref{Kalman-particle-algorithm} by induction using these three lemmas. The main result there is Theorem~\ref{theorem:convergence}. Our analysis in inspired by the results in \cite{CrisanMiguez} for the nested particle filter and follows a similar pattern. Note however the crucial differences between our Algorithm~\ref{algoKPF} and their nested particle filter. We have to deal with  only one layer with a particle filter (instead of two such layers), as we use the Kalman filter in the other layer, but we also pay attention to the time discretization of the continuous processes.

\subsection{Jittering}

In the jittering step~\ref{2filterii1} for low values of $\Sigma(\theta_{k-1})$ the new parameter particles $\tilde{\theta}_k^{(i)}$ are sampled from a kernel function $\kappa_k^N(\ud\theta\mid\theta_{k-1}^{(i)})$, $i=1,\cdots,N$.
The following lemma shows that the error due to the jittering step vanishes. This lemma can be seen as the analog of Lemma~3 in \cite{CrisanMiguez}, but with the term $\tfrac{1}{\sqrt{M}}$ there (which plays no role in our analysis) replaced with $\sqrt{\delta}$ as we treat  the influence of time discretization. It can be proven in the same way and we present it here for the sake of completeness  and in a form that suits our purposes.

\begin{lem}\label{jittering}
Let $f$ be a bounded function and suppose Assumption~\ref{ass1} holds. If
\begin{equation}\label{pre_mu}
\norm{(f,\mu_{k-1}^N)-(f,\mu_{k-1})}_p\leq \frac{c_{1,{k-1}}}{\sqrt{N}}+d_{1,{k-1}}\sqrt{\delta}\,,
\end{equation}
for some constants $c_{1,{k-1}}$ and $d_{1,{k-1}}$ which are independent of $N$ and $\delta$, then there exist constants $\tilde{c}_{1,{k}}$ and $\tilde{d}_{1,t}$ which are independent of $N$ and $\delta$, such that
\begin{equation}\label{mu_tilde}
\norm{(f,\tilde{\mu}_{k}^N)-(f,\mu_{k-1})}_p\leq \frac{\tilde{c}_{1,{k}}}{\sqrt{N}}+\tilde{d}_{1,{k}}\sqrt{\delta}\,.
\end{equation}
\end{lem}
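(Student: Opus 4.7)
The plan is a triangle-inequality decomposition that isolates three error sources: a Monte Carlo fluctuation from sampling $\tilde{\theta}_k^{(i)}$ through the kernel, a deterministic kernel-displacement bias controlled by the first part of Assumption~\ref{ass1}, and the inductive error already present in $\mu_{k-1}^N$. Concretely, let $\mathcal{G}_{k-1}^N := \sigma(\theta_{k-1}^{(1)},\ldots,\theta_{k-1}^{(N)})$ and introduce the auxiliary quantity
\begin{equation*}
I_k^N \,:=\, \frac{1}{N}\sum_{i=1}^N \int f(\theta)\,\kappa_k^N(\ud\theta\mid\theta_{k-1}^{(i)}) \,=\, \mathbb{E}\!\left[(f,\tilde{\mu}_k^N)\,\big|\,\mathcal{G}_{k-1}^N\right].
\end{equation*}
Then I would write
\begin{equation*}
(f,\tilde{\mu}_k^N) - (f,\mu_{k-1}) \,=\, \underbrace{\bigl[(f,\tilde{\mu}_k^N) - I_k^N\bigr]}_{T_1} + \underbrace{\bigl[I_k^N - (f,\mu_{k-1}^N)\bigr]}_{T_2} + \underbrace{\bigl[(f,\mu_{k-1}^N) - (f,\mu_{k-1})\bigr]}_{T_3},
\end{equation*}
and bound the $L^p$ norm of the left-hand side by $\|T_1\|_p+\|T_2\|_p+\|T_3\|_p$.

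For $T_1$, conditionally on $\mathcal{G}_{k-1}^N$ the summands $f(\tilde{\theta}_k^{(i)})-\int f\,\ud\kappa_k^N(\cdot\mid\theta_{k-1}^{(i)})$ are independent, zero-mean, and uniformly bounded by $2\|f\|_\infty$. A conditional Marcinkiewicz--Zygmund inequality followed by the tower property gives $\|T_1\|_p \leq C_p\|f\|_\infty/\sqrt{N}$. For $T_2$, writing it as $\tfrac{1}{N}\sum_i \bigl[\int f\,\ud\kappa_k^N(\cdot\mid\theta_{k-1}^{(i)})-f(\theta_{k-1}^{(i)})\bigr]$ and applying \eqref{A1_1} pointwise yields the $\omega$-wise bound $e_{1,k}/\sqrt{N}$, hence the same $L^p$ bound. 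For $T_3$, the induction hypothesis \eqref{pre_mu} gives directly $\|T_3\|_p\leq c_{1,k-1}/\sqrt{N}+d_{1,k-1}\sqrt{\delta}$. Summing produces \eqref{mu_tilde} with $\tilde{c}_{1,k}=C_p\|f\|_\infty+e_{1,k}+c_{1,k-1}$ and $\tilde{d}_{1,k}=d_{1,k-1}$; the jittering step itself introduces no $\sqrt{\delta}$ contribution.

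The argument is essentially a bookkeeping exercise, and no single step is a serious obstacle. The only point requiring care is the conditional Marcinkiewicz--Zygmund bound in $T_1$: one must verify that the newly drawn particles are genuinely conditionally independent given $\mathcal{G}_{k-1}^N$, which is immediate from the way step~\ref{2filteri1}/\ref{2filterii1} of Algorithm~\ref{algoKPF} samples each $\tilde{\theta}_k^{(i)}$ independently from its own kernel $\kappa_k^N(\cdot\mid\theta_{k-1}^{(i)})$. This mirrors the reasoning of Lemma~3 in~\cite{CrisanMiguez}, with the role played there by the inner-layer bound $1/\sqrt{M}$ replaced here by the discretization term $\sqrt{\delta}$, which merely travels through unchanged from the induction hypothesis.
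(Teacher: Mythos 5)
Your proof is correct and follows essentially the same route as the paper, which omits the details and simply points to Lemma~3 of \cite{CrisanMiguez}: that proof is exactly your three-term decomposition into a conditionally independent Monte Carlo fluctuation of order $N^{-1/2}$, a kernel-displacement bias controlled pointwise by \eqref{A1_1}, and the inherited induction error, with the $\sqrt{\delta}$ term passing through untouched. No gaps.
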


\subsection{Update}

To prove the convergence in the update step~\ref{2filter1iii}, we first need to prove that the error introduced from the approximation of the weights $w_k^{\tilde{\theta}_{k}^{(i)}}$ by the weights $\hat{w}_k^{\tilde{\theta}_{k}^{(i)}}$, can be bounded by a desired quantity as in (\ref{mu_tilde}). The following lemma is a core result in our convergence analysis, in the proof of it we exploit the affine nature of the state process.

\begin{lem}\label{lemma2}
Let the observation sequence $y_{1:k}$ be fixed. Suppose function $f$ is bounded and continuous and Assumptions~\ref{ass1},\ref{ass2}, and~\ref{ass3} hold. If
\begin{equation}\label{pre_Gamma}
\sup_{1\leq i\leq N} \abs{(f,\hat{\Gamma}_{k-1}^{\theta_{k-1}^{(i)}})-(f,{\Gamma}_{k-1}^{\theta_{k-1}^{(i)}})}\leq \frac{c_{2,{k-1}}}{\sqrt{N}}+d_{2,{k-1}}\sqrt{\delta}\,,
\end{equation}
for some constants $c_{2,{k-1}}$ and $d_{2,{k-1}}$ which are independent of $N$ and $\delta$, then there exist constants $\tilde{c}_{2,{k}}$ and $\tilde{d}_{2,k}$ which are independent of $N$ and $\delta$ such that
\begin{equation}\label{gamma}
\sup_{1\leq i\leq N} \abs{((f,\hat{\pi}_k^{\tilde{\theta}_{k}^{(i)}}), \hat{\Gamma}_{k-1}^{\theta_{k-1}^{(i)}})
-((f,\pi_k^{\tilde{\theta}_{k}^{(i)}}),{\Gamma}_{k-1}^{\tilde{\theta}_{k}^{(i)}})}\leq \frac{\tilde{c}_{2,{k}}}{\sqrt{N}}+\tilde{d}_{2,{k}}\sqrt{\delta}\,.\\
\end{equation}
\end{lem}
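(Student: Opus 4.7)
The plan is to insert two intermediate quantities and apply the triangle inequality to split
\[
E_i := \bigl((f,\hat{\pi}_k^{\tilde{\theta}_{k}^{(i)}}), \hat{\Gamma}_{k-1}^{\theta_{k-1}^{(i)}}\bigr)
-\bigl((f,\pi_k^{\tilde{\theta}_{k}^{(i)}}),{\Gamma}_{k-1}^{\tilde{\theta}_{k}^{(i)}}\bigr)
\]
into three pieces $E_i = A_i + B_i + C_i$. With the auxiliary test function $g(x):=(f,\pi_k^{\tilde\theta_k^{(i)}}(\cdot\mid x))$, the piece $A_i$ collects the error from replacing the true transition $\pi_k^{\tilde\theta_k^{(i)}}$ by the Gaussian frozen-coefficient transition $\hat\pi_k^{\tilde\theta_k^{(i)}}$ while integrating against the common measure $\hat\Gamma_{k-1}^{\theta_{k-1}^{(i)}}$; the piece $B_i = (g, \hat\Gamma_{k-1}^{\theta_{k-1}^{(i)}}) - (g, \Gamma_{k-1}^{\theta_{k-1}^{(i)}})$ isolates the Kalman bias on $\Gamma_{k-1}$; and $C_i = (g, \Gamma_{k-1}^{\theta_{k-1}^{(i)}}) - (g, \Gamma_{k-1}^{\tilde\theta_k^{(i)}})$ isolates the change-of-parameter error introduced by jittering.

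The term $B_i$ is immediate: $g$ is bounded by $\|f\|_\infty$ and continuous (via weak continuity of $\pi_k^{\tilde\theta}$ in the initial condition, which holds for the affine dynamics), so applying hypothesis~\eqref{pre_Gamma} to $g$ yields $|B_i|\leq\|f\|_\infty\bigl(c_{2,k-1}/\sqrt N + d_{2,k-1}\sqrt\delta\bigr)$. The term $C_i$ is controlled by Assumption~\ref{ass2}, which, applied to the bounded continuous $g$, gives $|C_i|\leq e_{3,k-1}\|\tilde\theta_k^{(i)}-\theta_{k-1}^{(i)}\|$; the kernel moment bound~\eqref{A1_2} of Assumption~\ref{ass1} then produces the $1/\sqrt N$ rate for this random displacement, either after taking $L^p$-norms or, by Jensen's inequality applied to the conditional moment with exponent $p=1$, in expectation.

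The genuine obstacle is the diffusion-discretization term $A_i$, and for this I would exploit the affine structure of~\eqref{X}--\eqref{Xhat}. Subtracting the two stochastic integrals leaves
\[
x_k-\check{x}_k = \int_{t_{k-1}}^{t_k}\e^{-A(t_k-u)}\tilde{\Sigma}\bigl(\sqrt{x_u^{(1)}}-\sqrt{x_{k-1}^{(1)}}\bigr)\,\ud W_u .
\]
Using It\^o's isometry, the elementary inequality $|\sqrt a-\sqrt b|^2\leq|a-b|$ valid for $a,b\geq 0$, a standard SDE continuity-in-time estimate of $\mathbb{E}|x_u^{(1)}-x_{k-1}^{(1)}|$ on the short interval $[t_{k-1},t_k]$, together with Assumption~\ref{ass3} controlling the moments of $x_u^{(1)}$ uniformly in $u$ and $\theta\in D_\theta$, one obtains $\mathbb{E}\|x_k-\check{x}_k\|^2 \leq C\delta$ and hence $\mathbb{E}\|x_k-\check{x}_k\|=\mathcal{O}(\sqrt\delta)$. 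Converting this metric bound into a bound on $\mathbb{E}|f(x_k)-f(\check{x}_k)|$ is immediate when $f$ is Lipschitz; for merely bounded continuous $f$ a mollification argument combined with the moment estimates recovers the same $\sqrt\delta$ order. Integrating against the probability measure $\hat\Gamma_{k-1}^{\theta_{k-1}^{(i)}}$ finally gives $|A_i|\leq\tilde d_A\sqrt\delta$. Summing the three pieces and relabelling constants produces~\eqref{gamma}.
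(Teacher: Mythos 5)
Your overall architecture coincides with the paper's: the error is telescoped into a Kalman-bias piece, a change-of-parameter piece, and a discretization piece. The first two are handled exactly as in the paper (which packages them into an auxiliary lemma following Lemma~4 of \cite{CrisanMiguez}): the hypothesis \eqref{pre_Gamma} applied to a bounded test function of the form $g(x)=(f,\pi_k^{\tilde\theta_k^{(i)}}(\cdot\mid x))$, then Assumption~\ref{ass2} combined with the kernel moment bound \eqref{A1_2}. Your It\^o-isometry estimate $\mathbb{E}\norm{x_k-\check x_k}^2\le C\delta$, via $\abs{\sqrt a-\sqrt b}^2\le\abs{a-b}$ and Assumption~\ref{ass3}, is also sound (and arguably cleaner than the paper's Burkholder--Davis--Gundy computation, which produces the first-moment bound $\mathbb{E}\abs{\check x_k-x_k}\le C_1 x_{k-1}^{(1)}\sqrt\delta$).

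The gap is the final step of the discretization term. For $f$ merely bounded and continuous, the assertion that ``a mollification argument recovers the same $\sqrt\delta$ order'' is not correct: mollifying at scale $\eta$ gives $\mathbb{E}\abs{f(x_k)-f(\check x_k)}\le 2\omega_f(\eta)+C\eta^{-1}\mathbb{E}\norm{x_k-\check x_k}$, with $\omega_f$ the modulus of continuity of $f$, and optimizing over $\eta$ yields a rate dictated by $\omega_f$, not $\sqrt\delta$; the $\sqrt\delta$ rate survives only under a Lipschitz or H\"older hypothesis on $f$. The paper instead uses an event splitting: it introduces $A_\delta=\{\abs{\check x_k-x_k}<\delta\}$ and $B_{M_1}=\{\abs{\check x_k}\le M_1,\ \abs{x_k}\le M_1\}$ with $M_1=M/\sqrt\delta$, bounds the contribution of $B_{M_1}^c$ by $2\norm{f}_\infty\sqrt\delta$ via Markov's inequality and Assumption~\ref{ass3}, bounds the contribution of $A_\delta^c$ by Markov's inequality applied to $\mathbb{E}\abs{\check x_k-x_k}$, and invokes uniform continuity of $f$ on the remaining compact event. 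Some version of this splitting (or an explicit Lipschitz assumption on $f$) is needed to close your argument; as written, the claim for general bounded continuous $f$ does not follow. (The point is genuinely delicate even in the paper, whose final constant $C_1M/\delta_\epsilon$ involves a threshold $\delta_\epsilon$ that is itself tied to $\delta$ through the choice $\epsilon=\sqrt\delta$ and the growing set $B_{M_1}$; but the event-splitting device is the missing ingredient in your proposal.)
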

\noindent
To prove Lemma~\ref{lemma2}, we exploit the structure (\ref{X}) of the state process. Besides, we need the following auxiliary result. The proof of it follows the same lines as the proof of Lemma~4 in \cite{CrisanMiguez}, but note again the $\frac{1}{\sqrt{M}}$ term is replaced by $\sqrt{\delta}$.
\begin{lem}
Suppose the function $f$ is bounded and continuous. Moreover, suppose Assumptions~\ref{ass1} and \ref{ass2} and Inequality \eqref{pre_Gamma} hold.
 Then there exist some constants $\tilde{c}_{2,{k-1}}$ and $\tilde{d}_{2,{k-1}}$ which are independent of $N$, $\delta$ and of all $\theta$ such that
\begin{equation}\label{tilde_Gamma}
\sup_{1\leq i\leq N} \abs{(f,\hat{\Gamma}_{k-1}^{\theta_{k-1}^{(i)}})-(f,{\Gamma}_{k-1}^{\tilde{\theta}_{k}^{(i)}})}\leq \frac{\tilde{c}_{2,{k-1}}}{\sqrt{N}}+\tilde{d}_{2,{k}}\sqrt{\delta}\,.
\end{equation}
\end{lem}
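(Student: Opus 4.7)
The strategy is a triangle inequality split that reduces the comparison between $\hat{\Gamma}_{k-1}^{\theta_{k-1}^{(i)}}$ and $\Gamma_{k-1}^{\tilde{\theta}_{k}^{(i)}}$ to a ``filter accuracy'' gap at the \emph{same} parameter, plus a ``parameter stability'' gap of the true posterior as the parameter is jittered. Concretely, I would write
\[
\abs{(f,\hat{\Gamma}_{k-1}^{\theta_{k-1}^{(i)}}) - (f,\Gamma_{k-1}^{\tilde{\theta}_{k}^{(i)}})}
\leq \abs{(f,\hat{\Gamma}_{k-1}^{\theta_{k-1}^{(i)}}) - (f,\Gamma_{k-1}^{\theta_{k-1}^{(i)}})}
+ \abs{(f,\Gamma_{k-1}^{\theta_{k-1}^{(i)}}) - (f,\Gamma_{k-1}^{\tilde{\theta}_{k}^{(i)}})}.
\]
The first summand compares the Kalman approximation to the true posterior at the common parameter $\theta_{k-1}^{(i)}$, and is handled directly by the standing hypothesis \eqref{pre_Gamma}, which uniformly in $i$ yields the bound $\frac{c_{2,k-1}}{\sqrt{N}} + d_{2,k-1}\sqrt{\delta}$. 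The second summand is a pure parameter-continuity statement for the exact posterior, and is where Assumptions~\ref{ass1} and \ref{ass2} enter.

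For the second summand, Assumption~\ref{ass2} gives the Lipschitz estimate
\[
\abs{(f,\Gamma_{k-1}^{\theta_{k-1}^{(i)}}) - (f,\Gamma_{k-1}^{\tilde{\theta}_{k}^{(i)}})} \leq e_{3,k-1}\,\norm{\tilde{\theta}_{k}^{(i)} - \theta_{k-1}^{(i)}}.
\]
Since $\tilde{\theta}_{k}^{(i)}$ is sampled from the jittering kernel $\kappa_k^N(\ud\theta\mid \theta_{k-1}^{(i)})$, the inequality \eqref{A1_2} of Assumption~\ref{ass1} (applied with $p=1$, or for general $p\geq 1$ through Jensen) gives the conditional moment bound
\[
\mathbb{E}\bigl[\,\norm{\tilde{\theta}_{k}^{(i)} - \theta_{k-1}^{(i)}}\,\big|\,\theta_{k-1}^{(i)}\bigr]\leq \frac{e_{2,k}}{\sqrt{N}},
\]
uniformly over $\theta_{k-1}^{(i)}\in D_\theta$ by compactness of $D_\theta$. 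Integrating against the jittering kernel (and in the $L^p$ interpretation of the statement consistent with the rest of the paper) thus contributes $\frac{e_{3,k-1}\,e_{2,k}}{\sqrt{N}}$ to the supremum over $i$.

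Adding the two contributions and setting $\tilde{c}_{2,k-1} := c_{2,k-1} + e_{3,k-1}\,e_{2,k}$ and $\tilde{d}_{2,k} := d_{2,k-1}$ yields \eqref{tilde_Gamma}. The main subtlety I expect is reconciling the deterministic-looking supremum over $i$ on the left-hand side with the randomness injected by jittering: the bound obtained from \eqref{A1_2} is a \emph{conditional} moment estimate, so a clean proof must read the inequality in an averaged or $L^p$ sense and exploit the fact that the kernel's bound is uniform over the starting point in the compact set $D_\theta$. Because the same kernel is used for every particle and the constant $e_{2,k}$ does not depend on $\theta_{k-1}^{(i)}$, this uniformity is available and the two summands combine into a single estimate of the required form $\frac{\tilde{c}_{2,k-1}}{\sqrt{N}} + \tilde{d}_{2,k}\sqrt{\delta}$. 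No discretization estimate on the state process itself is needed at this step, because the $\sqrt{\delta}$ term is inherited verbatim from \eqref{pre_Gamma}; the role of the jittering analysis here is solely to produce the $N^{-1/2}$ contribution.
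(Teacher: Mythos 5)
Your argument is essentially the paper's: the paper gives no written proof of this lemma but defers to Lemma~4 of Crisan and M\'iguez, whose proof is exactly your decomposition --- a triangle inequality through $\Gamma_{k-1}^{\theta_{k-1}^{(i)}}$, the hypothesis \eqref{pre_Gamma} for the filter-accuracy term, and Assumption~\ref{ass2} combined with the jittering moment bound \eqref{A1_2} for the parameter-stability term. The subtlety you flag is the right one: a conditional first-moment bound per particle does not by itself control $\sup_{1\leq i\leq N}\norm{\tilde{\theta}_k^{(i)}-\theta_{k-1}^{(i)}}$, and the clean resolution is to read the estimate in $L^p$ and invoke \eqref{A1_2} for exponents large enough (or a kernel with the stronger decay mentioned after Assumption~\ref{ass1}) that the factor $N^{1/p}$ arising from the union over the $N$ particles is absorbed, exactly as in the cited reference.
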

\noindent
Now we are ready to prove Lemma~\ref{lemma2}.
\begin{proof}
Using the triangle inequality, one obtains

\begin{align}\label{lemma2_term12}
\lefteqn{\sup_{1\leq i\leq N}\abs{((f,\hat{\pi}_k^{\tilde{\theta}_{k}^{(i)}}), \hat{\Gamma}_{k-1}^{\theta_{k-1}^{(i)}})
-((f,\pi_k^{\tilde{\theta}_{k}^{(i)}}),{\Gamma}_{k-1}^{\tilde{\theta}_{k}^{(i)}})}\leq} \nonumber\\
& \qquad\qquad
\sup_{1\leq i\leq N}\abs{((f,\hat{\pi}_k^{\tilde{\theta}_{k}^{(i)}}), \hat{\Gamma}_{k-1}^{\theta_{k-1}^{(i)}})
-((f,\hat{\pi}_k^{\tilde{\theta}_{k}^{(i)}}), {\Gamma}_{k-1}^{\tilde{\theta}_{k}^{(i)}})}
\nonumber\\
& \qquad\qquad
+\sup_{1\leq i\leq N}\abs{((f,\hat{\pi}_k^{\tilde{\theta}_{k}^{(i)}}), {\Gamma}_{k-1}^{\tilde{\theta}_{k}^{(i)}})
-((f,\pi_k^{\tilde{\theta}_{k}^{(i)}}),{\Gamma}_{k-1}^{\tilde{\theta}_{k}^{(i)}})}\,.
\end{align}
Note that $(f,\hat{\pi}_k^{\tilde{\theta}_{k}^{(i)}})$ is bounded by $\norm{f}_\infty$. Hence, using Inequality (\ref{tilde_Gamma}), we get for the first term on the right hand side of (\ref{lemma2_term12})
\begin{equation}\label{lemma2_fisrt}
\sup_{1\leq i\leq N}\abs{((f,\hat{\pi}_k^{\tilde{\theta}_{k}^{(i)}}), \hat{\Gamma}_{k-1}^{\theta_{k-1}^{(i)}})
-((f,\hat{\pi}_k^{\tilde{\theta}_{k}^{(i)}}), {\Gamma}_{k-1}^{\tilde{\theta}_{k}^{(i)}})}\leq \frac{\tilde{c}_{2,{k-1}}}{\sqrt{N}}+\tilde{d}_{2,{k}}\sqrt{\delta}\,.
\end{equation}
Recall that $(x_k)_{k\in \mathbb{N}}$ and $(\check{x}_k)_{k\in \mathbb{N}}$ respectively from \eqref{X} and \eqref{Xhat}. For  $\delta$, $M_1>0$, define the sets

\begin{equation*}
\begin{aligned}
A_\delta &= \left\{\abs{\check{x}_k^{\tilde{\theta}_{k}^{(i)}}-x_k^{\tilde{\theta}_{k}^{(i)}}}<\delta,\, 1\leq i\leq N \right\}\,, \\
B_{M_1}  &= \left\{\abs{\check{x}_k^{\tilde{\theta}_{k}^{(i)}}}\leq M_1, \,\,\abs{{x}_k^{\tilde{\theta}_{k}^{(i)}}}\leq M_1,\, 1\leq i\leq N\right\}\,.
\end{aligned}
\end{equation*}
Note that $(f,\hat{\pi}_k^{\tilde{\theta}_{k}^{(i)}})$ can be seen as the (conditional) expectation of $f(.)$ taken under the measure $\hat{\pi}_k^{\tilde{\theta}_{k}^{(i)}}$. Stated otherwise, we can see it as the expectation of $f(\check{x}_k^{\tilde{\theta}_{k}^{(i)}})$. Likewise, we can see $(f,\pi_k^{\tilde{\theta}_{k}^{(i)}})$ as the (conditional) expectation of $f({x}_k^{\tilde{\theta}_{k}^{(i)}})$. Below we use the notations $\mathbb{E}f(\check{x}_k^{\tilde{\theta}_{k}^{(i)}})$ and $\mathbb{E}f({x}_k^{\tilde{\theta}_{k}^{(i)}})$ for these expectations.
With these interpretations, the second term on the right hand side of (\ref{lemma2_term12}) yields
\begin{align}
&\sup_{1\leq i\leq N}\abs{\left((f,\hat{\pi}_k^{\tilde{\theta}_{k}^{(i)}}), {\Gamma}_{k-1}^{\tilde{\theta}_{k}^{(i)}}\right)
-\left((f,\pi_k^{\tilde{\theta}_{k}^{(i)}}),{\Gamma}_{k-1}^{\tilde{\theta}_{k}^{(i)}}\right)}\nonumber\\
 &\qquad =
\sup_{1\leq i\leq N}\abs{\left((f,\hat{\pi}_k^{\tilde{\theta}_{k}^{(i)}})-(f,\pi_k^{\tilde{\theta}_{k}^{(i)}}), {\Gamma}_{k-1}^{\tilde{\theta}_{k}^{(i)}}\right)}\nonumber\\
&\qquad \leq \sup_{1\leq i\leq N}{\left(\abs{(f,\hat{\pi}_k^{\tilde{\theta}_{k}^{(i)}})-(f,\pi_k^{\tilde{\theta}_{k}^{(i)}})}, {\Gamma}_{k-1}^{\tilde{\theta}_{k}^{(i)}}\right)}\nonumber\\
&\qquad = \sup_{1\leq i\leq N}{\left(\abs{\mathbb{E}f(\check{x}_k^{\tilde{\theta}_{k}^{(i)}})-\mathbb{E}f({x}_k^{\tilde{\theta}_{k}^{(i)}})}, {\Gamma}_{k-1}^{\tilde{\theta}_{k}^{(i)}}\right)}\nonumber\\
&\qquad \leq \sup_{1\leq i\leq N}\left(\mathbb{E}\abs{f(\check{x}_k^{\tilde{\theta}_{k}^{(i)}})-f({x}_k^{\tilde{\theta}_{k}^{(i)}})}, {\Gamma}_{k-1}^{\tilde{\theta}_{k}^{(i)}}\right)\nonumber\\
&\qquad = \sup_{1\leq i\leq N}{\left(\mathbb{E}\mathbf{1}_{A_\delta}\mathbf{1}_{B_{M_1}^c}\abs{f(\check{x}_k^{\tilde{\theta}_{k}^{(i)}})-f({x}_k^{\tilde{\theta}_{k}^{(i)}})}, {\Gamma}_{k-1}^{\tilde{\theta}_{k}^{(i)}}\right)}\nonumber\\
&\qquad \qquad  + \sup_{1\leq i\leq N}{\left(\mathbb{E}\mathbf{1}_{A_\delta}\mathbf{1}_{B_{M_1}}\abs{f(\check{x}_k^{\tilde{\theta}_{k}^{(i)}})-f({x}_k^{\tilde{\theta}_{k}^{(i)}})}, {\Gamma}_{k-1}^{\tilde{\theta}_{k}^{(i)}}\right)}\nonumber\\
&\qquad \qquad  + \sup_{1\leq i\leq N}{\left(\mathbb{E}\mathbf{1}_{A_\delta^c}\abs{f(\check{x}_k^{\tilde{\theta}_{k}^{(i)}})-f({x}_k^{\tilde{\theta}_{k}^{(i)}})}, {\Gamma}_{k-1}^{\tilde{\theta}_{k}^{(i)}}\right)}.\label{lemma2_term2}
\end{align}
We need to find an upper bound for the three terms on the right hand side of (\ref{lemma2_term2}).

Consider the first term and define $M_1=M/\sqrt{\delta}$. Using Assumption~\ref{ass3} and the Markov inequality, we get
\begin{equation}\label{lemma2_21}
\begin{aligned}
\sup_{1\leq i\leq N}\mathbb{E}\mathbf{1}_{A_\delta}\mathbf{1}_{B_{M_1}^c}\abs{f(\check{x}_k^{\tilde{\theta}_{k}^{(i)}})-f({x}_k^{\tilde{\theta}_{k}^{(i)}})}
&\leq 2\norm{f}_\infty \sup_{1\leq i\leq N}\mathbb{P}\left(\abs{x_k^{\tilde{\theta}_{k}^{(i)}}}>M_1\right)\\
&\leq 2\norm{f}_\infty \frac{\sup_{1\leq i\leq N}\mathbb{E}\abs{x_k^{\tilde{\theta}_{k}^{(i)}}}}{M_1}\\
&\leq 2\norm{f}_\infty \sqrt{\delta}\,.
\end{aligned}
\end{equation}
Substitute (\ref{lemma2_21}) into the first term on the right hand side of (\ref{lemma2_term2}) to obtain
\begin{equation}\label{lemma2_1}
\begin{aligned}
\sup_{1\leq i\leq N}{\left(\mathbb{E}\mathbf{1}_{A_\delta}\mathbf{1}_{B_{M_1}^c}\abs{f(\check{x}_k^{\tilde{\theta}_{k}^{(i)}})-f({x}_k^{\tilde{\theta}_{k}^{(i)}})}, {\Gamma}_{k-1}^{\tilde{\theta}_{k}^{(i)}}\right)}
&\leq \sup_{1\leq i\leq N}{\left(2\norm{f}_\infty \sqrt{\delta}, {\Gamma}_{k-1}^{\tilde{\theta}_{k}^{(i)}}\right)}\\
&\leq 2\norm{f}_\infty \sqrt{\delta}\,.
\end{aligned}
\end{equation}
\noindent
Next we consider the second term. Note that the random variables $\check{x}_k^{\tilde{\theta}_{k}^{(i)}}$ and ${x}_k^{\tilde{\theta}_{k}^{(i)}}$ restricted to  $B_{M_1}$ take values in a compact set. Hence the continuous function $f$ is also uniformly continuous on that set. Define $\epsilon=\sqrt{\delta}$, then there exists a $\delta_{\epsilon}$ such that $\abs{f(x)-f(y)}\leq \sqrt{\delta}$, for all $\abs{x-y}<\delta_\epsilon$. Define $\delta = \delta_{\epsilon}$. We obtain for the second term on the right hand side of (\ref{lemma2_term2}), using the definition of $A_\delta$,
\begin{equation*}
\sup_{1\leq i\leq N}\mathbb{E}\mathbf{1}_{A_{\delta_{\epsilon}}}\mathbf{1}_{B_{M_1}}\abs{f(\check{x}_k^{\tilde{\theta}_{k}^{(i)}})-f({x}_k^{\tilde{\theta}_{k}^{(i)}})}\leq \sqrt{\delta}\,,
\end{equation*}
which implies
\begin{equation}\label{lemma2_2}
\sup_{1\leq i\leq N}{\left(\mathbb{E}\mathbf{1}_{A_{\delta_{\epsilon}}}\mathbf{1}_{B_{M_1}}\abs{f(\check{x}_k^{\tilde{\theta}_{k}^{(i)}})-f({x}_k^{\tilde{\theta}_{k}^{(i)}})}, {\Gamma}_{k-1}^{\tilde{\theta}_{k}^{(i)}}\right)}\leq \sqrt{\delta}\,.
\end{equation}
\noindent For the last term on the right hand side of (\ref{lemma2_term2}), we apply again the Markov inequality, to obtain
\begin{equation*}
\begin{aligned}
\mathbb{E}\mathbf{1}_{A_{\delta_\epsilon}^c}\abs{f(\check{x}_k^{\tilde{\theta}_{k}^{(i)}})-f({x}_k^{\tilde{\theta}_{k}^{(i)}})}
&\leq 2\norm{f}_\infty \mathbb{P}(A_{\delta_\epsilon}^c)\\
&\leq 2\norm{f}_\infty \frac{\mathbb{E}\abs{\check{x}_k^{\tilde{\theta}_{k}^{(i)}}-{x}_k^{\tilde{\theta}_{k}^{(i)}}}}{\delta_\epsilon}\,.\\
\end{aligned}
\end{equation*}

\noindent We denote the $i$-th component of an $\mathbb{R}^d$-valued process $(x_k)_{k\in \mathbb{N}}$ by $(x^{(i)}_k)_{k\in \mathbb{N}}$. Furthermore, we temporarily suppress the dependence on $\tilde{\theta}_{k}^{(i)}$ in the notation.
Given $x_{k-1}$, according to Equations~(\ref{X}) and (\ref{Xhat}), we obtain
\begin{equation*}
\begin{aligned}
\mathbb{E}\abs{\check{x}_k^{(i)}-{x}_k^{(i)}}
& = \e^{-\alpha_i(t_k-t_{k-1})}\mathbb{E}\abs{\int_{t_{k-1}}^{t_{k}}\e^{\alpha_iu}\sum_{j=1}^{d}\Sigma_{ij}\left(\sqrt{x_{u}^{(1)}}-\sqrt{x_{k-1}^{(1)}}\right)\ud W_u^{(j)}}\\
&\leq \mathbb{E}\abs{\int_{t_{k-1}}^{t_{k}}\e^{\alpha_iu}\sum_{j=1}^{d}\Sigma_{ij}\left(\sqrt{x_{u}^{(1)}}-\sqrt{x_{k-1}^{(1)}}\right)\ud W^{(j)}_u}\,.
\end{aligned}
\end{equation*}
Define $\Sigma^{(i)}=\sqrt{\sum_{j=1}^{d}(\Sigma_{ij})^2}$. Using the Burkholder-Davis-Gundy inequality, we know there exists a constant $C$ which does not depend on $(x_t)_{t\geq0}$ such that, for every $i=1,\cdots,d$,
\begin{align*}
\mathbb{E}\abs{\int_{t_{k-1}}^{t_{k}}\e^{\alpha_iu}\sum_{j=1}^{d}\Sigma_{ij}\left(\sqrt{x_{u}^{(1)}}-\sqrt{x_{k-1}^{(1)}}\right)\,\ud W_u}
&\leq C\Sigma^{(i)} \mathbb{E}\left(\int_{t_{k-1}}^{t_{k}}\e^{2\alpha_iu}\left(\sqrt{x_u^{(1)}}-\sqrt{x_{k-1}^{(1)}}\right)^2\,\ud u\right)^{\frac{1}{2}}\\
&\leq C\Sigma^{(i)} \mathbb{E}\left(\int_{t_{k-1}}^{t_{k}}\e^{2\alpha_iu}\left(x_u^{(1)}+x_{k-1}^{(1)}\right)\,\ud u\right)^{\frac{1}{2}}\,.
\end{align*}
Using Jensen's inequality and Fubini's theorem, we get for the latter expectation
\begin{align*}
\lefteqn{\mathbb{E}\left(\int_{t_{k-1}}^{t_{k}}\e^{2\alpha_iu}\left(x_u^{(1)}+x_{k-1}^{(1)}\right)\,\ud u\right)^{\frac{1}{2}}}\\
&\qquad \leq  \left(\mathbb{E}\int_{t_{k-1}}^{t_{k}}\e^{2\alpha_iu}\left(x_u^{(1)}+x_{k-1}^{(1)}\right)\,\ud u\right)^{\frac{1}{2}}\\
&\qquad=  \left(\int_{t_{k-1}}^{t_{k}}\e^{2\alpha_iu}\mathbb{E}\left(x_u^{(1)}+x_{k-1}^{(1)}\right)\,\ud u\right)^{\frac{1}{2}}\\
&\qquad=  \left(\int_{t_{k-1}}^{t_{k}}\e^{2\alpha_i u}\left[(1+\e^{-\alpha_1(u-t_{k-1})})x_{k-1}^{(1)}+(1-e^{-\alpha_1(u-t_{k-1})})\beta_1\right] \ud u\right)^{\frac{1}{2}}\\
&\qquad= \left(\frac{x_{k-1}^{(1)}+\beta_1}{2\alpha_i}\left(\e^{2\alpha_i\delta}-1\right)e^{2\alpha_it_{k-1}}+\frac{x_{k-1}^{(1)}-\beta_1}{2\alpha_i-\alpha_1}\left(\e^{(2\alpha_i-\alpha_1)\delta}-1\right)e^{2\alpha_it_{k-1}}\right)^{\frac{1}{2}}.
\end{align*}
Note that $\e^{x}-1=O(x)$ if $x\rightarrow0$. Since the parameters are assumed to have a compact domain, we conclude there exists a constant $C_1$ which is independent of the parameters and such that
\begin{equation*}
\mathbb{E}\abs{\check{x}_k-{x}_k}\leq C_1x_{k-1}^{(1)}\sqrt{\delta}\,.
\end{equation*}
Hence, returning to previously used notation,
\begin{equation*}
\mathbb{E}\mathbf{1}_{A_{\delta_\epsilon}}^c\abs{f(\check{x}_k^{\tilde{\theta}_{k}^{(i)}})-f({x}_k^{\tilde{\theta}_{k}^{(i)}})}\leq \frac{C_1}{\delta_\epsilon}x_{k-1}^{{(1)},\tilde{\theta}_{k}^{(i)}}\sqrt{\delta}\,,
\end{equation*}
which implies
\begin{align}
\sup_{1\leq i\leq N}{\left(\mathbb{E}\mathbf{1}_{A_\delta^c}\abs{f(\check{x}_k^{\tilde{\theta}_{k}^{(i)}})-f({x}_k^{\tilde{\theta}_{k}^{(i)}})}, {\Gamma}_{k-1}^{\tilde{\theta}_{k}^{(i)}}\right)}
&\leq \frac{C_1}{\delta_\epsilon}\sqrt{\delta}\sup_{1\leq i\leq N}\mathbb{E}\abs{x_{k-1}^{{(1)},\tilde{\theta}_{k}^{(i)}}}\nonumber\\
&\leq \frac{C_1M}{\delta_\epsilon}\sqrt{\delta}\,.\label{lemma2_3}
\end{align}
\noindent Combining (\ref{lemma2_1}),(\ref{lemma2_2}) and (\ref{lemma2_3}) together with (\ref{lemma2_fisrt}), we prove the statement of the lemma.
\end{proof}

\noindent Lemma~\ref{lemma2} shows that the approximation error of $\hat{\pi}_k^{\tilde{\theta}_k^{(i)}}$ and $\hat{\Gamma}_{k-1}^{{\theta}_{k-1}^{(i)}}$ can be controlled in an appropriate manner, guaranteeing $\Sigma(\theta_{k-1})$ below a threshold value $V_N$. This allows us to run the outer layer in Algorithm~\ref{algoKPF} recursively, see step~\ref{once}. Adding Assumption~\ref{ass4}, we present in the following lemma the convergence of $\hat{\mu}_k^N$.

\begin{lem}\label{update}
Let the observation sequence $y_{1:k}$ be fixed and Assumptions~\ref{ass1}, \ref{ass2}, \ref{ass3} and \ref{ass4} hold. Then for any bounded and continuous function $f$, if
\begin{equation*}
\norm{(f,{\mu}_{k-1}^N)-(f,\mu_{k-1})}_p  \leq \frac{{c}_{1,{k-1}}}{\sqrt{N}}+{d}_{1,{k-1}}\sqrt{\delta}\,,
\end{equation*}
and
\begin{equation*}
\sup_{1\leq i\leq N} \abs{(f,\hat{\Gamma}_{k-1}^{\theta_{k-1}^{(i)}})-(f,{\Gamma}_{k-1}^{\theta_{k-1}^{(i)}})}\leq \frac{c_{2,{k-1}}}{\sqrt{N}}+d_{2,{k-1}}\sqrt{\delta}\,,
\end{equation*}
hold for some constants ${c}_{1,{k-1}},{d}_{1,{k-1}},c_{2,{k-1}}$ and $d_{2,{k-1}}$ which are independent of $N$ and $\delta$, then there exist constants $\hat{c}_{1,{k}}, \hat{d}_{1,k}, \tilde{c}_{2,{k}}$ and $\tilde{d}_{2,k}$ which are independent of $N$ and $\delta$  such that
\begin{align*}
\norm{(f,\hat{\mu}_{k}^N)-(f,\mu_{k})}_p & \leq \frac{\hat{c}_{1,{k}}}{\sqrt{N}}+\hat{d}_{1,{k}}\sqrt{\delta}\,,\\
\sup_{1\leq i\leq N} \abs{(f,\hat{\Gamma}_k^{\tilde{\theta}_{k}^{(i)}})-(f,{\Gamma}_{k}^{\tilde{\theta}_{k}^{(i)}})}&\leq \frac{\tilde{c}_{2,{k}}}{\sqrt{N}}+\tilde{d}_{2,{k}}\sqrt{\delta}\,.
\end{align*}
\end{lem}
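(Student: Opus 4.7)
The strategy is to write both $(f,\hat\mu_k^N)$ and $(f,\mu_k)$ as ratios sharing a common structure, then control numerator and denominator separately by combining Lemmas~\ref{jittering} and~\ref{lemma2}. Setting $u^\theta := p(y_k\mid y_{1:k-1},\theta) = ((l^\theta_{y_k},\pi_k^\theta),\Gamma_{k-1}^\theta)$, Bayes' rule gives $(f,\mu_k) = (fu,\mu_{k-1})/(u,\mu_{k-1})$. For the particle side, set $\hat u^{\tilde\theta_k^{(i)}} := ((l^{\tilde\theta_k^{(i)}}_{y_k},\hat\pi_k^{\tilde\theta_k^{(i)}}),\hat\Gamma_{k-1}^{\theta_{k-1}^{(i)}})$, so that $\hat w_k^{\tilde\theta_k^{(i)}} = \hat u^{\tilde\theta_k^{(i)}}/\sum_j \hat u^{\tilde\theta_k^{(j)}}$ and $(f,\hat\mu_k^N) = (f\hat u,\tilde\mu_k^N)/(\hat u,\tilde\mu_k^N)$, where $f\hat u$ and $\hat u$ are evaluated at the jittered particles $\tilde\theta_k^{(i)}$.

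\textbf{Numerator.} I decompose
\[
(f\hat u,\tilde\mu_k^N) - (fu,\mu_{k-1}) = \bigl(f(\hat u - u),\tilde\mu_k^N\bigr) + \bigl((fu,\tilde\mu_k^N) - (fu,\mu_{k-1})\bigr).
\]
Lemma~\ref{lemma2} applied with the bounded continuous function $l^{\tilde\theta_k^{(i)}}_{y_k}$ (uniformly bounded in $\theta$ by Assumption~\ref{ass4}) gives
\[
\sup_{1\le i\le N}\bigl|\hat u^{\tilde\theta_k^{(i)}} - u^{\tilde\theta_k^{(i)}}\bigr| \le \frac{\tilde c_{2,k}}{\sqrt N} + \tilde d_{2,k}\sqrt\delta,
\]
so the first summand is bounded pointwise by $\|f\|_\infty$ times this quantity. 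Because $|u^\theta|\le \|l_{y_k}\|_\infty$, the function $\theta\mapsto f(\theta)u^\theta$ is bounded, and Lemma~\ref{jittering} yields an $L^p$ bound of order $1/\sqrt N + \sqrt\delta$ on the second summand.

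\textbf{Denominator and ratio.} Assumption~\ref{ass4} gives $L := \inf_{\theta,x} l^\theta_{y_k}(x) > 0$, so $u^\theta \ge L$ (as an integral of $l^\theta_{y_k}$ against a probability measure), and likewise $\hat u^{\tilde\theta_k^{(i)}}\ge L$; hence both $(u,\mu_{k-1})$ and $(\hat u,\tilde\mu_k^N)$ are deterministically bounded below by $L$. Using the identity $A/B - C/D = (A-C)/B + C(D-B)/(BD)$ with $A=(f\hat u,\tilde\mu_k^N)$, $B=(\hat u,\tilde\mu_k^N)$, $C=(fu,\mu_{k-1})$, $D=(u,\mu_{k-1})$, and the bound $BD\ge L^2$, I obtain $\|(f,\hat\mu_k^N)-(f,\mu_k)\|_p \le \hat c_{1,k}/\sqrt N + \hat d_{1,k}\sqrt\delta$. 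The denominator difference $D-B$ is estimated in exactly the same way as the numerator, applying the decomposition with $f\equiv 1$.

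\textbf{Second inequality and main obstacle.} The bound on $\hat\Gamma_k^{\tilde\theta_k^{(i)}}$ follows the same template: the update formula expresses
\[
(f,\Gamma_k^{\tilde\theta_k^{(i)}}) = \frac{((fl^{\tilde\theta_k^{(i)}}_{y_k},\pi_k^{\tilde\theta_k^{(i)}}),\Gamma_{k-1}^{\tilde\theta_k^{(i)}})}{((l^{\tilde\theta_k^{(i)}}_{y_k},\pi_k^{\tilde\theta_k^{(i)}}),\Gamma_{k-1}^{\tilde\theta_k^{(i)}})},
\]
with hats on both inner integrals for $\hat\Gamma_k^{\tilde\theta_k^{(i)}}$. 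Applying Lemma~\ref{lemma2} once with the bounded continuous function $fl^{\tilde\theta_k^{(i)}}_{y_k}$ and once with $l^{\tilde\theta_k^{(i)}}_{y_k}$ gives $1/\sqrt N + \sqrt\delta$ control, uniformly in $i$, of the numerator and denominator differences; the denominator remains bounded below by $L$, so the $A/B-C/D$ identity yields the stated supremum bound. The main obstacle is the mixed nature of the error terms, a deterministic uniform-in-$i$ bound from Lemma~\ref{lemma2} combined with a random $L^p$ bound from Lemma~\ref{jittering}, and the fact that the function $l^{\tilde\theta_k^{(i)}}_{y_k}$ itself depends on the random particle; both are handled by replacing $\|f\|_\infty$-type prefactors by $\|l_{y_k}\|_\infty=\sup_\theta\|l^\theta_{y_k}\|_\infty$, finite by Assumption~\ref{ass4}, so that all constants absorbing these sup-norms remain independent of $N$ and $\delta$.
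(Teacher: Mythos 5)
Your proposal is correct and follows essentially the same route as the paper: both reduce the first bound to a jittering error handled by Lemma~\ref{jittering} plus a likelihood-approximation error handled by Lemma~\ref{lemma2}, control the denominators from below via Assumption~\ref{ass4}, and conclude with a difference-of-ratios identity, and both prove the second bound by applying Lemma~\ref{lemma2} to $fl^{\tilde\theta_k^{(i)}}_{y_k}$ and $l^{\tilde\theta_k^{(i)}}_{y_k}$. The only (cosmetic) difference is that you work with the unnormalized likelihoods $u,\hat u$ throughout, whereas the paper first derives a uniform bound on the normalized weights $|\hat w_k^{\tilde\theta_k^{(i)}}-w_k^{\tilde\theta_k^{(i)}}|$ before assembling the ratio estimate.
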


\begin{proof}
First, we prove the convergence of the estimated normalized weights $\hat{w}_k^{\tilde{\theta}_k^{(i)}}$, which are used to prove the convergence of the measure $\hat{\mu}_{k}^N$.  Denote the unnormalized weights by $\hat{v}_k^{\tilde{\theta}_k^{(i)}}=((l^{\tilde{\theta}_k^{(i)}}_{y_k},\hat{\pi}_k^{\tilde{\theta}_k^{(i)}}),\hat{\Gamma}_{k-1}^{{\theta}_{k-1}^{(i)}})$, see \eqref{weights-for-theta}, and ${v}_k^{\tilde{\theta}_k^{(i)}}= ((l^{\tilde{\theta}_k^{(i)}}_{y_k},{\pi}_k^{\tilde{\theta}_k^{(i)}}),{\Gamma}_{k-1}^{\tilde{{\theta}}_{k}^{(i)}})$, see \eqref{eq:weights}. Using Lemma~\ref{lemma2}, we have
\begin{equation}\label{non-normalized_weights}
\begin{aligned}
\sup_{1\leq i\leq N}\abs{\hat{v}_k^{\tilde{\theta}_k^{(i)}}-{v}_k^{\tilde{\theta}_k^{(i)}}}
&=\sup_{1\leq i\leq N} \abs{((l_{\tilde{\theta}_k^{(i)}}^{y_k},\hat{\pi}_k^{\tilde{\theta}_k^{(i)}}),\hat{\Gamma}_{k-1}^{{\theta}_{k-1}^{(i)}})
 -((l_{\tilde{\theta}_k^{(i)}}^{y_k},{\pi}_k^{\tilde{\theta}_k^{(i)}}),{\Gamma}_{k-1}^{\tilde{\theta}_{k}^{(i)}})}\\
&\leq  \frac{\tilde{c}_{2,{k}}}{\sqrt{N}}+\tilde{d}_{2,{k}}\sqrt{\delta}\,,
\end{aligned}
\end{equation}
where $\tilde{c}_{2,{k}}$ and $\tilde{d}_{2,{k}}$ are constants which are independent of $N,\delta$ and $\theta$. By Assumption~\ref{ass4}, we obtain
\begin{equation*}
\begin{aligned}
\inf_{\theta\in D_\theta}\hat{v}_k^{\theta},\inf_{\theta\in D_\theta}{v}_k^{\theta}&>0\,,\\
\sup_{\theta\in D_\theta}\hat{v}_k^{\theta},\sup_{\theta\in D_\theta}{v}_k^{\theta}&<\infty\,.
\end{aligned}
\end{equation*}
Hence for the normalized weights, it follows that
\begin{equation}\label{inequal_weights}
\begin{aligned}
\sup_{1\leq i\leq N}\abs{\hat{w}_k^{\tilde{\theta}_k^{(i)}}-{w}_k^{\tilde{\theta}_k^{(i)}}}
&=\sup_{1\leq i\leq N}\abs{\frac{\hat{v}_k^{\tilde{\theta}_k^{(i)}}}{\sum_{i=1}^N \hat{v}_k^{\tilde{\theta}_k^{(i)}}}
-\frac{{v}_k^{\tilde{\theta}_k^{(i)}}}{\sum_{i=1}^N  {v}_k^{\tilde{\theta}_k^{(i)}}} }\\
&\leq \sup_{1\leq i\leq N}\abs{\frac{\hat{v}_k^{\tilde{\theta}_k^{(i)}}}{\sum_{i=1}^N \hat{v}_k^{\tilde{\theta}_k^{(i)}}}
- \frac{\hat{v}_k^{\tilde{\theta}_k^{(i)}}}{\sum_{i=1}^N {v}_k^{\tilde{\theta}_k^{(i)}}}}
+ \sup_{1\leq i\leq N}\abs{\frac{\hat{v}_k^{\tilde{\theta}_k^{(i)}}}{\sum_{i=1}^N {v}_k^{\tilde{\theta}_k^{(i)}}}
- \frac{{v}_k^{\tilde{\theta}_k^{(i)}}}{\sum_{i=1}^N  {v}_k^{\tilde{\theta}_k^{(i)}}}}\\
&\leq \sup_{1\leq i\leq N}\frac{\hat{v}_k^{\tilde{\theta}_k^{(i)}}}{(\sum_{i=1}^N {v}_k^{\tilde{\theta}_k^{(i)}})(\sum_{i=1}^N \hat{v}_k^{\tilde{\theta}_k^{(i)}})}\sum_{i=1}^N\abs{{v}_k^{\tilde{\theta}_k^{(i)}}-\hat{v}_k^{\tilde{\theta}_k^{(i)}}}\\
&\qquad + \sup_{1\leq i\leq N}\frac{1}{\sum_{i=1}^N \hat{v}_k^{\tilde{\theta}_k^{(i)}}}\abs{\hat{v}_k^{\tilde{\theta}_k^{(i)}}-{v}_k^{\tilde{\theta}_k^{(i)}}}\,.
\end{aligned}
\end{equation}
Observe that $\frac{\hat{v}_k^{\tilde{\theta}_k^{(i)}}}{\sum_{i=1}^N {v}_k^{\tilde{\theta}_k^{(i)}}}\leq 1$ and that $\sum_{i=1}^N \hat{v}_k^{\tilde{\theta}_k^{(i)}}$ is bounded from below by a constant times $N$.
Substituting Inequality (\ref{non-normalized_weights}) into Inequality (\ref{inequal_weights}), one obtains that there exist constants $\hat{c}_{2,{k}}$ and $\hat{d}_{2,{k}}$ such that
\begin{equation}\label{weights}
\sup_{1\leq i\leq N}\abs{\hat{w}_k^{\tilde{\theta}_k^{(i)}}-{w}_k^{\tilde{\theta}_k^{(i)}}} \leq  \frac{\hat{c}_{2,{k}}}{\sqrt{N}}+\hat{d}_{2,{k}}\sqrt{\delta}\,.
\end{equation}
\noindent Next we study the convergence of the measures $\hat{\mu}_{k}^N$. For simplification in the notation, we write $\hat{w}_k = \hat{w}_k^{\tilde{\theta}_k^{(i)}}$. Recalling that $w_k = \frac{p(y_k\mid y_{1:k-1},\theta)}{\int p(y_k\mid y_{1:k-1},\theta)d\theta}$, we get from Bayes' rule
\begin{align*}
(f,\mu_{k})        & = \int f(\theta)p(\theta\mid y_{1:k})d\theta\\
                   & =  \int f(\theta)\frac{p(y_k\mid y_{1:k-1},\theta)p(\theta\mid y_{1:k-1})}{\int p(y_k\mid y_{1:k-1},\theta)p(\theta\mid y_{1:k-1})d\theta}d\theta\\
                   & = \frac{(fw_k,\mu_{k-1})}{(w_k,\mu_{k-1})}\,,
\end{align*}
and from \eqref{eq:mu-tilde} and \eqref{eq:mu-hat} we get
\begin{align*}
(f,\hat{\mu}_{k}^N)& =\frac{(f\hat{w}_k,\tilde{\mu}_{k}^N)}{(\hat{w}_k,\tilde{\mu}_{k}^N)}\,.
\end{align*}
Since $w_k$ and $(w_k,\mu_{k-1})\geq 0$, using Assumption~\ref{ass4} and the triangle inequality, we get
\begin{align}
&\norm{(f,\hat{\mu}_{k}^N)-(f,\mu_{k})}_p\nonumber\\
&\quad =\norm{\frac{(f\hat{w}_k,\tilde{\mu}_{k}^N)}{(\hat{w}_k,\tilde{\mu}_{k}^N)}-\frac{(fw_k,\mu_{k-1})}{(w_k,\mu_{k-1})}}_p\nonumber\\
&\quad =\frac{1}{(w_k,\mu_{k-1})}\norm{\frac{(f\hat{w}_k,\tilde{\mu}_{k}^N)}{(\hat{w}_k,\tilde{\mu}_{k}^N)} (w_k,\mu_{k-1}) - (fw_k,\mu_{k-1})}_p\nonumber\\
&\quad \leq \frac{1}{(w_k,\mu_{k-1})}\left(\norm{\frac{(f\hat{w}_k,\tilde{\mu}_{k}^N)}{(\hat{w}_k,\tilde{\mu}_{k}^N)} (w_k,\mu_{k-1}) - (f\hat{w}_k,\tilde{\mu}_{k}^N)}_p+\norm{(f\hat{w}_k,\tilde{\mu}_{k}^N)-(fw_k,\mu_{k-1})}_p   \right)\nonumber\\
& \quad =  \frac{1}{(w_k,\mu_{k-1})}\left( \norm{\frac{(f\hat{w}_k,\tilde{\mu}_{k}^N)[(w_k,\mu_{k-1})-(\hat{w}_k,\tilde{\mu}_{k}^N)]}{(\hat{w}_k,\tilde{\mu}_{k}^N)}}_p
+ \norm{(f\hat{w}_k,\tilde{\mu}_{k}^N)-(fw_k,\mu_{k-1})}_p   \right)  \nonumber\\
&\quad \leq \frac{1}{(w_k,\mu_{k-1})}\left(\norm{f}_\infty\norm{(w_k,\mu_{k-1})-(\hat{w}_k,\tilde{\mu}_{k}^N)}_p+\norm{(f\hat{w}_k,\tilde{\mu}_{k}^N)-(fw_k,\mu_{k-1})}_p   \right)\,.\label{update_mu_inequal}
\end{align}
Hence, we need to find an upper bound for the two quantities $\norm{(w_k,\mu_{k-1})-(\hat{w}_k,\tilde{\mu}_{k}^N)}_p$ and $\norm{(f\hat{w}_k,\tilde{\mu}_{k}^N)-(fw_k,\mu_{k-1})}_p$.  Note that
\begin{equation}\label{eq:mutilde}
\norm{(f\hat{w}_k,\tilde{\mu}_{k}^N)-(fw_k,\mu_{k-1})}_p
\leq \norm{(fw_k,\mu_{k-1})-(fw_k,\tilde{\mu}_{k}^N)}_p+\norm{(fw_k,\tilde{\mu}_{k}^N)-(f\hat{w}_k,\tilde{\mu}_{k}^N)}_p.
\end{equation}
For the first term on the right hand side of \eqref{eq:mutilde}, it follows from Lemma~\ref{jittering} that
\begin{equation}\label{update_mu_term1}
\norm{(fw_k,\mu_{k-1})-(fw_k,\tilde{\mu}_{k}^N)}_p\leq\frac{\tilde{c}_{1,{k}}}{\sqrt{N}}+\tilde{d}_{1,{k}}\sqrt{\delta}\,.
\end{equation}
For the second term on the right hand side of \eqref{eq:mutilde}, we get using (\ref{weights})
\begin{equation*}
\begin{aligned}
\abs{(fw_k,\tilde{\mu}_{k}^N)-(f\hat{w}_k,\tilde{\mu}_{k}^N)}
&=\abs{\frac{1}{N}\sum_{i=1}^Nf(\tilde{\theta}_k^{(i)})\left(\hat{w}_k^{\tilde{\theta}_k^{(i)}}-{w}_k^{\tilde{\theta}_k^{(i)}}\right)}\\
&\leq \frac{\norm{f}_\infty}{N}\sum_{i=1}^N\abs{\hat{w}_k^{\tilde{\theta}_k^{(i)}}-{w}_k^{\tilde{\theta}_k^{(i)}}}\\
&\leq \frac{\tilde{c}_{2,{k}}}{\sqrt{N}}+\tilde{d}_{2,{k}}\sqrt{\delta}\,.
\end{aligned}
\end{equation*}
Inserting the latter together with (\ref{update_mu_term1}) in \eqref{eq:mutilde}, we obtain
\begin{equation}\label{update_mu_term12}
\norm{(f\hat{w}_k,\tilde{\mu}_{k}^N)-(fw_k,\mu_{k-1})}_p\leq \frac{{c}_{k}^\prime}{\sqrt{N}}+{d}_{k}^\prime\sqrt{\delta}\,,
\end{equation}
where ${c}_{k}^\prime$ and ${d}_{k}^\prime$ are constants independent of $N$ and $\delta$. Letting $f=1$ in \eqref{update_mu_term12} implies
\begin{equation}\label{update_mu_term11}
\norm{(w_k,\mu_{k-1})-(\hat{w}_k,\tilde{\mu}_{k}^N)}_p \leq \frac{{c}_{k}^\prime}{\sqrt{N}}+{d}_{{k}^\prime}\sqrt{\delta}\,.
\end{equation}
Therefore substituting (\ref{update_mu_term11}) and (\ref{update_mu_term12}) into the right hand side of Inequality (\ref{update_mu_inequal}), we obtain
\begin{equation*}
\norm{(f,\hat{\mu}_{k}^N)-(f,\mu_{k})}_p \leq \frac{\hat{c}_{1,{k}}}{\sqrt{N}}+\hat{d}_{1,{k}}\sqrt{\delta}\,,
\end{equation*}
where $\hat{c}_{1,{k}}=\frac{1+\norm{f}_\infty}{(w_k,\mu_{k-1})}{c}_{k}^\prime<\infty$ and $\hat{d}_{1,{k}}=\frac{1+\norm{f}_\infty}{(w_k,\mu_{k-1})}{d}_{k}^\prime<\infty$ are independent of $N$ and $\delta$ and the statement for $\hat{\mu}_k^N$ follows.

To prove the statement for $\hat{\Gamma}_k^{\tilde{\theta}_{k}^{(i)}}$, we compute using \eqref{posterior measure update} and \eqref{weights-for-theta}
\begin{align*}
\abs{(f,\hat{\Gamma}_k^{\tilde{\theta}_{k}^{(i)}})-(f,{\Gamma}_{k}^{\tilde{\theta}_{k}^{(i)}})}
&=\abs{ \frac{((fl^{\tilde{\theta}_{k}^{(i)}}_{y_k},\hat{\pi}_k^{\tilde{\theta}_{k}^{(i)}}),\hat{\Gamma}_{k-1}^{\theta_{k-1}^{(i)}}   )}{\hat{w}_{k}^{\tilde{\theta}_{k}^{(i)}}}
-\frac{((fl^{\tilde{\theta}_{k}^{(i)}}_{y_k},{\pi}_k^{\tilde{\theta}_{k}^{(i)}}),{\Gamma}_{k-1}^{\tilde{\theta}_{k}^{(i)}}   )}{{w}_{k}^{\tilde{\theta}_{k}^{(i)}}}}\\
&=\frac{1}{\hat{w}_{k}^{\tilde{\theta}_{k}^{(i)}} {w}_{k}^{\tilde{\theta}_{k}^{(i)}}} \abs{{w}_{k}^{\tilde{\theta}_{k}^{(i)}}((fl^{\tilde{\theta}_{k}^{(i)}}_{y_k},\hat{\pi}_k^{\tilde{\theta}_{k}^{(i)}}),\hat{\Gamma}_{k-1}^{\theta_{k-1}^{(i)}}   )-\hat{w}_{k}^{\tilde{\theta}_{k}^{(i)}}((fl^{\tilde{\theta}_{k}^{(i)}}_{y_k},{\pi}_k^{\tilde{\theta}_{k}^{(i)}}),{\Gamma}_{k-1}^{\tilde{\theta}_{k}^{(i)}}   )}\\
&\leq \frac{1}{\hat{w}_{k}^{\tilde{\theta}_{k}^{(i)}} {w}_{k}^{\tilde{\theta}_{k}^{(i)}}}
\left( \abs{{w}_{k}^{\tilde{\theta}_{k}^{(i)}}((fl^{\tilde{\theta}_{k}^{(i)}}_{y_k},\hat{\pi}_k^{\tilde{\theta}_{k}^{(i)}}),\hat{\Gamma}_{k-1}^{\theta_{k-1}^{(i)}}   )-\hat{w}_{k}^{\tilde{\theta}_{k}^{(i)}}((fl^{\tilde{\theta}_{k}^{(i)}}_{y_k},\hat{\pi}_k^{\tilde{\theta}_{k}^{(i)}}),\hat{\Gamma}_{k-1}^{\theta_{k-1}^{(i)}}   )}\right.\\
&\left.\quad +\abs{\hat{w}_{k}^{\tilde{\theta}_{k}^{(i)}}((fl^{\tilde{\theta}_{k}^{(i)}}_{y_k},\hat{\pi}_k^{\tilde{\theta}_{k}^{(i)}}),\hat{\Gamma}_{k-1}^{\theta_{k-1}^{(i)}}   )-\hat{w}_{k}^{\tilde{\theta}_{k}^{(i)}}((fl^{\tilde{\theta}_{k}^{(i)}}_{y_k},{\pi}_k^{\tilde{\theta}_{k}^{(i)}}),{\Gamma}_{k-1}^{\tilde{\theta}_{k}^{(i)}}   )}\right)\\
&\leq \frac{1}{\hat{w}_{k}^{\tilde{\theta}_{k}^{(i)}} {w}_{k}^{\tilde{\theta}_{k}^{(i)}}}
\left(\norm{f}_\infty\norm{l^{y_k}}_\infty\abs{{w}_{k}^{\tilde{\theta}_{k}^{(i)}}-\hat{w}_{k}^{\tilde{\theta}_{k}^{(i)}}}\right)\\
&\quad +\hat{w}_{k}^{\tilde{\theta}_{k}^{(i)}}\abs{((fl^{\tilde{\theta}_{k}^{(i)}}_{y_k},\hat{\pi}_k^{\tilde{\theta}_{k}^{(i)}}),\hat{\Gamma}_{k-1}^{\theta_{k-1}^{(i)}}   )-((fl^{\tilde{\theta}_{k}^{(i)}}_{y_k},{\pi}_k^{\tilde{\theta}_{k}^{(i)}}),{\Gamma}_{k-1}^{\tilde{\theta}_{k}^{(i)}}   )}\,.
\end{align*}
Since $\hat{w}_{k}^{\tilde{\theta}_{k}^{(i)}},{w}_{k}^{\tilde{\theta}_{k}^{(i)}}\geq \inf_{\theta\in D_\theta} l^{\theta}_{y_k}>0$, then using Equation~\eqref{weights} and Lemma~\ref{lemma2} (note that $l^{\tilde{\theta}_{k}^{(i)}}_{y_k}$ is bounded by Assumption~\ref{ass4}), we prove the statement of the Lemma.
\end{proof}

\subsection{Resampling}
In the following lemma we study the convergence of the measure $\mu_k^N$.
\begin{lem}\label{resampling}
Let the observation sequence $y_{1:k}$ be fixed, for bounded and continuous function $f$, if
\begin{equation*}
\begin{aligned}
\norm{(f,\hat{\mu}_{k}^N)-(f,\mu_{k})}_p & \leq \frac{\hat{c}_{1,{k}}}{\sqrt{N}}+\hat{d}_{1,{k}}\sqrt{\delta}\,,\\
\sup_{1\leq i\leq N} \abs{(f,\hat{\Gamma}_k^{\tilde{\theta}_{k}^{(i)}})-(f,{\Gamma}_{k}^{\tilde{\theta}_{k}^{(i)}})}&\leq \frac{\tilde{c}_{2,{k}}}{\sqrt{N}}+\tilde{d}_{2,{k}}\sqrt{\delta}\,,
\end{aligned}
\end{equation*}
holds for some constants $\hat{c}_{1,{k}},\hat{d}_{1,{k}},\tilde{c}_{2,{k}}$ and $\tilde{d}_{2,{k}}$ which are independent of $N$ and $\delta$, then there exist constants ${c}_{1,{k}},{d}_{1,{k}},{c}_{2,{k}}$ and ${d}_{2,{k}}$ which are independent of $N$ and $\delta$, such that
\begin{equation*}
\begin{aligned}
\norm{(f,{\mu}_{k}^N)-(f,\mu_{k})}_p & \leq \frac{{c}_{1,{k}}}{\sqrt{N}}+{d}_{1,{k}}\sqrt{\delta}\,,\\
\sup_{1\leq i\leq N} \abs{(f,\hat{\Gamma}_k^{{\theta}_{k}^{(i)}})-(f,{\Gamma}_{k}^{{\theta}_{k}^{(i)}})}&\leq \frac{{c}_{2,{k}}}{\sqrt{N}}+{d}_{2,{k}}\sqrt{\delta}\,.
\end{aligned}
\end{equation*}
\end{lem}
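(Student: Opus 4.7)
The plan is to split the statement into its two displayed inequalities and treat them separately, since the resampling step affects them in very different ways.

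For the first inequality, I would insert $\hat{\mu}_k^N$ by the triangle inequality,
\begin{equation*}
\|(f,\mu_k^N)-(f,\mu_k)\|_p \;\le\; \|(f,\mu_k^N)-(f,\hat{\mu}_k^N)\|_p + \|(f,\hat{\mu}_k^N)-(f,\mu_k)\|_p,
\end{equation*}
so that the second term is already controlled by the hypothesis with constants $\hat c_{1,k}/\sqrt N + \hat d_{1,k}\sqrt\delta$. The remaining term is a pure Monte Carlo error: by construction of the resampling step, conditional on the sigma-algebra $\mathcal{G}_k^N$ generated by $\{\tilde\theta_k^{(p)}\}_{p=1}^N$ together with the weights $\{\hat w_k^{\tilde\theta_k^{(p)}}\}$, the particles $\theta_k^{(1)},\ldots,\theta_k^{(N)}$ are i.i.d.\ with common distribution $\hat\mu_k^N$. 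Thus
\begin{equation*}
(f,\mu_k^N)-(f,\hat\mu_k^N) \;=\; \frac{1}{N}\sum_{i=1}^N \Bigl(f(\theta_k^{(i)}) - (f,\hat\mu_k^N)\Bigr),
\end{equation*}
is an average of conditionally i.i.d.\ mean-zero random variables bounded by $2\|f\|_\infty$. Applying the Marcinkiewicz–Zygmund (or Rosenthal) inequality conditionally on $\mathcal{G}_k^N$ produces a bound of the form $C_p\|f\|_\infty/\sqrt N$, and taking the $L^p$ expectation preserves this rate. Combining the two terms yields the desired constants $c_{1,k}$ and $d_{1,k}$.

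For the second inequality the argument is immediate: the resampling step sets each $\theta_k^{(i)}$ equal to some $\tilde\theta_k^{(p)}$ with $p\in\{1,\ldots,N\}$, so that $\{\theta_k^{(i)}:1\le i\le N\}\subseteq\{\tilde\theta_k^{(p)}:1\le p\le N\}$ almost surely. Consequently
\begin{equation*}
\sup_{1\le i\le N}\bigl|(f,\hat\Gamma_k^{\theta_k^{(i)}}) - (f,\Gamma_k^{\theta_k^{(i)}})\bigr| \;\le\; \sup_{1\le p\le N}\bigl|(f,\hat\Gamma_k^{\tilde\theta_k^{(p)}}) - (f,\Gamma_k^{\tilde\theta_k^{(p)}})\bigr|,
\end{equation*}
and the right-hand side is bounded by $\tilde c_{2,k}/\sqrt N + \tilde d_{2,k}\sqrt\delta$ by hypothesis, giving $c_{2,k}:=\tilde c_{2,k}$ and $d_{2,k}:=\tilde d_{2,k}$.

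The only substantive step is the Monte Carlo inequality in the first part, and this is standard, paralleling Lemma~6 of \cite{CrisanMiguez}; the time-discretization parameter $\delta$ does not play any new role in the resampling step since it has already been absorbed into the constants at the preceding update step. No main obstacle is expected, provided the Marcinkiewicz–Zygmund constants are handled carefully so that the $N^{-1/2}$ and $\sqrt\delta$ contributions remain additive.
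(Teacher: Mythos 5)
Your proposal is correct and follows essentially the same route as the paper: the same triangle-inequality decomposition with the conditionally i.i.d.\ mean-zero Monte Carlo bound of order $\|f\|_\infty/\sqrt{N}$ for the first inequality, and the same observation that the resampled particles form a subset of the jittered pool for the second. The only cosmetic difference is that you invoke the Marcinkiewicz--Zygmund/Rosenthal inequality where the paper proves the $L^p$ bound directly by expanding the $p$-th moment for even $p$ and counting the at most $N^{p/2}$ non-vanishing terms, then extending to general $p\geq 1$ by monotonicity of $L^p$ norms.
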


\begin{proof}
Note that in the resampling step the ${\theta}_{k}^{(i)}$ are resampled from the pool $\{\tilde{\theta}_{k}^{(i)},i=1,\cdots,N\}$. Hence it is trivial that
\begin{equation*}
\begin{aligned}
\sup_{1\leq i\leq N} \abs{(f,\hat{\Gamma}_k^{{\theta}_{k}^{(i)}})-(f,{\Gamma}_{k}^{{\theta}_{k}^{(i)}})}
&\leq \sup_{1\leq i\leq N} \abs{(f,\hat{\Gamma}_k^{\tilde{\theta}_{k}^{(i)}})-(f,{\Gamma}_{k}^{\tilde{\theta}_{k}^{(i)}})}
&\leq \frac{\tilde{c}_{2,{k}}}{\sqrt{N}}+\tilde{d}_{2,{k}}\sqrt{\delta}
& = \frac{{c}_{2,{k}}}{\sqrt{N}}+{d}_{2,{k}}\sqrt{\delta}\,,
\end{aligned}
\end{equation*}
where ${c}_{2,{k}}=\tilde{c}_{2,{k}}$ and ${d}_{2,{k}}=\tilde{d}_{2,{k}}$ are independent of $N$ and $\delta$.
Moreover, by triangle inequality, we have
\begin{equation}\label{resampling_12}
\norm{(f,{\mu}_{k}^N)-(f,\mu_{k})}_p \leq \norm{(f,{\mu}_{k}^N)-(f,\hat{\mu}_{k}^N)}_p + \norm{(f,\hat{\mu}_{k}^N)-(f,\mu_{k})}_p\,.
\end{equation}
For the second term on the right hand side of \eqref{resampling_12}, it follows from the conditions in the lemma, that
\begin{equation}\label{resampling_term2}
\norm{(f,\hat{\mu}_{k}^N)-(f,\mu_{k})}_p\leq \frac{\hat{c}_{1,{k}}}{\sqrt{N}}+\hat{d}_{1,{k}}\sqrt{\delta}\,.
\end{equation}
Note that $\{\theta_k^{(i)},i=1,\cdots,N\}$ are i.i.d.\ samples generated from $\hat{\mu}_k^N(\ud\theta)=\sum_{i=1}^N \hat{w}_k^{\tilde{\theta}_k^{(i)}}\delta_{\tilde{\theta}_k^{(i)}}(\ud\theta)$. Let $\tilde{\mathcal{G}}_k$ be the sigma-algebra generated by $ \{\theta_{1:k-1}^{(i)},\tilde{\theta}_{1:k}^{(i)}, i=1,\cdots,N\}$, then
\begin{equation*}
\begin{aligned}
\mathbb{E}[f(\theta_{k}^{(i)})\mid \tilde{\mathcal{G}}_k]& = \int f(\theta)\sum_{i=1}^N \hat{w}_k^{\tilde{\theta}_k^{(i)}}\delta_{\tilde{\theta}_k^{(i)}}(d\theta)\\
&= \sum_{i=1}^N \hat{w}_k^{\tilde{\theta}_k^{(i)}}f(\tilde{\theta}_k^{(i)})\\
&= (f,\hat{\mu}_k^N)\,.
\end{aligned}
\end{equation*}
Define $Z_k^{(i)} = f(\theta_k^{(i)})-(f,\hat{\mu}_k^N)=f(\theta_k^{(i)})-\mathbb{E}[f(\theta_{k}^{(i)})\mid \tilde{\mathcal{G}}_k]$.
The  $Z_k^{(i)}, i=1,\cdots,N$ are random variables with zero-mean and bounded by $2\norm{f}_\infty$ and have the property $\mathbb{E}[Z_k^{(i)}Z_k^{(j)}\mid \tilde{\mathcal{G}}_k]=0$ for $i\neq j$.
Let $p\geq 1$, and let's first additionally assume $p$ is an even integer. Then
\begin{equation*}
\begin{aligned}
\mathbb{E}\left[\abs{(f,\mu_k^N)-(f,\hat{\mu}_k^N)}^p \mid \tilde{\mathcal{G}}_k\right]
&=\mathbb{E}\left[\abs{\frac{1}{N}\sum_{i=1}^N f(\theta_k^{(i)})-(f,\hat{\mu}_k^N)}^p \mid \tilde{\mathcal{G}}_k\right]\\
&=\mathbb{E}\left[\abs{\frac{1}{N}\sum_{i=1}^N Z_k^{(i)}}^p \mid \tilde{\mathcal{G}}_k\right]  \\
&=\frac{1}{N^p}\mathbb{E}\left[\sum_{i_1=1}^N\cdots\sum_{i_p=1}^N Z_k^{(i_1)}\cdots Z_k^{(i_p)} \mid \tilde{\mathcal{G}}_k\right]\,.
\end{aligned}
\end{equation*}
Since $\mathbb{E}[Z_k^{(i)}\mid \tilde{\mathcal{G}}_k]=0$, there are at most $N^{\frac{p}{2}}$ non-zero contributions to
\begin{equation*}
\sum_{i_1=1}^N\cdots\sum_{i_p=1}^N \mathbb{E}\left[Z_k^{(i_1)}\cdots Z_k^{(i_p)} \mid \tilde{\mathcal{G}}_k\right]\,.
\end{equation*}
Hence
\begin{equation*}
\mathbb{E}\left[\sum_{i_1=1}^N\cdots\sum_{i_p=1}^N Z_k^{(i_1)}\cdots Z_k^{(i_p)} \mid \tilde{\mathcal{G}}_k\right]
\leq \frac{1}{N^{\frac{p}{2}}}2^p\norm{f}_\infty^p\,,
\end{equation*}
which implies
\begin{equation}\label{resampling_term1}
\norm{(f,\mu_k^N)-(f,\hat{\mu}_k^N)}_p\leq \frac{2\norm{f}_\infty}{\sqrt{N}}\,.
\end{equation}
Substituting Inequality (\ref{resampling_term1}) and (\ref{resampling_term2}) into Equation (\ref{resampling_12}) yields, for any even $p$, the result
\begin{equation}\label{resampling_even}
\norm{(f,{\mu}_{k}^N)-(f,\mu_{k})}_p\leq \frac{{c}_{1,{k}}}{\sqrt{N}}+{d}_{1,{k}}\sqrt{\delta}\,,
\end{equation}
where ${c}_{1,{k}} = 2\norm{f}_\infty+\hat{c}_{1,{k}}$ and ${d}_{1,{k}} =\hat{d}_{1,{k}}$.
For any real number $p\geq1$, we know there exist an even number $q>p$ such that $\eqref{resampling_even}$ holds for this number $q$. Hence the statement is proved.
\end{proof}

\subsection{Convergence of the Kalman Particle algorithm}\label{section:convergence}

In the following theorem we prove the convergence of our proposed algorithm.
\begin{thm}\label{theorem:convergence}
Suppose the function $f$ is bounded and continuous and Assumptions~\ref{ass1}, \ref{ass2}, \ref{ass3}, \ref{ass4} hold.
Let the sequence of the observation $y_{1:k}$ be fixed and the measures $\mu_k$ and $\mu_k^N$ resulting from Algorithm~\ref{algoKPF} be respectively as in \eqref{original-measure} and \eqref{approximating-measure},
where the model is Gaussian and linear or is of type (\ref{model_analysis}). Then it holds
\begin{equation*}
\norm{(f,\mu_k^N)-(f,\mu_k)}_p\leq \frac{c_{1,k}}{\sqrt{N}} + d_{1,{k}}\sqrt{\delta},
\end{equation*}
with constants $c_{1,k},d_{1,k}, 1\leq k\leq K$ independent of $N$ and $\delta$.
\end{thm}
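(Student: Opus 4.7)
The plan is to prove the theorem by induction on $k$, carrying along as joint inductive hypotheses both the $L^p$-bound on the parameter measure $\mu_k^N$ and the uniform bound on the inner filter error $\sup_i |(f,\hat{\Gamma}_k^{\theta_k^{(i)}})-(f,\Gamma_k^{\theta_k^{(i)}})|$. Each of Lemmas~\ref{jittering}, \ref{update}, \ref{resampling} has been stated precisely in a form that propagates one (or both) of these bounds through the three substeps of Algorithm~\ref{algoKPF}, so the task reduces to stitching them together and handling the base case.

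For the base case $k=0$, the parameter particles $\{\theta_0^{(i)}\}$ are i.i.d.\ samples from $p(\theta_0)=\mu_0$, so a standard Monte Carlo argument (as in Lemma~\ref{resampling} with $\hat\mu_k^N$ replaced by $\mu_0$) yields $\|(f,\mu_0^N)-(f,\mu_0)\|_p \leq 2\|f\|_\infty/\sqrt{N}$, i.e.\ the bound with $c_{1,0}=2\|f\|_\infty$ and $d_{1,0}=0$. Moreover, since every $\hat{\Gamma}_0^{\theta_0^{(i)}}$ is assigned the exact initial posterior $\Gamma_0^{\theta_0^{(i)}}$, the second inductive hypothesis holds trivially with $c_{2,0}=d_{2,0}=0$.

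For the inductive step, suppose both bounds hold at time $k-1$. First apply Lemma~\ref{jittering} with the hypothesis on $\mu_{k-1}^N$ to obtain constants $\tilde c_{1,k},\tilde d_{1,k}$ governing the convergence of $\tilde\mu_k^N$ to $\mu_{k-1}$. Next feed both inductive hypotheses into Lemma~\ref{update}: this simultaneously upgrades the jittered measure to $\hat\mu_k^N$ (with a rate $\hat c_{1,k}/\sqrt N + \hat d_{1,k}\sqrt\delta$ with respect to $\mu_k$) and produces the inner-filter bound $\sup_i|(f,\hat{\Gamma}_k^{\tilde\theta_k^{(i)}})-(f,\Gamma_k^{\tilde\theta_k^{(i)}})|\leq \tilde c_{2,k}/\sqrt N+\tilde d_{2,k}\sqrt\delta$. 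Finally apply Lemma~\ref{resampling} to obtain constants $c_{1,k},d_{1,k},c_{2,k},d_{2,k}$ independent of $N$ and $\delta$ such that the two target bounds hold at time $k$; this closes the induction and gives the stated rate for $\mu_k^N$.

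The main obstacle is purely bookkeeping rather than conceptual, since the three lemmas already do the heavy analytic work. The two delicate points to verify explicitly are (i) the joint propagation of the two bounds: each lemma requires one or both of them as input but only produces a subset as output, so one must check that after a full cycle jittering $\to$ update $\to$ resampling the outputs supply all inputs needed in the next cycle; and (ii) that Assumption~\ref{ass1} is in force throughout the recursion, which is why Algorithm~\ref{algoKPF} switches to the second jittering kernel once $\Sigma(\theta_{k-1})\leq V_N$ and floors it at $V_f$: this guarantees that the kernel variance stays at the scale needed for \eqref{A1_1}--\eqref{A1_2}, so the constants $e_{1,k},e_{2,k}$ used inside Lemmas~\ref{jittering} and~\ref{update} remain finite uniformly in $k$, and hence the accumulated constants $c_{1,k},d_{1,k}$ stay finite for each fixed $k\leq K$.
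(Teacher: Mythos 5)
Your proposal is correct and follows essentially the same route as the paper's own proof: induction on $k$ with the two joint hypotheses (the $L^p$-bound on $\mu_{k-1}^N$ and the uniform bound on the inner Kalman-filter error), a base case handled by standard Monte Carlo sampling plus the exact initialization $\hat{\Gamma}_{0}^{\theta_0^{(i)}}=\Gamma_0$, and an inductive step that chains Lemmas~\ref{jittering}, \ref{update} and \ref{resampling}. Your additional remarks on the bookkeeping of which lemma supplies which hypothesis, and on why the kernel switch keeps Assumption~\ref{ass1} in force, are consistent with (and slightly more explicit than) the paper's argument.
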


\begin{proof}
We prove this theorem by induction. At time $t_0$, the parameter samples $\theta_{0}^{(i)}$, $i=1,\cdots,N$, are sampled from the initial measure $\mu_{0}$ and $\mu_{0}^N = \frac{1}{N}\sum_{i=1}^N \delta_{\theta_{0}^{(i)}}$. A well known result of Monte Carlo simulation, see for example Chapter I in \cite{glasserman2013monte}, implies that
\begin{equation*}
\norm{(f,\mu_{0})-(f,\mu_{0}^N)}\leq \frac{c_{1,t_0}}{\sqrt{N}} + d_{1,t_0}\sqrt{\delta},
\end{equation*}
for some constant $c_{1,t_0}$ and $d_{1,t_0}$ independent of $N$ and $\delta$. Moreover, one can just define the initial measure of $x_{0}$ by a Gaussian measure $\Gamma_{0}$ and define $\hat{\Gamma}_{0}^{\theta_{0}^{(i)}}= {\Gamma}_{0}^{\theta_{0}^{(i)}} = \Gamma_{0}$, for $i=1,\cdots,N$. Then it is trivial that
\begin{equation*}
\sup_{1\leq i\leq N}\norm{(f,\hat{\Gamma}_{0}^{\theta_{0}^{(i)}})-(f,{\Gamma}_{0}^{\theta_{0}^{(i)}})}\leq \frac{c_{2,t_0}}{\sqrt{N}}+d_{2,t_0}\sqrt{\delta}\,
\end{equation*}
holds for some constant $c_{2,t_0}$ and $d_{2,t_0}$ independent of $N$ and $\delta$ (actually $c_{2,t_0}=d_{2,t_0}=0$).
Assume that, at time ${k-1}$, the inequalities
\begin{equation*}
\norm{(f,{\mu}_{k-1}^N)-(f,\mu_{k-1})}_p  \leq \frac{{c}_{1,{k-1}}}{\sqrt{N}}+{d}_{1,{k-1}}\sqrt{\delta}
\end{equation*}
and
\begin{equation*}
\sup_{1\leq i\leq N} \abs{(f,\hat{\Gamma}_{k-1}^{\theta_{k-1}^{(i)}})-(f,{\Gamma}_{k-1}^{\theta_{k-1}^{(i)}})}\leq \frac{c_{2,{k-1}}}{\sqrt{N}}+d_{2,{k-1}}\sqrt{\delta}\,
\end{equation*}
hold for some constants ${c}_{1,{k-1}},{d}_{1,{k-1}},c_{2,{k-1}}$ and $d_{2,{k-1}}$ which are independent of $N$ and $\delta$. Then we can just successively apply Lemmas~\ref{update} and \ref{resampling} to obtain the statements of the lemma.
\end{proof}

\section{Numerical results on affine term structure models}\label{Numerical results}

In this section we illustrate our results by considering some specific examples of the affine class \eqref{model_analysis} to which we apply our algorithms. In particular we will consider the one factor Cox-Ingersoll-Ross (CIR) model, the two factor Hull-White model,  and the one factor Hull-White model with stochastic volatility. To test how the designed algorithm works on these models, the models are calibrated on  simulated data  using pre-determined parameters. We also compare the behavior of our algorithm to the one generated by the recursive nested particle filter (RNPF) for the CIR model. The comparison shows that in this case our algorithm outperforms the RNPF.

\subsection{One factor CIR model}\label{section:cir}

At first we consider the CIR model \eqref{eq:CIR-model}.
One important property of the CIR models is that the yield curves are always non-negative. The transition density of the CIR model has a non-central chi-square distribution, i.e.,
\begin{equation*}
x_{t_{k+1}}\mid x_{t_k} \sim c\chi_p^2(\lambda)\,,
\end{equation*}
where $p=\frac{4\alpha\beta}{\sigma^2}$ is the degree of freedom, $\lambda=x_k(4\alpha e^{-\alpha(t_{k+1}-t_{k})})/(\sigma^2(1-e^{-\alpha(t_{k+1}-t_k)}))$ is the non-centrality parameter and $c=\sigma^2(1-e^{-\alpha(t_{k+1}-t_k)})/4\alpha$.
We define the short rate process to be $r_t=x_t$. The analytical solution of the functions $\phi$ and $\psi$ defined in \eqref{eq:bond-explicit-formula} can be obtained by solving the ODE \eqref{riccati-equations2}, with $d=1$, $\gamma =-1$ and $c=0$.\\
\\
\noindent In the tests, the parameters are set to be $\alpha_1=0.45, \beta=0.001$ and $\tilde{\Sigma}=0.017$. Based on these parameters values, we generate daily data for yield curves with the times to maturity ranging from $1$ year to $30$ years. The time length is $T=2000$, i.e., the data set contains $2000$ days. Furthermore, we add white noise with variance $h = 1 \times 10^{-8}$ in the simulated zero rates.

We use our Kalman particle filter algorithm (Algorithm~\ref{algoKPF}, henceforth referred to as KPF) to calibrate the model parameters. The transition distribution of the CIR model is not Gaussian, but note that the CIR model fits the structure of (\ref{model_analysis}) with $\Sigma=0$. Hence we use the approximation (\ref{Xhat}) and apply the Kalman filter in the inner layer. In the outer layer, we set the number of particles to be $N=5000$ and the initial prior distribution of the parameters to be uniform, i.e.,
\begin{equation*}
\alpha \sim U(0,1),\ \ \ \beta \sim U(0,0.01), \ \ \ \mbox{and} \ \ \sigma \sim U(0,0.1)\,.
\end{equation*}
Moreover, the sampled parameters at each step are bounded by the boundary of the latter corresponding uniform distributions respectively. In the inner layer, the mean and variance of the initial prior distribution of $x_0$ are $0.005$ and $0.01$. The discounting factor $a$ is set to be $0.98$ and the variance boundary $V_N$ is set to be $\frac{1}{\sqrt{N^3}}$. We will also use the same $a$ and $V_N$ for other experiments later.

\begin{figure}[!htb]
	\centering
		\includegraphics[width=\linewidth]{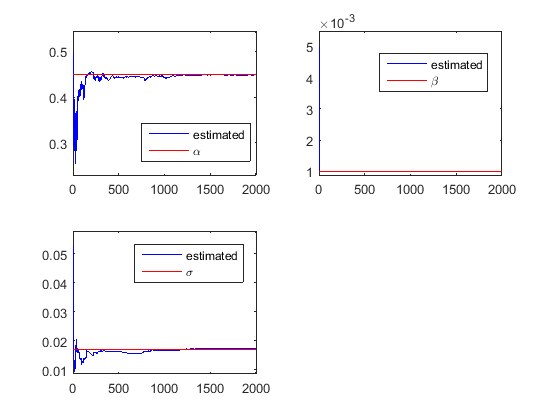}
\caption{Convergence of parameter estimates by KPF for the CIR model; the red lines represent the true values of the parameters.}
\label{fig:para_CIR}
\end{figure}

\begin{figure}[!htb]
	\centering
		\includegraphics[width=\linewidth]{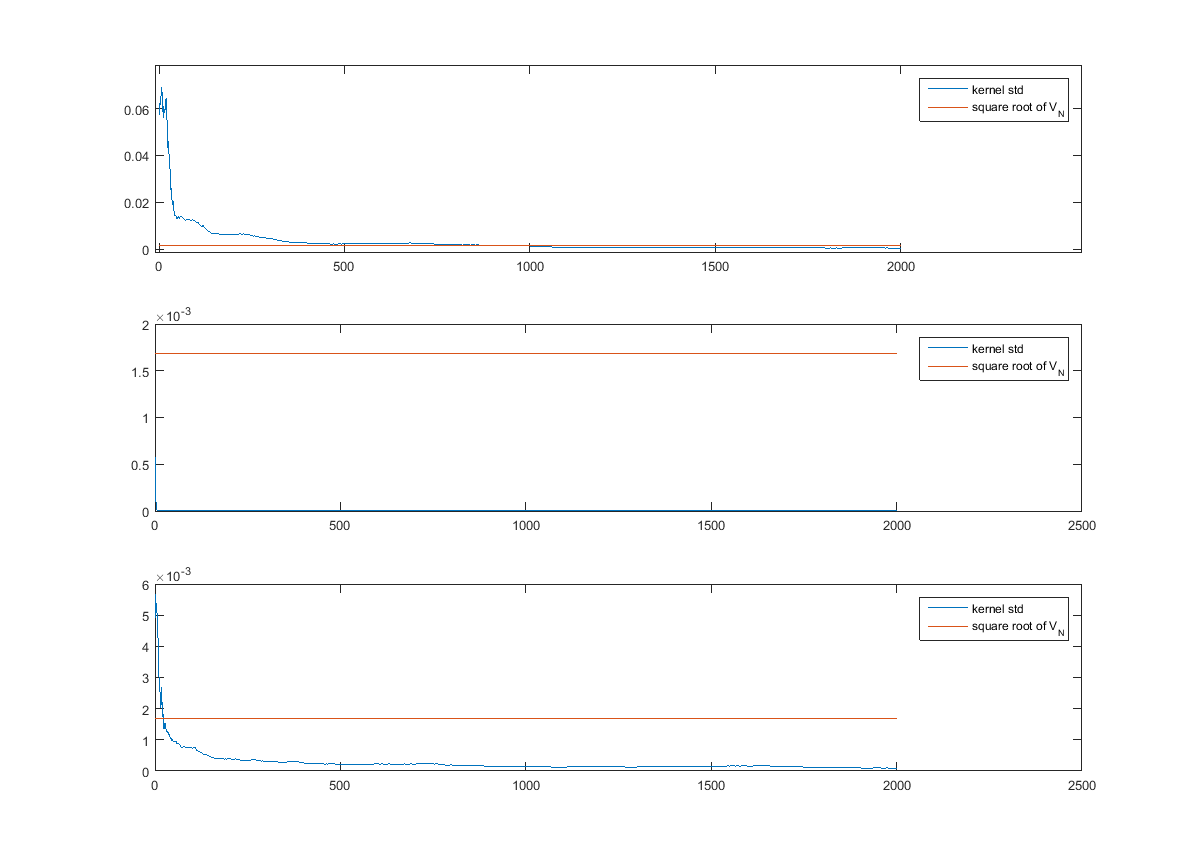}
\caption{Convergence of the standard deviation of the jittering kernel; the orange line represents the square root of switching level.}
\label{fig:variance_CIR}
\end{figure}
\noindent Figure~\ref{fig:para_CIR} shows that the estimated parameters converge over time. Figure~\ref{fig:variance_CIR} shows the convergence of the standard deviation of the jittering kernel for the three parameters. It is seen that after 437 steps the variance is smaller than the required level.
\medskip\\
We also implemented the recursive nested particle filter (RNPF) for  comparison. The sample size in the two layers are set to be 1000 and 300 respectively. Moreover, we simulated 18000 more days of data, hence in total we obtain 20000 days of data. For the rest we use the same settings  as in the previous example, i.e.\ the same initial sample distributions for the parameter generation, the same boundary on the parameter samples in the outer layer, and the variance of the jittering kernel is also set the same as $V_N$.
\begin{figure}[!htb]
	\centering
		\includegraphics[width=\linewidth]{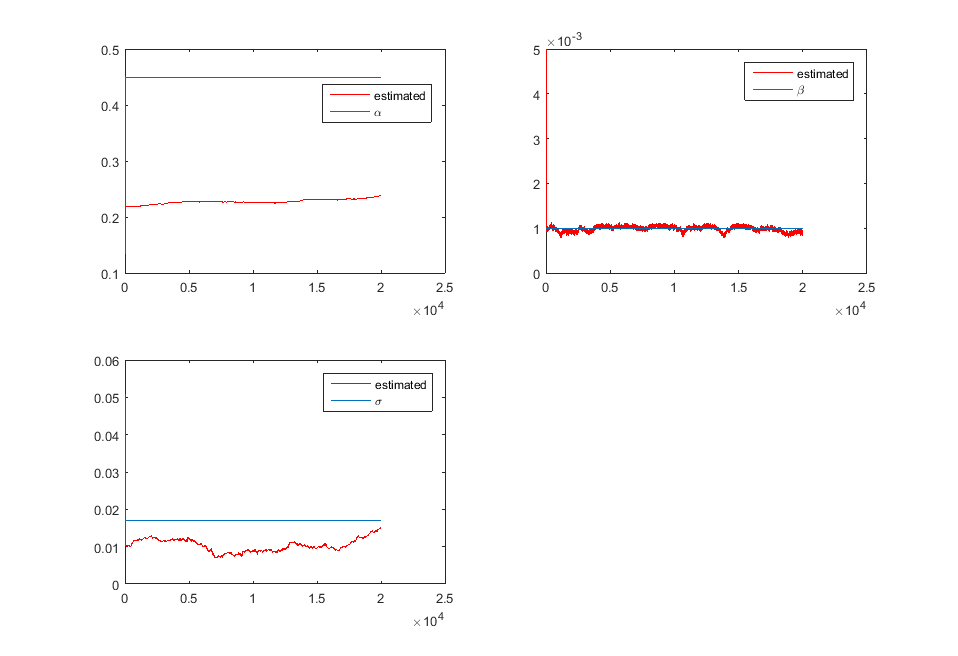}
\caption{Behavior of parameter estimates by RNPF for the CIR model.}
\label{fig:para_CIR_RNPF}
\end{figure}
\noindent Figure~\ref{fig:para_CIR_RNPF} shows how the behavior of the estimated parameters over time. One observes that even after 20000 time steps the RNPF algorithm estimates of $\alpha$  and $\sigma$ don't reach the correct parameter values.

\subsection{Two factor Hull-White model}\label{section:hw}

In this subsection we consider the two factor Hull-White model \eqref{eq:HW}.
The short rate process $r_t$ is given by $r_t = x_t^{(1)} + x_t^{(2)}$.
The analytical solution of the functions $\phi$ and $\psi$ defined in \eqref{eq:bond-explicit-formula} can be obtained by solving the ODE \eqref{riccati-equations1}, with $d=2$,
$\gamma = (-1, -1)^\top$ and $c=0$. Furthermore we set
$$\alpha_{11} = 0.03, \,\alpha_{22} = 0.23,\, \Sigma_{11}=0.02,\, \Sigma_{12}=0, \,\Sigma_{21} = 0.02\rho, \,\Sigma_{22} = 0.02 \sqrt{1-\rho^2}, \,\rho = -0.5\,.$$
\noindent Similar as in the previous example, the yield curve data are simulated based on these parameters and then a white noise process is added on the simulated data. The variance of the white noise is $6\times 10^{-7}$. The times to maturity of the yield curves range from $1$ year up to $30$ years. The time step of the data is set to be daily and the time length is $2000$.
\medskip\\
Using these noisy simulated data, we use our Kalman particle algorithm~\ref{algoKPF} to calibrate the model parameters. The Hull-White model fits the structure of (\ref{model_analysis}) with $\Sigma=0$. Since the Hull-White model is a Gaussian model, the Kalman filter in the inner layer gives an optimal filter.  The number of particles at each step is $N = 2000$. For the initialization of the parameter samples, the initial prior distribution of parameters is chosen to be uniform, namely:
\begin{equation*}
\alpha_1,\alpha_2 \sim U(0,0.4),\ \ \ \sigma_1,\sigma_2 \sim U(0,0.1), \ \ \ \mbox{and} \ \ \rho \sim U(-0.8,-0.3)\,.
\end{equation*}
The initial prior distribution of the state $x_k$ is chosen to be Gaussian with mean $0$ and variance $\mathrm{diag}([0.1,0.1])$. Figure~\ref{fig:para_HW} shows the how the estimated parameters converge over time. One can observe that the convergence is very fast and accurate.

\begin{figure}[h!]
	\centering
		\includegraphics[width=\linewidth]{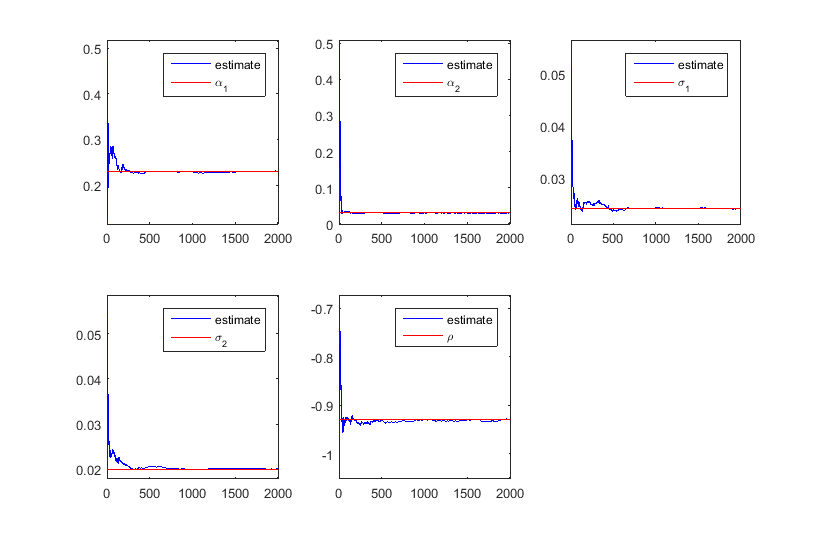}
\caption{Convergence of parameter estimates by KPF for the Hull-White model.}
\label{fig:para_HW}
\end{figure}

\subsection{Hull-White model with stochastic volatility}

We consider the following stochastic volatility model,
\begin{equation*}
\begin{aligned}
\ud V_t &= \alpha_1(0.1-V_t)\,\ud t+\sigma_1\sqrt{V_t}\,\ud W_t^{(1)},\\
\ud X_t &= \alpha_2(\beta-X_t)\,\ud t+\sigma_2\sqrt{V_t}\left(\rho \,\ud W_t^{(1)}+\sqrt{1-\rho^2}\,\ud W_t^{(2)}\right),
\end{aligned}
\end{equation*}
which is model~\eqref{eq:HSV} in a notation that is more suitable for the purposes of this section.
The process $V_t$ presents the fluctuation of the volatility of the system. Note that the long term mean of the process $V_t$ is fixed at the known constant $0.1$, otherwise the model would be over-parametrized, i.e.\ by scaling the volatility process and the parameters $\sigma_1, \sigma_2$ one can obtain an equivalent model. The transition density of this stochastic volatility model is not analytically available. To tackle this issue, we apply the same approximation as in the CIR model test of Section~\ref{section:cir}, namely, we approximate the stochastic diffusion by constant diffusion between the time steps, see (\ref{Xhat}). Under this approximation, the transition density of the model is Gaussian and the mean and variance can be theoretically computed. The short rate is defined by $r_t=X_t$ and hence the yield curve can be computed by $R_t(\tau) = \phi(\tau)+\psi(\tau)^\top \tilde{X}_t$, with $\tilde{X}_t = [V_t,X_t]^\top$. The functions $\phi$, $\psi$ are the solutions to the Riccati equations \eqref{riccati-equations2}, with $d=2$, $\gamma = (0,-1)^\top$ and $c=0$. The solutions to these latter equations are not known in closed form. We introduce an efficient numerical algorithm to compute these functions, the detailed algorithm is in the Appendices \ref{ric:numerical}.
The parameters of this model are set to be $[\alpha_1,\alpha_2,\beta,\sigma_1,\sigma_2,\rho]=[0.1,0.3,0.03,0.3,0.07,-0.5]$. The variance of the white noise is $10^{-8}$. The times to maturity of the yield curves range from $1$ year up to $20$ years. The time step of the data is set to be daily and the time length is $2000$. In the outer layer, we sample $N=2000$ particles and in the inner layer, we use the Kalman filter. The initial prior distributions of the parameters are uniform,
\begin{equation*}
\alpha_1,\alpha_2 \sim U(0,1),\ \ \beta \sim U(0,0.1), \ \ \sigma_1\sim U(0,0.8), \ \ \ \sigma_2\sim U(0,0.2) \mbox{  and} \ \ \ \rho\sim U(-1,1).
\end{equation*}
The mean and variance of the initial prior distribution of $\tilde{X}_t$ are $[0.1,0]$ and $\mathrm{diag}([0.01,0.01])$. Figure~\ref{fig:para_HW_SV} shows that also for a model with stochastic volatility the parameter estimates quickly converge.

\begin{figure}[h!]
	\centering
		\includegraphics[width=1.1\linewidth]{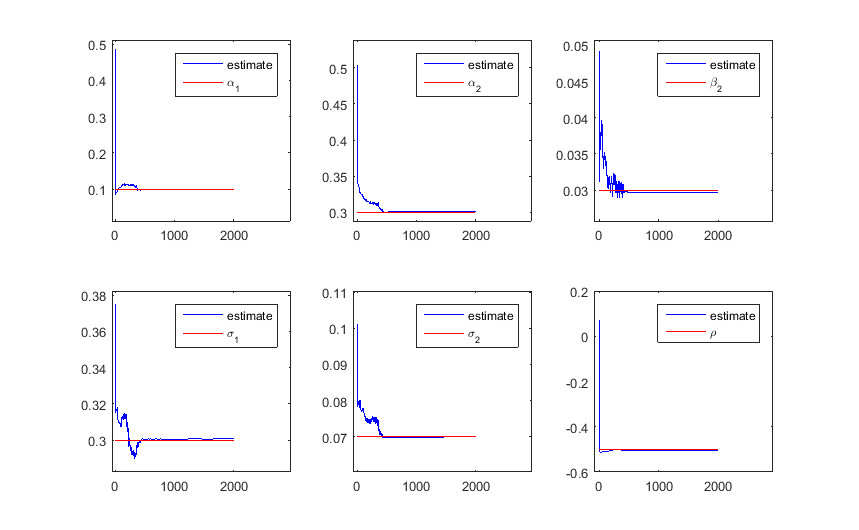}
		\caption{Convergence of parameter estimates by KPF for the Hull-White model with stochastic volatility.}
\label{fig:para_HW_SV}
\end{figure}
\noindent

\subsection{One factor CIR model with jump parameters}

This experiment can be seen as an extension of the experiment on the CIR model calibration of Section~\ref{section:cir}. In this experiment, we assume the parameters have a jump at time $T=2001$, from $[\alpha,\beta,\sigma] = [0.45,0.001,0.017]$ to $[\alpha,\beta,\sigma] = [0.55,0.0015,0.023]$. We simulate the new data from time point $T=2001$ to $T=4000$ based on the new parameters and the settings for the other parameters are the same as in Section~\ref{section:cir}. To identify the parameter change, we set $b=0.1$. In the experiment we use Algorithm~\ref{algoKPFJ}. Figure~\ref{fig:para_CIR_c} shows that in this study the KPF algorithm for models with time-varying parameters is able to track a sudden change in the  parameter values and quickly stabilizes at the new values.

\begin{figure}[h!]
	\centering
		\includegraphics[width=1.1\linewidth]{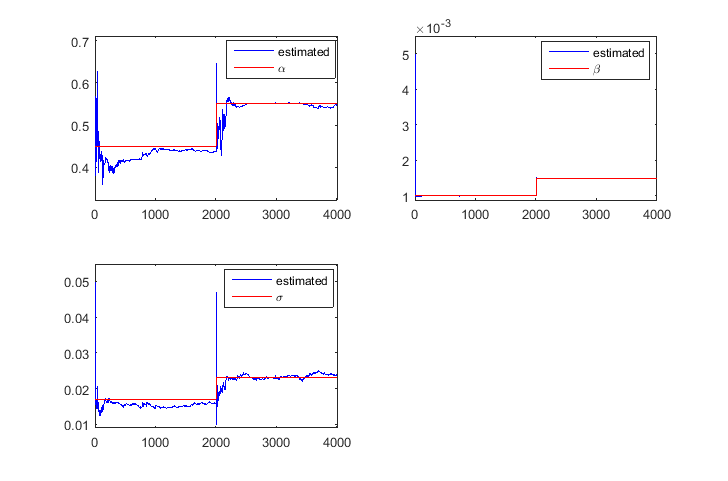}
\caption{Parameter estimates by KPF for the CIR model with a sudden jump.}
\label{fig:para_CIR_c}
\end{figure}

\section{Conclusion}\label{conclusion}

In this paper we have introduced a semi-recursive algorithm combining the Kalman filter and the particle filter with a two layers structure. In the outer layer the dynamic Gaussian kernel is implemented to sample the parameter particles. Moreover, the Kalman filter is applied the inner layer to estimate the posterior distribution of the state variables given the parameters sampled in the outer layer. These two changes provide faster convergence and reduce the computational time comparable to the RNPF methodology. The theoretical contribution of this paper is the convergence analysis of the proposed algorithm. We proved that, under regularity assumptions and given a certain model structure, the posterior distribution of the parameters and the state variables converge to the actual distribution in $L_p$ with rate $\mathcal{O}(N^{-\frac{1}{2}}+\delta^{\frac{1}{2}})$. The theoretical result is complemented by numerical results for several affine term structure models with static parameters or jump parameters. Although our numerical illustrations are for term structure models, the Kalman particle algorithm can also be applied to many other models.

\section*{Acknowledgement}
The authors thanks the contribution from Dr. Peter den Iseger and the support from ABN AMRO Bank N.V.

\appendix

\section{Affine processes}

Affine processes are continuous-time Markov processes characterised by the fact that their characteristic function depends in an exponentially affine way on the initial state vector of the process.
From Theorem 2.7 in \cite{DDW}, we know that the model of type \eqref{model_analysis} is an $\mathbb{R}_+^p\times \mathbb{R}^q$-valued affine process given $\Sigma$ or $\tilde{\Sigma}$ is zero and the admissibility of the parameters of this model.

\subsection{Admissibility of the parameters}\label{app-section:admissibility}

Let $\Gamma = \Sigma \Sigma^\top$, $\tilde{\Gamma}=\tilde{\Sigma}\tilde{\Sigma}^\top$, $I = \{1,\cdots, p\}$ and $J=\{p+1,\cdots,p+q\}$. Here below, we introduce the admissibility of the parameters of \eqref{model_analysis} when $\tilde{\Sigma}=0$,
\begin{itemize}
\item  $\Gamma_{II} = 0$ \,,
\item $A\beta \in \mathbb{R}_+^p\times \mathbb{R}^q$\,,
\item  $A_{IJ} =0$\,,
\item $A_{II}$ has nonpositive off-diagonal elements\,.
\end{itemize}
When $\Sigma =0$, then the admissibility of the parameters of \eqref{model_analysis} reads
\begin{itemize}
\item  $\Gamma= 0$,  if $p=0$\,,
\item $\Gamma_{kl} = \Gamma_{lk} = 0$, for $k\in I/\{1\}$, for all $1\leq l \leq d$\,,
\item $A\beta \in \mathbb{R}_+^p\times \mathbb{R}^q$\,,
\item  $A_{IJ} =0$\,,
\item $A_{II}$ has nonpositive off-diagonal elements\,.
\end{itemize}
This latter conditions on the parameters ensure that the process of \eqref{model_analysis} remains in the state space $\mathbb{R}_+^p\times \mathbb{R}^q$.

\subsection{Riccati equations}\label{sec:Riccati-equations}

Let $u \in \mathbb{C}^d$, $(\phi(\cdot, u), \psi(\cdot, u)): [0,T] \rightarrow \mathbb{C}\times \mathbb{C}^d$ and $(\tilde{\phi}(\cdot, u), \tilde{\psi}(\cdot, u)): [0,T] \rightarrow \mathbb{C}\times \mathbb{C}^d$ be $C^1$-functions.
We introduce the following generalised Riccati equations.\medskip\\
{\bf Generalised Riccati equations (1).}
\begin{equation}\label{riccati-equations1}
\begin{aligned}
\partial_t \phi(t,u) &= \frac{1}{2}\psi_J^\top(t,u)\Gamma_{JJ} \psi_J(t,u)) +(A\beta)^\top \psi(t,u) -c\,,\\
 \phi(0,u) &= 0\,, \\
\partial_t \psi_i(t, u) &=- A_i^\top \psi(t, u) -\gamma_i\,, \quad 1\leq i \leq d\,, \\
\partial_t \psi_J(t, u) &= -A_{JJ} \psi(t, u) - \gamma_J\,,\\
\psi(0, u) &= u \,,
\end{aligned}
\end{equation}
{\bf Generalised Riccati equations (2).}
\begin{equation}\label{riccati-equations2}
\begin{aligned}
\partial_t \tilde{\phi}(t,u) &= (A\beta)^\top \tilde{\psi}(t,u) -c\,,\\
\tilde{\phi}(0,u) &= 0\,, \\
 \partial_t \tilde{\psi}_1(t, u) &= \frac{1}{2}\tilde{\psi}^\top(t, u) \tilde{\Gamma}_1^\top \tilde{\psi}(t, u) - A_1^\top \tilde{\psi}(t, u)-\gamma_1\,,  \\
\partial_t \tilde{\psi}_i(t, u) &=- A_i^\top \tilde{\psi}(t, u) -\gamma_i\,, \quad 2\leq i \leq d\,, \\
\partial_t \tilde{\psi}_J(t, u) &= -A^\top_{JJ} \tilde{\psi}_J(t, u) - \gamma_J\,,\\
\tilde{\psi}(0, u) &= u \,.
\end{aligned}
\end{equation}
The aim in the following theorem is to compute the zero coupon bond price $P(t,T)$ introduced in \eqref{eq:bond-explicit-formula}.
For a proof, we refer to Theorem 3.1 in \cite{KM}.
\begin{thm} \label{thm:discounting}
Let $\tau>0$ and $(x_t)_{t\geq 0}$ be as in \eqref{model_analysis} with $\tilde{\Sigma}=0$. Then the following statements are equivalent
\begin{enumerate}
\item $\mathbb{E}[\e^{-\int_0^\tau r(s) \, \ud s}] < \infty$\,, for some $x \in \mathbb{R}_+^p\times \mathbb{R}^q$.
\item
There exists a unique solution $(\phi, \psi)$ on $[0,\tau]$ to the generalised Riccati equations \eqref{riccati-equations1} with initial data $u=0$.
\end{enumerate}
In any of the above cases, it holds for all $0\leq t\leq T\leq \tau$ and for all $x \in \mathbb{R}_+^p\times \mathbb{R}^q$,
$$\mathbb{E}[\e^{-\int_t^	T r(s) \, \ud s} \mid \mathcal{F}_t] = \e^{-\phi(T-t,0) -\psi(T-t,0) x(t)}\,.$$
The above statements remain true when $\Sigma=0$ if we replace $(\phi, \psi)$ by $(\tilde{\phi}, \tilde{\psi})$, the solution to~\eqref{riccati-equations2}.
\end{thm}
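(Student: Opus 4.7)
The plan is to prove the equivalence by constructing an exponential candidate process and identifying Riccati equations as the cancellation condition for its drift. I will treat the case $\tilde{\Sigma}=0$ in detail; the case $\Sigma=0$ follows by exchanging the roles of the quadratic terms in $\phi$ and $\psi_1$ (the only difference between \eqref{riccati-equations1} and \eqref{riccati-equations2}).

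First, I would prove the direction (2)$\Rightarrow$(1) together with the pricing formula. Given a $C^1$ solution $(\phi(\cdot,0),\psi(\cdot,0))$ on $[0,\tau]$, fix $T\leq\tau$ and set
\begin{equation*}
M_t:=\exp\!\Bigl(-\!\int_0^t r_s\,\ud s-\phi(T-t,0)-\psi(T-t,0)^\top x_t\Bigr),\qquad t\in[0,T].
\end{equation*}
Apply Itô's formula to $\log M_t$ using \eqref{model_analysis} with $\tilde{\Sigma}=0$. The drift consists of (i) $-r_t=-c-\gamma^\top x_t$ from the integrator, (ii) $\partial_t\phi(T-t,0)+\partial_t\psi(T-t,0)^\top x_t$ from the time derivatives (with the sign induced by $T-t$), (iii) $-\psi(T-t,0)^\top A(\beta-x_t)$ from the drift of $x_t$, and (iv) $\tfrac12\psi(T-t,0)^\top\Gamma\psi(T-t,0)$ from the Itô correction, where only the $JJ$-block contributes since admissibility forces $\Gamma_{II}=0$ and $\Gamma_{IJ}=\Gamma_{JI}=0$. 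Collecting terms constant in $x_t$ gives exactly the ODE for $\phi$ in \eqref{riccati-equations1}; collecting the linear-in-$x_t$ terms and using $A_{IJ}=0$ yields the ODEs for $\psi_i$ and $\psi_J$. Hence the total drift vanishes and $M_t$ is a non-negative local martingale, therefore a supermartingale.

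Second, I must upgrade $M_t$ to a true martingale. The standard route is a localization argument: take a sequence of stopping times $\sigma_n\uparrow T$ on which $M_{t\wedge\sigma_n}$ is a bounded martingale and apply monotone convergence, using that $\psi(T-t,0)_I$ takes values in the domain where the exponential $\e^{-\psi_I^\top x_I}$ is integrable against the affine semigroup (this is the content of admissibility, which rules out blow-up of $\psi$ on $[0,\tau]$ under hypothesis~(2)). Taking expectations in $M_0=\mathbb{E}M_T$ delivers both finiteness in (1) and, by the Markov property applied at time $t$, the claimed formula $\mathbb{E}[\e^{-\int_t^T r_s\,\ud s}\mid\mathcal{F}_t]=\e^{-\phi(T-t,0)-\psi(T-t,0)^\top x_t}$.

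Third, for (1)$\Rightarrow$(2), I would use the affine Markov structure of $x_t$. Set $v(s,x):=\mathbb{E}^x[\e^{-\int_0^s r_u\,\ud u}]$; under hypothesis~(1) this is finite for $s\leq\tau$ and satisfies the Kolmogorov backward equation $\partial_s v=\mathcal{L}v-(c+\gamma^\top x)v$ with $v(0,x)=1$, where $\mathcal{L}$ is the affine generator. Because $\mathcal{L}$ has coefficients that are affine in $x$ and the ``killing rate'' $c+\gamma^\top x$ is affine, the exponential-affine ansatz $v(s,x)=\e^{-\phi(s)-\psi(s)^\top x}$ reduces the PDE to an identity in $x$ whose equating of constant and linear coefficients gives \eqref{riccati-equations1}. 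Existence on $[0,\tau]$ follows from local Lipschitz continuity of the right-hand sides plus the a priori bound $v\leq 1$ (hence $\phi+\psi^\top x\geq 0$ on the state space), which prevents blow-up before $\tau$; uniqueness is a standard consequence of the polynomial (Riccati) structure.

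The main obstacle is the martingale upgrade in the second step: proving that the local martingale $M_t$ is uniformly integrable on $[0,T]$ requires controlling the moments $\mathbb{E}[\e^{-\psi(T-t,0)^\top x_t}]$ uniformly in $t$, which in general depends delicately on the admissibility constraints ruling out explosion of the Riccati flow. Rather than redoing this analysis, I would invoke Theorem~3.1 of~\cite{KM}, which carries out precisely this verification for the class of affine processes containing \eqref{model_analysis}; our dynamics fit that framework once admissibility (Appendix~\ref{app-section:admissibility}) is imposed, so the cited theorem applies verbatim.
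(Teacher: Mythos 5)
The paper does not prove this theorem at all: it is stated as a quoted result, with the proof deferred entirely to Theorem~3.1 of~\cite{KM}. Your sketch of the standard argument (Itô on the exponential-affine candidate, drift cancellation yielding the Riccati system, then the local-martingale-to-martingale upgrade) is the right strategy and correctly isolates where the real analytic work lies, but since you end by invoking that same Theorem~3.1 of~\cite{KM} for the essential step, your proposal in effect coincides with the paper's approach. One small caution if you ever wanted to make the sketch self-contained: hypothesis~(1) gives finiteness of $\mathbb{E}[\e^{-\int_0^\tau r(s)\,\ud s}]$ only at the single time $\tau$ and for \emph{some} $x$, and since $r$ may be negative the integrand is not monotone in the horizon, so deducing finiteness of $v(s,x)$ for all $s\leq\tau$ and all $x$ (as your third step assumes) is itself one of the nontrivial points settled in~\cite{KM}.
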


\subsection{Numerical Solution to Riccati equations}\label{ric:numerical}
For many Riccati equations, it is hard (or even impossible) to calculate a closed-form solution, especially in high dimensional cases. So a numerical approach is needed. In general, the Riccati equations for $\phi(t,u)$ and $\psi(t,u)$ are given by
\begin{equation}
\begin{aligned}
\partial_t\phi(t,u)&=\frac{1}{2}\psi(t,u)^\top a\psi(t,u)+b^\top\psi(t,u)-c,\\
\phi(0,u)&=0;\\
\partial_t\psi_i(t,u)&=\frac{1}{2}\psi(t,u)^\top\alpha_i\psi(t,u)+\beta_i^\top\psi(t,u)-\gamma_i,\\
\psi(0,u)&=u.
\end{aligned}
\label{ric}
\end{equation}
with known parameters $a,b,c, \alpha_i, \beta_i $ and $\gamma_i, i=1,\cdots,d$. \\
\\
We use a Taylor series to approximate the solution $(\phi,\psi)$. In order to do so, first we need to determine the coefficients in Taylor expansion.
\begin{prop}
Suppose $(\phi(t,u),\psi(t,u))$ is the solution of (\ref{ric}). Given the value of $u$,and assume the Taylor expansions of $(\phi(t,u),\psi(t,u))$ are given by $\phi(t,u)=\sum_{k=0}^\infty C_k(u)t^k<\infty$, and $\psi_i(t,u)=\sum_{k=0}^\infty D_k^i(u)t^k<\infty$, then we have the following recursion for the coefficients:
\begin{equation*}
\begin{aligned}
C_0(u)&=0,\\
C_1(u)&=\frac{1}{2}u^\top au+b^\top u-c,\\
C_{k+1}(u)&=\frac{1}{1+k}\left(\frac{1}{2}\sum_{n=0}^kD_n^\top(u)aD_{k-n}(u)+b^\top D_ku\right), k\geq 2,  \\
D_0^i(u)&=u_i,\\
D_1^i(u)&=\frac{1}{2}u^\top\alpha_iu+\beta_i^\top u-\gamma_i,\\
D_{k+1}^i(u)&=\frac{1}{1+k}\left(\frac{1}{2}\sum_{n=0}^kD_n^\top(u)\alpha_iD_{k-n}(u)+\beta_i^\top D_k(u)\right), k\geq2,
\end{aligned}
\label{Ric}
\end{equation*}
where $D_k(u)=(D_k^1(u),\cdots, D_k^d(u))^\top, k=0,\cdots,n$.
\end{prop}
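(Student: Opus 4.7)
The approach is a direct power-series matching argument. The plan is to substitute the assumed Taylor expansions $\phi(t,u)=\sum_{k=0}^\infty C_k(u)t^k$ and $\psi_i(t,u)=\sum_{k=0}^\infty D_k^i(u)t^k$ into the Riccati system (\ref{ric}), identify coefficients of like powers of $t$ on both sides, and read off the recursion. The initial conditions $\phi(0,u)=0$ and $\psi(0,u)=u$ fix the zeroth-order coefficients $C_0(u)=0$ and $D_0^i(u)=u_i$ immediately.

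First I would differentiate the series term-by-term, which, since convergence of the series is part of the hypothesis, is legitimate on the open disk of convergence, producing $\partial_t\phi(t,u)=\sum_{k=0}^\infty (k+1)C_{k+1}(u)t^k$ and similarly $\partial_t\psi_i(t,u)=\sum_{k=0}^\infty (k+1)D_{k+1}^i(u)t^k$. The quadratic expressions appearing on the right-hand sides are handled by the Cauchy product: for any constant $d\times d$ matrix $m$,
\begin{equation*}
\psi(t,u)^\top m\,\psi(t,u)=\sum_{k=0}^\infty\left(\sum_{n=0}^k D_n(u)^\top m\,D_{k-n}(u)\right)t^k.
\end{equation*}
The linear terms $b^\top \psi(t,u)$ and $\beta_i^\top\psi(t,u)$ expand to $\sum_{k=0}^\infty b^\top D_k(u)\,t^k$ and $\sum_{k=0}^\infty \beta_i^\top D_k(u)\,t^k$, respectively, and the constants $-c$, $-\gamma_i$ contribute only to the coefficient of $t^0$.

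Matching the coefficient of $t^k$ in the $\phi$-equation then gives
\begin{equation*}
(k+1)C_{k+1}(u)=\tfrac{1}{2}\sum_{n=0}^k D_n(u)^\top a\,D_{k-n}(u)+b^\top D_k(u)-c\,\mathbf{1}_{\{k=0\}}.
\end{equation*}
Specializing to $k=0$ yields $C_1(u)=\tfrac{1}{2}u^\top a u+b^\top u-c$ using $D_0(u)=u$, and for $k\geq 1$ the constant term drops out, leaving the displayed recursion for $C_{k+1}(u)$. The identical argument applied to the $\psi_i$-equation, with $\alpha_i$ in place of $a$, $\beta_i$ in place of $b$, and $\gamma_i$ in place of $c$, yields the recursion for $D_{k+1}^i(u)$ together with the base case $D_1^i(u)=\tfrac{1}{2}u^\top\alpha_i u+\beta_i^\top u-\gamma_i$.

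The computation is essentially routine; the only mildly delicate step is justifying the term-by-term differentiation and the rearrangement of sums implicit in the Cauchy product, but both are standard on the disk of convergence that the statement posits. Hence there is no substantive obstacle — the proposition follows by coefficient comparison.
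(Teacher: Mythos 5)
Your proposal is correct and follows essentially the same route as the paper: substitute the series into (\ref{ric}), differentiate term by term, expand the quadratic terms as Cauchy products, and match coefficients of $t^k$, with the constants $c$ and $\gamma_i$ entering only at order $t^0$. The paper carries out the computation explicitly for $\psi_i$ and invokes symmetry for $\phi$, whereas you treat $\phi$ first, but the argument is identical in substance.
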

\begin{proof}
Suppose $\phi(t,u)=\sum_{k=0}^\infty C_k(u)t^k, \psi_i(t,u)=\sum_{k=0}^\infty D_k^i(u)t^k$, let $t=0$, we obtain $C_0(u)=0, D_0^i(u)=u_i$. Taking the derivative of $\psi_i(t,u)$ w.r.t.\ $t$,
\begin{align}
\partial_t\psi_i(t,u)&=\sum_{k=1}^\infty D_{k}^i(u)kt^{k-1} \nonumber\\
&=\sum_{k=0}^\infty D_{k+1}^i(u)(k+1)t^{k}.
\label{left}
\end{align}

\noindent On the other hand, according to (\ref{ric}),

\begin{align}
\partial_t\psi_i(t,u)&=\frac{1}{2}\psi(t,u)^\top\alpha_i\psi(t,u)+\beta_i^\top\psi(t,u)-\gamma_i \nonumber\\				
&=\frac{1}{2}\sum_{l,r=1}^d \psi_l(t,u)\alpha_i(l,r)\psi_r(t,u)+\sum_{s=1}^dB(s,i)\psi_s(t,u)-\gamma_i \nonumber\\
&=\frac{1}{2}\sum_{l,r=1}^d (\sum_{k=0}^\infty D_k^l(u)t^k)\alpha_i(l,r)(\sum_{k=0}^\infty D_k^r(u)t^k)+\sum_{s=1}^dB(s,i)(\sum_{k=0}^\infty D_k^s(u)t^k)-\gamma_i \nonumber\\
&=\frac{1}{2}\sum_{l,r=1}^d\alpha_i(l,r)\sum_{k=0}^\infty(\sum_{m=0}^kD_m^l(u)D_{k-m}^r(u))t^k+\sum_{s=1}^dB(s,i)(\sum_{k=0}^\infty D_k^s(u)t^k)-\gamma_i \nonumber\\
&=\sum_{k=0}^\infty\left(\frac{1}{2}\sum_{l,r=1}^d\sum_{m=0}^kD_m^l(u)\alpha_i(l,r)D_{k-m}^r(u)+\sum_{s=1}^dB(s,i)D_k^s(u)t^k\right)t^k-\gamma_i \nonumber\\
&=\sum_{k=0}^\infty\left(\frac{1}{2}\sum_{m=0}^kD_m^\top(u)\alpha_iD_{k-m}(u)+\beta_i^TD_k(u)\right)t^k-\gamma_i \nonumber\\
&=(\frac{1}{2}u^\top\alpha_iu+\beta_i^\top u-\gamma_i)+\sum_{k=1}^\infty\left(\frac{1}{2}\sum_{m=0}^kD_m^\top(u)\alpha_iD_{k-m}(u)+\beta_i^\top D_k(u)\right)t^k.
\label{right}
\end{align}

\noindent Comparing the Taylor coefficients in (\ref{left}) and (\ref{right}), we obtain
\begin{eqnarray*}
D_{1}(u)&=&\frac{1}{2}u^\top\alpha_iu+\beta_i^\top u-\gamma_i,\\
D_{k+1}^i(u)&=&\frac{1}{1+k}\left(\frac{1}{2}\sum_{n=0}^kD_n^\top(u)\alpha_iD_{k-n}(u)+\beta_i^\top D_k(u)\right).
\end{eqnarray*}
Similarly, we also obtain
\begin{eqnarray*}
C_1(u)&=&\frac{1}{2}u^\top au+b^\top u-c,\\
C_{k+1}(u)&=&\frac{1}{1+k}\left(\frac{1}{2}\sum_{n=0}^kD_n^\top(u)aD_{k-n}(u)+b^\top D_ku\right).
\end{eqnarray*}
\end{proof}
\noindent
This proposition allows us to approximate the $(\phi(t,u),\psi(t,u))$ by
\begin{eqnarray*}
\phi(t,u)&\approx&\sum_{k=0}^NC_k(u)t^k, \\
\psi_i(t,u)&\approx&\sum_{k=0}^ND_k^i(u)t^k.
\end{eqnarray*}
The approximation errors are of the form $\sum_{k=N+1}^\infty A_k(u)t^k$. The approximation is accurate and converges quickly if $t\approx0$. For $t\gg0$, we divide the time interval into several subintervals which are small enough to make the approximation accurate.\\
\\
Choose time steps $\Delta_i>0, i=1,\cdots,n$ such that $T-t=\Delta_1+\cdots+\Delta_n$, then by the tower property,
\begin{eqnarray*}
e^{\phi(T-t,u)+\psi(T-t,u)^\top X(t)}&=&\mathbb{E}[e^{u^\top X(T)}\mid \mathcal{F}_t]\\
&=&\mathbb{E}[\mathbb{E}[e^{u^\top X(T)}\mid\mathcal{F}_{T-\Delta_1}]\mid \mathcal{F}_t]\\
&=&e^{\phi(\Delta_1,u)}\mathbb{E}[e^{\psi(\Delta_1,u)^\top X(T-\Delta_1)}\mid\mathcal{F}_t]\\
&=&e^{\phi(\Delta_1,u)}\mathbb{E}[\mathbb{E}[e^{\psi(\Delta_1,u)^\top X(T-\Delta_1)}\mid\mathcal{F}_{T-\Delta_1-\Delta_2}]\mid\mathcal{F}_t]\\
&=&e^{\phi(\Delta_1,u)+\phi(\Delta_2,\psi(\Delta_1,u))}\mathbb{E}[e^{\psi(\Delta_2,\psi(\Delta_1,u))^\top X(T-\Delta_1-\Delta_2)}\mid\mathcal{F}_t]\\
&\vdots&\\
&=&e^{\phi(\Delta_1,u_0)+\phi(\Delta_2,u_1)+\cdots+\phi(\Delta_n,u_{n-1})}e^{\psi(\Delta_n,u_{n-1})^\top X(t)},
\end{eqnarray*}
where $u_{i+1}=\psi(\Delta_{i+1},u_i), u_0=u$.\\
\\
Comparing the two extreme sides of the latter equation, we obtain
\begin{eqnarray*}
\phi(T-t,u)&=&\sum_{i=1}^n\phi(\Delta_i,u_{i-1}),\\
\psi(T-t,u)&=&\psi(\Delta_n,u_{n-1}).
\end{eqnarray*}
In practice, we can set the approximation error level to be $\varepsilon$. If at each step we choose $\Delta=(\varepsilon A_N(u))^{\frac{1}{N}}$, then the last term in the Taylor expansion is $A_N(u)*\Delta^N=\varepsilon$. Hence we can control the approximation at the level $\varepsilon$.

\begin{remark}
The values of the functions $\phi(t,u),\psi(t,u)$ might go to infinity for some value of $t$ and $u$ . In these situations, the Taylor expansion approximation doesn't work. However, in financial application, we assume these cases do not exist since in finance we always assume the moments of the underlying process exist.
\end{remark}

\begin{example}
\begin{figure}[!htb]
    \centering
    \includegraphics[width=0.4\textwidth]{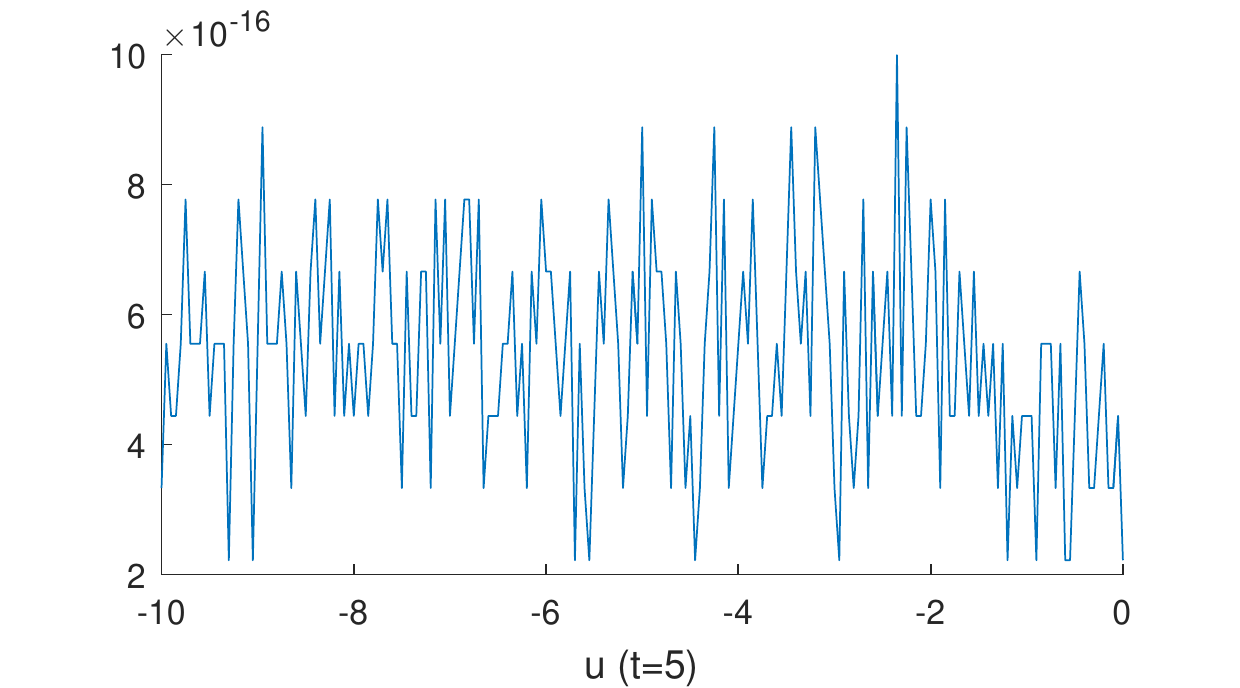}
    \includegraphics[width=0.4\textwidth]{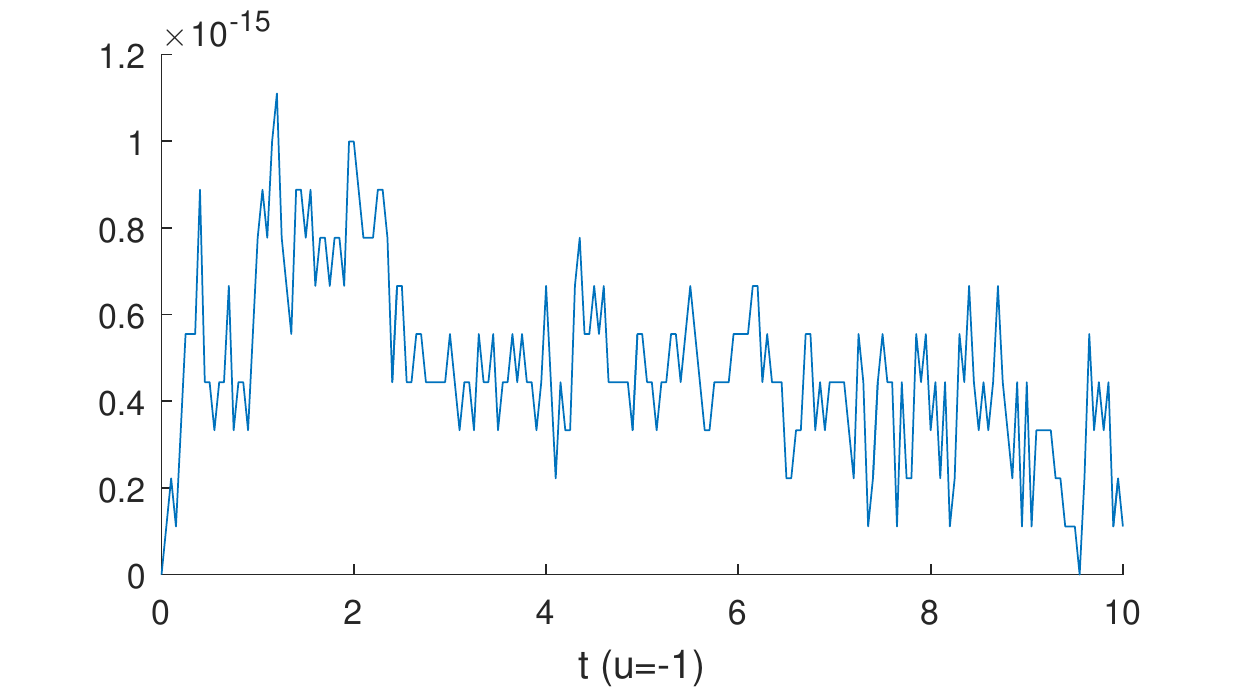}
    \caption{Numerical errors of the approximate solutions}
    \label{ric:error}
\end{figure}
Consider \eqref{ric} with $d=1$ and the admissible parameters $\alpha=\gamma=1, \beta=-1$. The ODE for $\psi$ is
\begin{eqnarray*}
\partial_t\psi(t,u)&=&\psi(t,u)^2-\psi(t,u)-1,\\
\psi(0,u)&=&u.
\end{eqnarray*}
The unique closed form solution to this equation is given by, see \cite[Eq.(10.47)]{Filipovic},
\begin{eqnarray*}
\psi(t,u)=\frac{2(e^{\sqrt{5}t}-1)-((\sqrt{5}-1)e^{\sqrt{5}t}+\sqrt{5}+1)u}{(\sqrt{5}+1)e^{\sqrt{5}t}+\sqrt{5}-1-2(e^{\sqrt{5}t}-1)u}.
\end{eqnarray*}
For the numerical approximation, we choose the Taylor expansion order $N=10$ and the tolerance of the error $\epsilon=10^{-16}$.
The plots in Figure~\ref{ric:error} show the numerical errors for different $t$ and $u$.
\end{example}

\bibliographystyle{plain}

\end{document}